
\documentclass[12pt]{amsart}

\addtolength{\voffset}{-10mm}
\addtolength{\textheight}{20mm}
\addtolength{\hoffset}{-15mm}
\addtolength{\textwidth}{30mm}

\usepackage{amssymb,amsmath,amsfonts,amstext,latexsym}
\usepackage[all]{xy}
\usepackage{color}
\usepackage{pdfsync}
\usepackage{amssymb}
\usepackage{paralist} 
\usepackage{stmaryrd}
 \usepackage{graphicx}
 \usepackage{tikz, xcolor}
\usetikzlibrary {shapes}
 \usepackage{array}
 \usepackage{fge}
 \usepackage{multicol}
 \usepackage{caption,pgf} 
 \usepackage{rotating}
 \usepackage{adjustbox}
\usepackage{blindtext}

\usepackage[super]{cite} 

\usepackage{setspace}
\onehalfspacing

\newtheorem{thm}{Theorem}
\newtheorem{lemma}[thm]{Lemma}

\newtheorem{theorem}[thm]{Theorem}
\newtheorem{corollary}[thm]{Corollary}

\newtheorem{definition}[thm]{Definition}




\def\X{{\mathcal X}}


\def\phiha{\hat{\phi}}



\def\om{\omega}


\def\bfx{\boldsymbol{x}}
\def\bfmu{\boldsymbol{\mu}}
\def\bfy{\boldsymbol{y}}
\def\bfvartheta{\boldsymbol{\vartheta}}

\def\bfM{\boldsymbol{M}}

\def\RR{{\mathbb R}}

\def\NN{{\mathbb N}}
\def\ZZ{{\mathbb Z}}

\def\implies{\Rightarrow}
\def\Mut{\mathtt{Mut}}
\def\Sel{\mathtt{Sel}}
\def\Proje{\mathtt{Proj}}

\def\Truncation{\texttt{Trunc}}
\def\kati{\tilde{\kappa}}
\def\Vati{\tilde{V}}
\def\Wha{\hat{W}}
\def\Co{{\textrm{Co}}}

\def\bfphi{{\boldsymbol{\phi}}}
\def\bfpsi{{\boldsymbol{\psi}}}
\def\Mut{\mathtt{Mut}}
\def\Sel{\mathtt{Sel}}
\def\Rec{\mathtt{Rec}}
\def\MF{M}

\def\Var{\mathop{\mathrm{Var}}}

\def\sextt{{\texttt{sex}}}
\def\asextt{{\texttt{asex}}}
\def\GM{\mathop{\mathrm{GM}}}
\def\GMsup{\mathop{\overline{\GM}}}
\def\GMinf{\mathop{\underline{\GM}}}

\def\and{\mbox{and}}


\bibliographystyle{nature}

\title[The evolutionary benefits of sexual reproduction]{A mathematical analysis of the evolutionary benefits of sexual reproduction}




 \author{Andrew E.M. Lewis-Pye$^{1}$ \& Antonio Montalb\'{a}n$^2$}
 
 \thanks{This paper has no principal author. The ordering is alphabetical.  Both authors contributed equally to the construction of proofs and simulations.}

\begin{document}

\maketitle



\begin{abstract}
The question as to why most higher organisms reproduce sexually has remained open despite extensive research, and has been called ``the queen of problems in evolutionary biology''.   Theories dating back to Weismann have suggested that the key must lie in the creation of increased variability in offspring, causing enhanced response to selection. Rigorously quantifying the effects of assorted mechanisms which might lead to such increased variability, and establishing that these beneficial effects outweigh the immediate costs of sexual reproduction has, however, proved problematic. Here we introduce an approach which does not focus on particular mechanisms influencing factors such as the fixation of beneficial mutants or  the ability of populations to deal with deleterious mutations, but rather tracks the entire distribution of a population of genotypes as it moves across vast fitness landscapes. In this setting simulations now show sex robustly outperforming asex across a broad spectrum of finite or infinite population models. Concentrating on the additive infinite populations model, we are able to give a rigorous mathematical proof establishing that sexual reproduction acts as a more efficient optimiser of mean fitness, thereby solving the problem for this model. Some of the  key features of this analysis carry through to the finite populations case.      
\end{abstract}

Sexual propagation must certainly confer immense benefits on those populations undergoing it, given that sex involves substantial costs such as the breaking down of favourable gene combinations established by past selection. Hypotheses as to the form  these advantages take fall naturally into two groups \cite{Kond,May1,Fel}. On the one hand a function of sexual reproduction and meiotic recombination may be in providing immediate and physiological benefits, such as allowing repair of double strand DNA damage \cite{Bern,Mich}.  Such mechanisms alone, however, are unlikely to account for the continued prevalence of sexual reproduction \cite{BC,Kond,May2}, and so, on the other hand, decades of research have seen evolutionary biologists looking to develop explicit theoretical models which  explain the advantages of sex in terms of the interaction between variation and selection. Many of these models \cite{HR,OB,BO} focus on ideas originally due to Morgan \cite{Morg}, Fisher \cite{Fish} and Muller \cite{Mull}, which stress the ability of recombination to place beneficial mutations together on the same chromosome. In a similar vein one may also consider the build up of deleterious mutations \cite{Mull2,Fel}. Sex has been shown to be  favoured in certain models which allow for fluctuating environments \cite{SKB,PL,GO} (e.g.\ cycling between negative and positive epistasis), so long as fluctuations are sufficiently rapid.   When the fitness of genotypes depends upon their geographical location, sex may also evolve under the right circumstances to break down locally detrimental genetic associations created by migration \cite{PZF,LO}. In general though, while great strides have been made in our understanding of some of the principal population genetic mechanisms yielding an advantage to sex\cite{Otten}, the present analysis tends to require rather specific conditions for these mechanisms to be of relevance, and does not provide a setting which allows one to establish that the benefits conferred outweigh the acute costs.   

Our aim here is to establish  a setting in which sex is seen to robustly outperform asex, even in the absence of epistasis and across a broad spectrum of models. We consider a setting in which a population of genotypes evolves over time in a context where there are no apriori limits on the number of alleles or their fitnesses. Figure 1 shows a small cross-section of the results of simulations for models with finite or infinite haploid populations and where fitness contributions from individual genes may be combined additively or multiplicatively (further examples are given in Figures 6-10 \S 4 Extended Data). In all cases the sexual population is seen to quickly achieve and maintain higher mean fitness, indicating that selection will favour genes coding for sexual rather than asexual reproduction -- the one exception being the boundary case of the multiplicative infinite populations model, in which the sexual and asexual processes remain identical in the absence of any initial linkage disequilibrium. 

We then concentrate our mathematical analysis on the infinite populations additive model, since dealing with this case allows us to avoid some of the complexities inherent in the finite population models while illustrating basic principles which carry through to the finite population additive model. We are able to give a mathematical proof that, during the process of asexual propagation, a negative linkage disequilibrium will be created and maintained, meaning that an occurrence of recombination at any stage of the process will cause an immediate increase in fitness variance and a corresponding increase in the rate of growth in mean fitness. In contexts where there is a large but finite bound on allele fitnesses, we prove that the sexual population will always be that which survives when sexual and asexual populations compete for resources.

\begin{figure}[!ht]
  \begin{adjustbox}{addcode={\begin{minipage}{\width}}
  {\caption{The dominance of sexual populations. Each plot corresponds to one simulation and shows the evolution of mean fitness (top) and fitness variance (bottom) over time for sexual and asexual populations beginning in identical states at linkage equilibrium. Finite models are described in (SI). Each simulation is specified by a tuple $(P,\ell,p,q,I)$:  $P\in \{ \{\infty \} \cup \mathbb{N} \}$ is population size;  $\ell$ is number of loci; $p$ is the probability of mutation; $q$ is the probability a given mutation is beneficial (=0.1 in all simulations displayed here); $I$ is initial fitness of alleles. For the infinite multiplicative model (e) the sexual and asexual populations stay identical. For the finite multiplicative model (f) the fitness values have been divided by $10^\ell$.
 }\end{minipage}},rotate=90,center}
  \includegraphics[scale=.72]{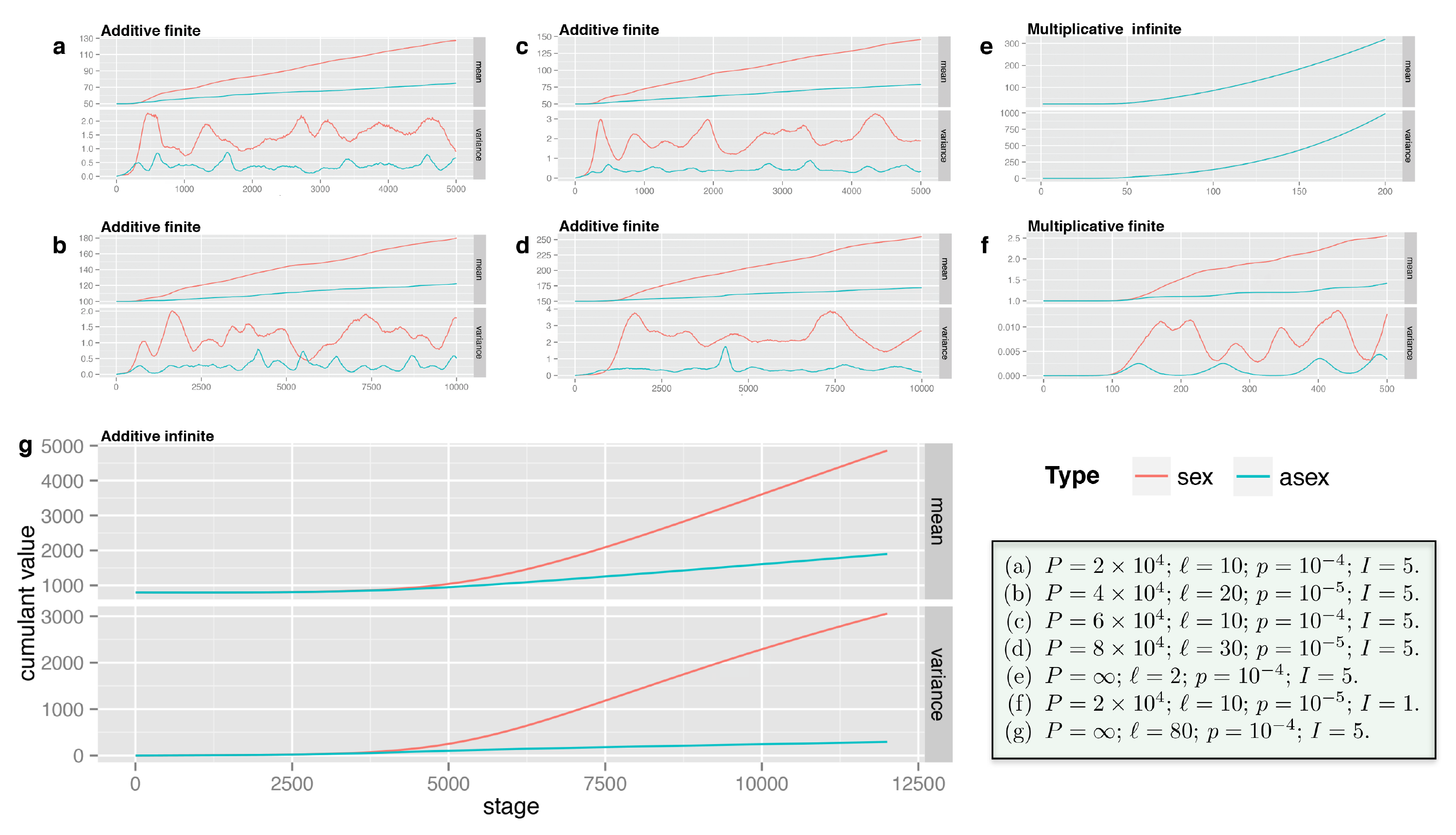}
  \end{adjustbox}
\end{figure}


\section{The model} 
We consider haploid populations with non-overlapping generations. In the absence of epistasis between alleles at a single locus, this analysis could easily be extended to consider diploid populations. We describe here the additive infinite population variants of the model (other variants are described in (SI) \S\ref{variants}). A slightly unusual feature of the model is that we do not assume  alleles come from a pre-existent pool, but consider a (form of random walk mutation) model in which alleles are created by mutation as time passes, possibly without any bound on attainable fitness. 
We shall assume that genes fitnesses take integer values, but one could also consider real valued fitnesses without substantial changes in behaviour.  Most other features of the model, which we now describe in more detail, are essentially standard in the literature.  

\

Each instance of the model is determined by three principal parameters: $\ell$, $D$ and $\mu$. 
{First},  $\ell\in \NN\  (>1)$ specifies the number of loci. 
With each individual specified by $\ell$ genes, in the absence of epistasis we need only be concerned with the fitness values corresponding to those genes, and so each individual can be identified with a tuple $\bfx=(x_1,...,x_\ell)\in \ZZ^\ell$. 
The {\em fitness} of $\bfx$ is $F(\bfx)=\sum_{i=1}^\ell x_i$. (For the multiplicative model, one would define $F(\bfx)=\prod_{i=1}^\ell x_i$ instead.) 
{Second}, the {\em domain} $D\subset \ZZ^\ell$ determines which individuals are allowed to exist.
We will use three types of domains in this paper: The {\em $\NN$-model} uses as domain $D=\NN^\ell$, where $\NN=\{1,2,3,....\}$; the {\em $\ZZ$-model} uses $D=\{\bfx\in \ZZ^\ell: F(\bfx)>0\}$; and the {\em bounded-model}  uses $D=\{1,...,N\}^\ell$ for some upper bound $N\in\NN$ on gene fitnesses.
In practice there is almost no difference between the $\NN$- and $\ZZ$-models, but there are situations when it is simpler to consider one or the other. 
{Third}, $\mu\colon\ZZ\to \RR^{\geq 0}$, the {\em mutation probability function}, determines how mutation affects the fitness of genes: $\mu(k)$ is the probability that the fitness of a gene will increase by $k$. 
For the sake of simplicity we assume this distribution to be identical for all loci. 
While there is no clear canonical choice for $\mu$,  the behaviour of the model is robust to changes in this parameter so long as negative mutations are more likely than positive ones,  both being possible.   
This is because any such choice of $\mu$ will approximate a Gaussian distribution over multiple generations.
The simplest mutation distributions one may consider  are those taking non-zero values only on $\{-1,0,1\}$. 
Unless stated otherwise, it should be assumed that from now on mutations are of this form and that $\mu(0)>\mu(-1)>\mu(1)$ (giving a form of stepwise-mutation model \cite{OK}).

By a {\em population} we mean  a probability distribution $\bfphi\colon \ZZ^\ell\to \RR^{\geq 0}$, where $\bfphi(\bfx)$ is the proportion of individuals that have  `genotype'  $\bfx\in \ZZ^\ell$. For a population $\bfphi$, we shall also use $X=(X_1,...,X_{\ell})$, where the $X_i$'s take values in $\ZZ$, to denote a random variable that picks an individual with gene fitness values $X_1$,...,$X_\ell$ according to the distribution given by $\bfphi$. We let $M(\bfphi)$ denote the {\em mean fitness} of the population $\bfphi$, namely $E(F(X))$. It should be assumed throughout that all populations considered have finite means, variances, and that all cumulants are finite (as is the case, for example, for distributions $\bfphi$ with finite support, i.e. with finitely many $\bfx\in \ZZ^\ell$ such that $\bfphi(\bfx)\neq 0$).

For a sexual population, the next generation is obtained by application of three operations:  selection, mutation and recombination. 
We refer to the consecutive application of these operations over multiple generations as the {\em sex process}.
For the {\em asex process}, the operations applied are selection and mutation, and the recombination phase is omitted.
With a much less significant effect, at the end of each generation we will also apply a truncation operation that erases  individuals  falling outside the domain.

\noindent \textbf{Selection}. 
The probability of survival for an individual is proportional to its fitness value.
If $\bfphi$ is the population prior to selection then the resulting population, $\Sel(\bfphi)$, is given by:
\[
\Sel(\phi)(\bfx)=\frac{F(\bfx)}{M(\bfphi)}\bfphi(\bfx), 
\quad \mbox{ for } \bfx\in \mathbb{\ZZ}^\ell.
\]  
The factor $1/M(\bfphi)$  normalises the probability distribution.

\noindent \textbf{Mutation}. 
Let $C_i$ be i.i.d.\ random variables taking values in $\ZZ$ with distribution $\mu$.
If we apply mutation to a random variable $X=(X_1,...,X_\ell)$ we get $(X_1+C_1,...,X_\ell+C_\ell)$.
Equivalently, if $\bfphi$ is the population prior to mutation then, for $ \bfx\in \mathbb{\ZZ}^\ell$: 
\[ 
\Mut(\bfphi)(\bfx)=\sum_{\boldsymbol{y}\in D} \bfphi(\boldsymbol{y})\cdot \bfmu(\boldsymbol{y}-\bfx), 
\] 
where $\bfmu$ is the extension of $\mu$ to a function on $\mathbb{Z}^\ell$ according to the assumption that mutations act independently on distinct loci (i.e., $\bfmu(a_1,...,a_\ell)=\prod_{i=1}^\ell\mu(a_i)$).

\noindent \textbf{Recombination}. For the sake of simplicity we assume that the $\ell$ loci are unlinked, so that they either correspond to loci on distinct chromosomes (one may consider that we are choosing a `representative' from each chromosome), or else lie at sufficient distances when they share a chromosome. In general the effect of recombination is to leave the distributions at individual loci unchanged, while bringing the population towards linkage equilibrium. We make the simplifying assumption (for the infinite models) that the effect of a single application of recombination is to bring the population immediately to linkage equilibrium. (A population is at {\em  linkage equilibrium} if the random variables $X_i$ are independent.)
If $\phi_i(x)\colon \ZZ\to \RR^{\geq 0}$ is the distribution at locus $i$, (i.e.\ $\phi_i(x)=\sum_{\bfy\in D, y_i=x}\bfphi(\bfy)$) then the resulting population is given by: 
\[
\Rec(\bfphi)(\bfx)=\prod_{i=1}^\ell \phi_i(x_i),
\quad \mbox{ for }\bfx=(x_1,...,x_\ell)\in\ZZ^\ell. 
\]
Recombination as we consider it here is thus equivalent to multiple applications of recombination in its standard form. 

\def\Tru{\mathtt{Tru}}
   Mutation and recombination may create individuals that fall outside the domain $D$. At the end of each generation, we therefore perform {\em truncation}  to remove those outlying individuals. $\Tru(\bfphi)(\bfx)$ is defined to be $\bfphi(\bfx)/s$ if $\bfx\in D$, and $0$ otherwise, where $s$ is the normalising factor $s=\sum_{\bfx\in D}\bfphi(\bfx)$.
We will see (Tables 1-3, \S 4 Extended Data) that the proportion of the population moving outside the bounds of $D$ in each generation is negligible, and that truncation along the lower bounds will have an insignificant effect on the whole process.


\section{Analysing the model}\label{se: Basic Analysis}

The objective now is to show that mean fitness increases more rapidly for sexual populations (reduction to selection at the gene level can then be achieved in a standard fashion, by consideration of the effect of selection on genes which code for sexual rather than asexual reproduction). Proofs of all claims in this section appear in (SI).

Each generation sees two forces acting on the mean fitness $M=M(\bfphi)$.
On the one hand, mutation causes a fixed decrease in $M$ by an amount that depends only on $\mu$. (Recall that deleterious mutations are more likely than beneficial ones.) Selection, on the other hand, can be shown to increase mean fitness by $V/M$ (a form of Fisher's `fundamental theorem' \cite{Fish}), where $V=V(\bfphi)=\Var(F(X))$ is the {\em variance of the fitness} of $\bfphi$. Recombination does not affect $M$ directly. Thus, for fixed $\mu$, the increase in mean fitness at each generation is determined by the variance.  
The difference between the sex and asex processes will be seen to stem from the effect of recombination on variance, which then results in an increase to the change in mean fitness for the sex process during the selection phase.

The effect of mutation on the variance is a fixed increase at each generation (again entirely determined by $\mu$). 
The effect of selection on variance is given by:
\[
V(\Sel(\bfphi))-V(\bfphi) = \frac{\kappa_3}{M}- \left(\frac{V}{M}\right)^2,
\]
where $\kappa_3$ is the third cumulant of $F(X)$.
Our first  theorem shows that for the sex process, the effect of recombination on variance is positive, giving an advantage of sex over asex.

\begin{theorem}\label{thm: effect rec}
If $\bfphi^{\ast}=\Sel(\bfphi)$ was obtained by an application of selection to a population $\bfphi$ at linkage equilibrium, then 
the effect of recombination on fitness variance is given by:
\[
V(\Rec(\bfphi^{\ast}))-V(\bfphi^{\ast}) = \frac{\sum_{i\neq j} V_iV_j}{M^2},
\]
where $V_i=\Var(\phi_i)$ and $M=M(\bfphi)$. This effect is therefore non-negative. 
\end{theorem}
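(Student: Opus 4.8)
The plan is to reduce the whole statement to a single covariance computation under the selected distribution. Write $m_i=E(X_i)$ for the per‑locus means of the pre‑selection population $\bfphi$, so that $M=\sum_i m_i$ and $V_i=\Var(X_i)$. The first step is purely structural: recombination leaves each marginal distribution unchanged while forcing the loci to be independent, so the fitness of $\Rec(\bfphi^{\ast})$ is a sum of independent terms and
\[
V(\Rec(\bfphi^{\ast})) = \sum_{i=1}^{\ell}\Var_{\bfphi^{\ast}}(X_i).
\]
On the other hand, expanding the variance of a sum gives $V(\bfphi^{\ast}) = \sum_i \Var_{\bfphi^{\ast}}(X_i) + \sum_{i\neq j}\mathrm{Cov}_{\bfphi^{\ast}}(X_i,X_j)$, and subtracting the two displays yields
\[
V(\Rec(\bfphi^{\ast})) - V(\bfphi^{\ast}) = -\sum_{i\neq j}\mathrm{Cov}_{\bfphi^{\ast}}(X_i,X_j).
\]
Thus it suffices to show that each selected covariance equals $-V_iV_j/M^2$.

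The engine of the computation is that selection tilts $\bfphi$ by its fitness: for any function $g$ one has $E_{\bfphi^{\ast}}(g)=E(F(X)\,g(X))/M$, where $F=\sum_k X_k$. Because the $X_k$ are independent under $\bfphi$ (this is exactly the linkage‑equilibrium hypothesis), I would expand $E(F(X)X_i)$ and $E(F(X)X_iX_j)$ term by term over $k$: in each summand the factors at loci other than $i,j$ factor out as means, and only the terms $k=i$ or $k=j$ produce a genuine second moment $E(X_i^2)=V_i+m_i^2$. A short calculation then gives the two moment identities
\[
E_{\bfphi^{\ast}}(X_i) = m_i + \frac{V_i}{M},
\qquad
E_{\bfphi^{\ast}}(X_iX_j) = m_im_j + \frac{V_im_j + m_iV_j}{M}\quad (i\neq j).
\]

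Forming $\mathrm{Cov}_{\bfphi^{\ast}}(X_i,X_j)=E_{\bfphi^{\ast}}(X_iX_j)-E_{\bfphi^{\ast}}(X_i)E_{\bfphi^{\ast}}(X_j)$ and multiplying out the product, the $m_im_j$ term and the two cross terms $m_iV_j/M$ and $m_jV_i/M$ all cancel, leaving exactly $-V_iV_j/M^2$. Substituting into the reduced difference gives $\sum_{i\neq j}V_iV_j/M^2$, and non‑negativity is immediate since every $V_i\geq0$. The step I expect to require the most care is the tilted‑moment expansion: one must track which terms in $\sum_k$ contribute a second moment rather than a product of means, and keep clear throughout that the $V_i$ and $M$ appearing in the final formula are the \emph{pre}‑selection quantities (the post‑selection marginal variances $\Var_{\bfphi^{\ast}}(X_i)$ are different and cancel in the structural step), since it is precisely this that makes the final cancellation collapse so cleanly.
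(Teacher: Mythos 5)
Your proposal is correct, and it takes a genuinely different route from the paper. The paper proves the identity by bookkeeping on $LD_2(\bfphi)=V(\bfphi)-\sum_i V_i$: it first derives the effect of selection on the \emph{total} fitness variance, $V^{\ast}-V=\kappa_3/M-(V/M)^2$, then proves a per-locus selection lemma ($\phi_i^{\ast}(x)=\frac{1}{M}(x+\Wha_i)\phi_i(x)$, where $\Wha_i=M-E(X_i)$) from which it gets the per-locus analogue $V_i^{\ast}-V_i=\kappa_{3,i}/M-(V_i/M)^2$, and finally subtracts, using that at linkage equilibrium both $LD_2=0$ and $LD_3=\kappa_3-\sum_i\kappa_{3,i}=0$, so the third-order terms cancel and $LD_2(\bfphi^{\ast})=-\sum_{i\neq j}V_iV_j/M^2$. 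You instead write the recombination effect as $-\sum_{i\neq j}\mathrm{Cov}_{\bfphi^{\ast}}(X_i,X_j)$ and compute each pairwise covariance directly from the fitness-tilted expectation $E_{\bfphi^{\ast}}(g)=E(F(X)g(X))/M$ together with independence under $\bfphi$; your moment identities and the final cancellation check out. What your route buys is economy: no third moments or cumulants ever appear (the cancellation that the paper achieves via $LD_3=0$ happens automatically because distinct-locus covariances only involve cross moments), and the never-computed post-selection marginal variances drop out in your structural step. What the paper's route buys is reusability: the per-locus selection lemma and the selection formulas for $V$ and $\kappa_3$ are not throwaway tools but are used again later (the per-locus lemma underlies the $\Sel_W$ operator in the proof of Theorem 2 and the analysis of the bounded model, and the cumulant formulas populate the paper's table of operation effects), so the paper's longer detour amortises across the whole analysis.
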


This theorem applies to the sex process because a previous application of recombination would bring the population $\bfphi$ to linkage equilibrium.
Linkage equilibrium is then preserved by mutation.

Our second theorem shows that recombination has a positive effect on variance in a much more general situation, as for instance, during an asex process where we suddenly apply recombination.
It establishes that for a population initially at linkage equilibrium, \emph{any} subsequent applications of  recombination during later generations always give an increase in variance and so a corresponding increase in the rate of change of mean fitness.

\begin{theorem}\label{thm: rec positive}
For the $\ZZ$-model, starting with a population at linkage equilibrium, suppose we iterate the operations of mutation, selection and recombination in any order (possibly applying only mutation and selection over multiple generations, and of course applying truncations when relevant). 
Then any non-trivial application of recombination has a positive effect on variance. 
\end{theorem}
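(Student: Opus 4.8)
The plan is to reduce the statement to a single inequality about linkage disequilibrium and then to track how each operation moves that quantity. By the computation underlying Theorem~\ref{thm: effect rec}, recombination leaves every marginal $\phi_i$ (hence every $V_i$) unchanged while zeroing all cross‑covariances, so for an \emph{arbitrary} population $\bfphi$ the effect of recombination on variance is
\[
V(\Rec(\bfphi))-V(\bfphi)=\sum_i V_i-V=-L,\qquad L:=\sum_{i\neq j}\mathrm{Cov}(X_i,X_j)=V-\sum_i V_i .
\]
Thus ``recombination has a positive effect on variance'' is exactly ``$L<0$'', and a non‑trivial application is one made at a population not at linkage equilibrium, i.e.\ with some $\mathrm{Cov}(X_i,X_j)\neq 0$. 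Since a single recombination resets us to linkage equilibrium but the asex part of the process may run for many generations before the next recombination, one cannot simply quote Theorem~\ref{thm: effect rec}; instead the whole theorem reduces to the claim that every population reachable from linkage equilibrium satisfies $L\le 0$, with $L<0$ off equilibrium.

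I would first dispatch the two easy operations and set up an induction on the number of steps. Mutation replaces $X_i$ by $X_i+C_i$ with the $C_i$ independent of one another and of $X$, so every cross‑covariance, and hence $L$, is unchanged; recombination resets $L$ to $0$. The entire difficulty is therefore concentrated in a \emph{selection lemma}: if the current population has $L\le 0$, then so does $\Sel(\bfphi)$.

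For the selection step I would exploit the key structural fact that fitness‑proportional selection multiplies the law by the function $F/M$ of the single statistic $F=\sum_i X_i$; consequently it \emph{preserves the conditional law of $(X_1,\dots,X_\ell)$ given $F$} and merely replaces $\mathrm{Law}(F)$ by its size‑biased version. Writing $\tilde F=F-M$, $\tilde X_i=X_i-E(X_i)$ and $S_i=\mathrm{Cov}(F,X_i)$, a short centred computation of the size‑biased moments yields the exact update
\[
\mathrm{Cov}(X_i^{\ast},X_j^{\ast})=\mathrm{Cov}(X_i,X_j)+\frac{E(\tilde F\,\tilde X_i\,\tilde X_j)}{M}-\frac{S_iS_j}{M^2},
\]
which at linkage equilibrium collapses to $-V_iV_j/M^2$ and so recovers Theorem~\ref{thm: effect rec}. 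Summing over $i\neq j$ and using $\sum_i S_i=V$ gives
\[
L^{\ast}=L+\frac{\kappa_3-\sum_i E(\tilde F\,\tilde X_i^{\,2})}{M}-\frac{V^2-\sum_i S_i^{\,2}}{M^2}.
\]

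The main obstacle is already visible here: the second‑order quantity $L$ is driven under selection by the third‑order joint moments $E(\tilde F\,\tilde X_i\,\tilde X_j)$, so ``$L\le 0$'' is not on its own an inductive invariant. I therefore expect the substance of the proof to lie in maintaining an entire sign pattern on the joint cumulants — showing that each cross‑covariance stays $\le 0$ and that the mixed third (and, through them, higher‑order) cumulants produced out of linkage equilibrium by selection, then carried along unchanged by mutation, keep exactly the sign that makes the correction terms above push $L$ downward. A cleaner route, if it can be pushed through, is to prove that the three operations preserve negative association of $X_1,\dots,X_\ell$, which would give all cross‑covariances $\le 0$ simultaneously: recombination produces an independent family and mutation adds independent increments, both of which preserve negative association, so once again everything reduces to selection. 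The sticking point is that size‑biasing by the aggregate $F$ is not among the standard negative‑association‑preserving operations, and controlling it appears to require the positivity built into the $\ZZ$‑model domain $\{F>0\}$ — the same positivity that keeps selection well defined and makes the truncation step harmless. Finally, strictness (``positive'' rather than merely ``non‑negative'') is automatic once the invariant is in force, since off linkage equilibrium at least one cross‑covariance is then strictly negative, whence $L<0$.
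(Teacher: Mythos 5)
Your reduction is sound and matches the paper's framing: mutation leaves all cross-covariances (hence $LD_2$) unchanged, recombination resets them to zero, and the whole theorem collapses to showing that selection keeps $LD_2$ non-positive (strictly negative off equilibrium) along the entire process. Your diagnosis that ``$LD_2\le 0$'' is not by itself an inductive invariant -- because selection drives second-order quantities with third-order joint moments -- is also exactly right. But the proposal stops precisely where the proof has to begin: you never exhibit an invariant that selection provably preserves. Both of your candidate routes are left open by your own account: the ``sign pattern on joint cumulants'' is never formulated as a precise closed condition, and for the negative-association route you concede that size-biasing by the aggregate $F$ is not among the standard negative-association-preserving operations and that you do not see how to control it. Since all of the difficulty of the theorem is concentrated in that one step, what you have is a correct problem analysis rather than a proof. (Two smaller points: your observation that selection preserves the conditional law of $X$ given $F$ conditions on the wrong variable and does not feed into any of the covariance terms you need; and truncation is not automatically ``harmless'' -- it needs its own preservation argument, which is exactly where the restriction to the $\ZZ$-model enters.)

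The paper's resolution is an invariant that is pairwise rather than global, and of likelihood-ratio type rather than covariance type. For each locus $i$ let $\phiha_i$ be the joint law of $(X_i,F_i(X))$ with $F_i(X)=F(X)-X_i$, and let $(\phiha_i)^a$ be the conditional law of $F_i(X)$ given $X_i=a$. Property $(\dagger)$ demands that $a_1<a_2$ implies $(\phiha_i)^{a_2}\preceq(\phiha_i)^{a_1}$, where $\psi_2\preceq\psi_1$ means $\psi_1(b_1)\psi_2(b_2)\le\psi_1(b_2)\psi_2(b_1)$ for all $b_1<b_2$. The reason this survives selection -- the step you could not close -- is structural: conditionally on $X_i=a$, selection multiplies the law of $F_i$ by the affine factor $(a+b)/M$, i.e.\ it acts as the tilt $\Sel_a$, and these tilts are jointly monotone in the sense that $W_1\le W_2$ together with $\psi_2\preceq\psi_1$ gives $\Sel_{W_2}(\psi_2)\preceq\Sel_{W_1}(\psi_1)$. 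Separate lemmas show $\preceq$ is preserved by mutation (this is where the hypothesis $\mu(0)>\mu(-1)>\mu(1)$ is used) and that $(\dagger)$ is preserved by truncation in the $\ZZ$-model. Finally, since $\preceq$ implies ordering of means, $(\dagger)$ forces $\Co(X_i,F_i(X))\le 0$ for every $i$, and $LD_2=\sum_{i}\Co(X_i,F_i(X))\le 0$; a strict variant $(\dagger\dagger)$, created as soon as selection acts on a non-trivial population and preserved until the next recombination, yields the strict negativity and hence the strictly positive effect of any non-trivial recombination. In short, the missing idea is to replace negative association of all $\ell$ coordinates by monotone likelihood-ratio negative dependence of each $X_i$ against its complementary sum, which is exactly the form of negative dependence that fitness-proportional selection respects.
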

  
\noindent By a {\em trivial} application of recombination we mean one acting on a population which is already at linkage equilibrium, and so which has no effect at all. This is the case, for instance, if one applies recombination twice in a row: the second application is trivial.
The theorem is stated only for the $\ZZ$-model because truncation creates technical difficulties when producing a proof for the other models. With the effect of truncation being so small, however, the claim of the theorem is, in fact, verified in all simulations we have run for any of the models.

To explain what is behind Theorem \ref{thm: rec positive}, we need to introduce two new key terms: the linkage disequilibrium term $LD_2$ and the {\em flat variance}.
We define  $LD_2(\bfphi)$  to be the decrease in variance produced by recombination:
\[
LD_2(\bfphi) = V(\bfphi) - V(\Rec(\bfphi)).
\]
$LD_2$ can be shown to be equal to the covariance term $\sum_{i\neq j} E(X_i X_j)-E(X_i)E(X_j)$.
Theorem \ref{thm: rec positive} states that $LD_2(\bfphi)$ is negative at all stages of the process, unless the population is at linkage equilibrium,  in which case $LD_2(\bfphi)=0$.


A more geometric way of understanding $LD_2$ is through the notion of flat variance.
Let $\bfM=(E(X_1),E(X_2),...,E(X_\ell))\in \RR^\ell$; this vector represents the average individual in the population. 
The {\em global variance} of a population is defined as $GV(\bfphi)=E(\|X-\bfM\|^2)$.
Recombination does not affect the global variance, $GV(\bfphi)$, at all.
However, it changes the shape of the population by increasing the variance in the direction that is useful for selection, namely the fitness variance.
Consider the diagonal line $d=\{(x_1,...,x_\ell)\in\RR^\ell:x_1=x_2=\cdots=x_\ell\}$ and its $(\ell-1)$-dimensional orthogonal complement $P=\{(x_1,...,x_\ell)\in\RR^\ell:x_1+x_2+\cdots+x_\ell=0\}$, and let  $\pi_d$ and $\pi_P$ be the projection functions onto $d$ and $P$ respectively.
Using that $F(X)$ is the inner product of $X$ and  $(1,1,...,1)$, one can show that: 
\[
V(\bfphi)=\ell\cdot \Var(\|\pi_d(X)\|).
\]
We define the {\em flat variance} of a population to be the variance of its projection onto $P$ multiplied by a correcting factor:
\[
FV(\bfphi)=\frac{\ell}{\ell-1}\cdot  E(\|\pi_P(X-\bfM)\|^2).
\]
Informally, $V(\bfphi)$ measures how \emph{tall} a population is along the vector $(1,1,...1)$, while $FV(\bfphi)$ measures how \emph{fat} it is.
The effect of recombination on variance and flat variance satisfies a simple formula:
\[
V(\Rec(\bfphi)) = FV(\Rec(\bfphi)) = \frac{V(\bfphi)+(\ell-1)FV(\bfphi)}{\ell},
\]
and hence  
\[
 LD_2 =  \frac{\ell-1}{\ell}\Big(V(\bfphi)-FV(\bfphi) \Big).
 \]
Thus, $LD_2$ being negative is equivalent to $FV$ being greater than $V$, or, more informally, the population being fatter than it is tall along $d$.
The dynamics of this interaction are explained in Figure 2, and the effects for unbounded and bounded domains are illustrated in Figures 3 and 4 respectively.

\begin{figure}[!ht] \label{process}
  \begin{adjustbox}{addcode={\begin{minipage}{\width}}
{\caption{This illustration shows the level curves for 2-locus sex (red) and asex (blue) populations which begin with the same Gaussian type distribution (loci distributions are i.i.d.\ with $\kappa_3=0$). The $x$ and $y$ axes indicate fitnesses at the first and second loci respectively, fitness increasing along the up-right diagonal.
The reason the last application of selection gives a greater increase in mean fitness for sex is that, {\em as opposed to asex, the sex process capitalises on the increase in flat variance due to mutation}.
 The first phase is selection which decreases the fitness variance (given $\kappa_3=0$), and  does not interact with flat variance since fitness (measured along the $\diagup$-diagonal) is the only factor influencing the ability of an individual $\bfx$  to survive selection, while flat variance is measured along the planes where fitness is constant (the $\diagdown$-diagonal).
Selection thus causes the flat variance to be greater than the variance, or equivalently, causes negative $LD_2$, as we show in Theorem \ref{thm: rec positive}.
Mutation then increases both flat variance and variance by the same amount.
Recombination, only occurring in the sex process, averages out the fitness variance and flat variance, increasing the variance and decreasing the flat variance, as seen by the rounded form of the level curves at that phase in the figure. 
Notice that recombination does not increase global variance; it just transforms the flat variance, which is useless for selection, into fitness variance.
Finally, selection now has an increased effect on the mean fitness of the sexual population due to the larger fitness variance produced by recombination.
For the asex population, mutation will keep on increasing the flat variance, but, in the absence of recombination, selection will not  capitalise on this growth.}
 \end{minipage}},rotate=90,center} 
  \includegraphics[scale=.215]{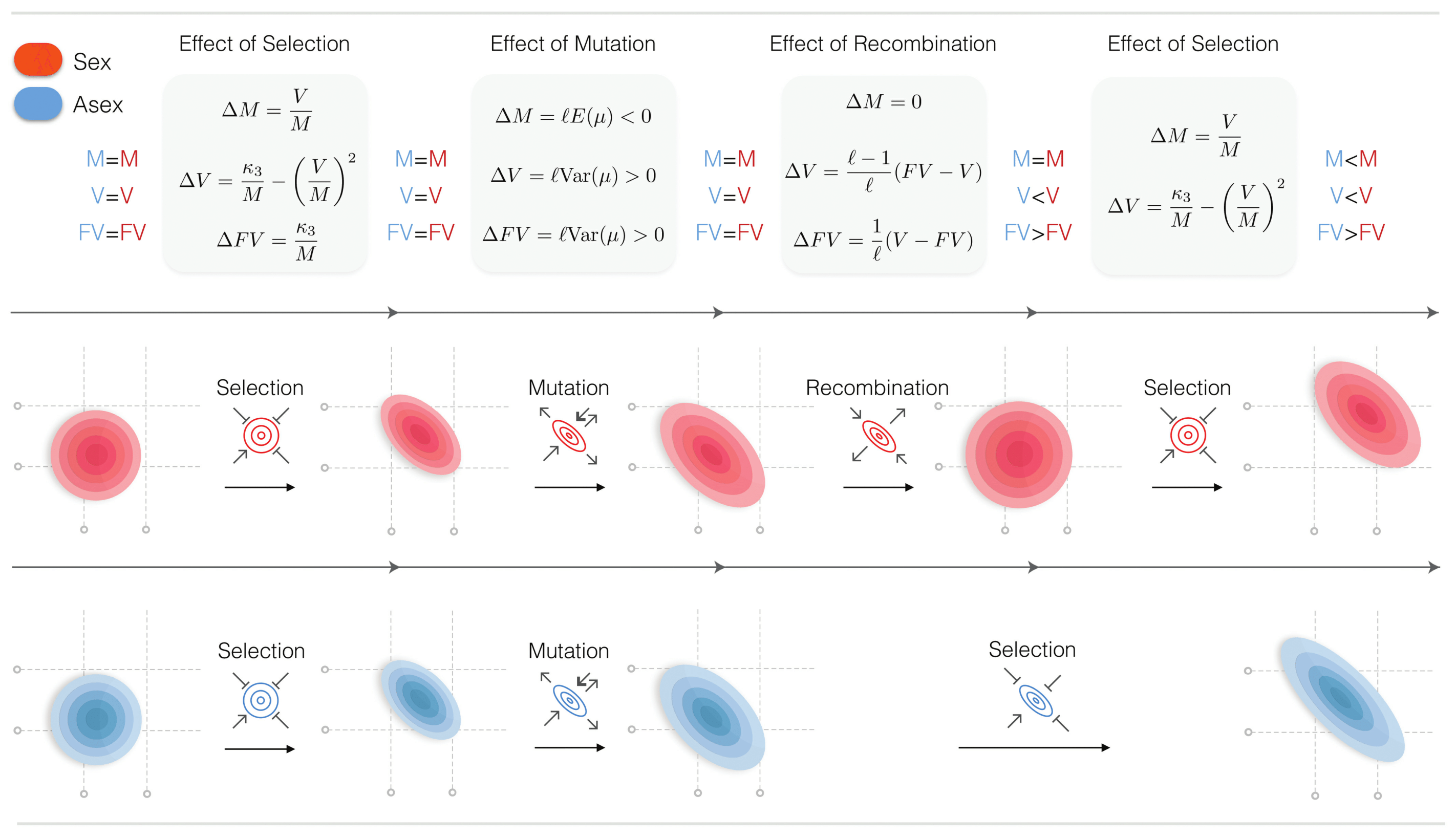}
  \end{adjustbox} 
\end{figure}

\begin{figure}[!ht]
  \begin{adjustbox}{addcode={\begin{minipage}{\width}}
  {\caption{The benefit of recombination. This figure shows the progress of two 2-locus populations (for the $\NN$ model), one of which is sexual and the other of which is \emph{initially} asexual. The $x$ and $y$ axes correspond to fitnesses at the first and second loci respectively,  the $z$ axis corresponding to probability density. Both populations begin with initial allele fitnesses of 5, and progress with mutation rate $10^{-4}$, the probability any given mutation is beneficial being $0.1$. During the first 2000 generations the sexual population quickly achieves greater mean fitness, and one can clearly see the increased flat variance of the asexual population. At stage 2000, a single application of recombination is made to the previously asexual population, converting that flat variance into variance and greatly increasing the rate of increase in mean fitness. Without any subsequent applications of recombination, however, the asexual population will eventually have smaller variance than the sexual one and will once again fall behind in mean fitness.   
 }\end{minipage}},rotate=90,center}
  \includegraphics[scale=.215]{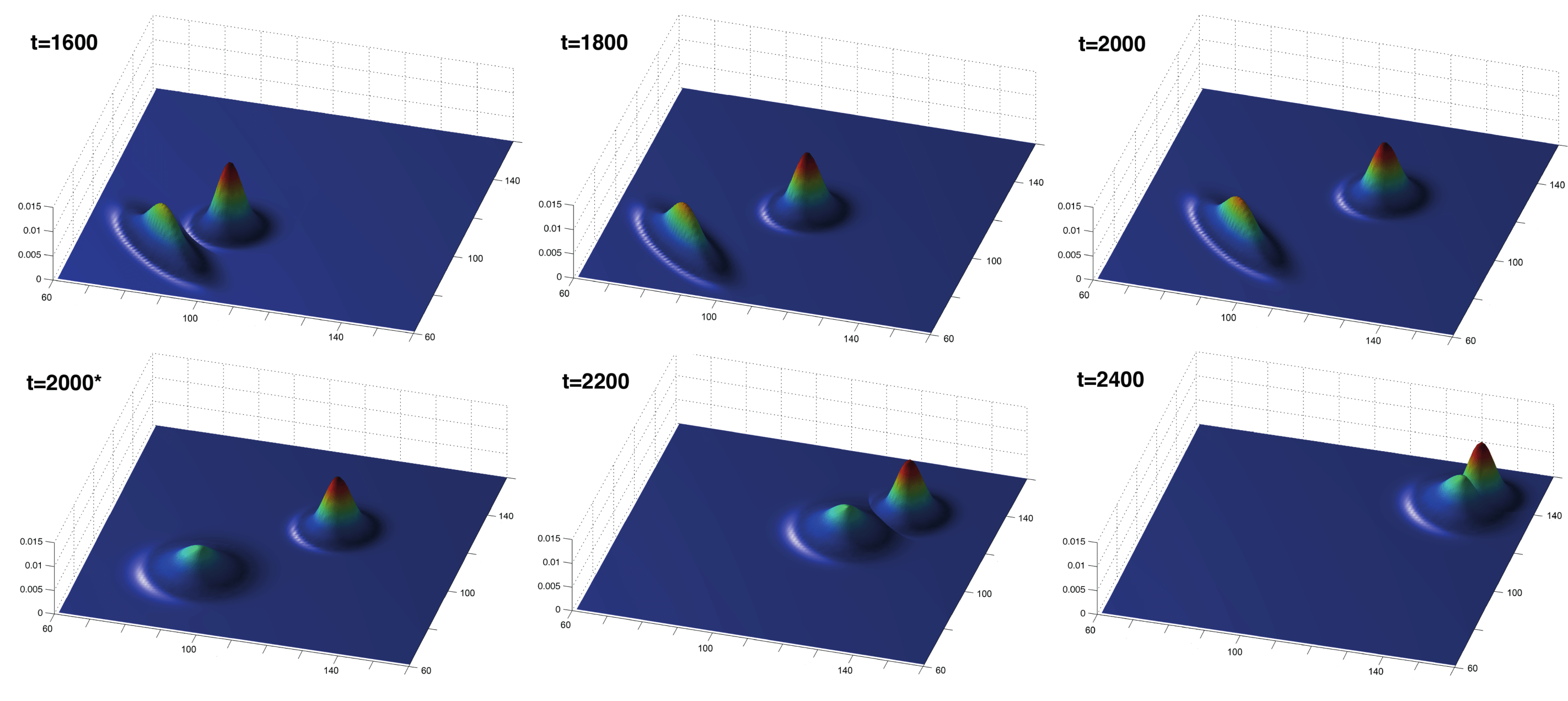}
  \end{adjustbox}
\end{figure}

\begin{figure} 
\includegraphics[scale=.2]{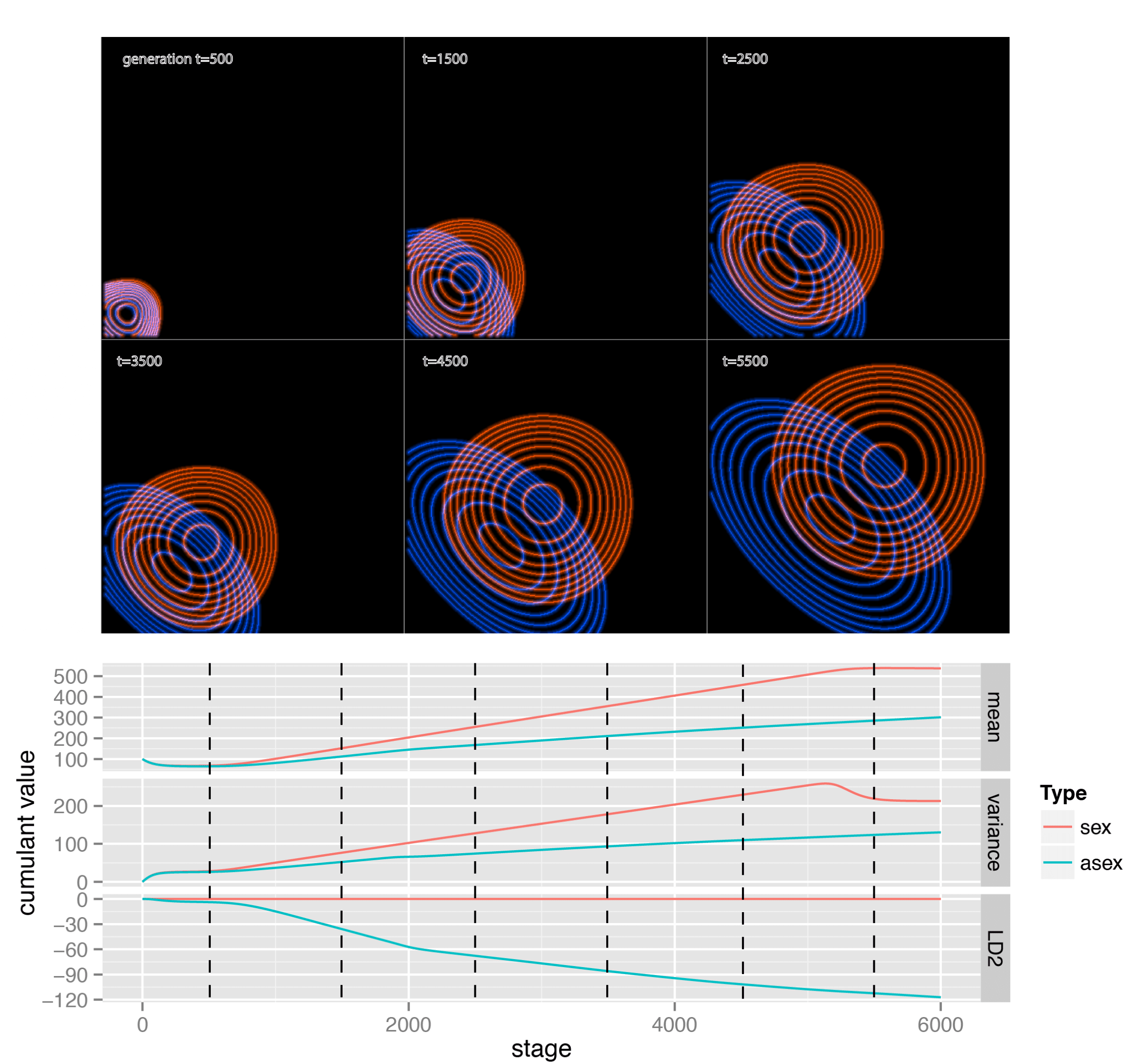}
\caption{This figure shows the level curves for 2-locus populations proceeding according to the bounded model, with maximum allele fitness 400,  mutation rate 0.2, and with the probability any given mutation is beneficial being $10^{-4}$. All alleles initially have fitness 50. The probability density level curves are depicted at stages 500, 1500, 2500, 3500, 4500 and 5500.
We can again observe the increase in flat variance and decrease in variance for the asexual population, and also that the sexual population does not necessarily have a higher global variance.}
\end{figure}


\

Theorems \ref{thm: effect rec} and \ref{thm: rec positive} show an important advantage that sex has over asex.
In comparing  sex and asex populations  evolving independently, however, these theorems do not suffice to entirely specify how the variances of the two populations differ at any given generation. To make this comparison we would need to understand the evolution of the  third cumulant,  which behaves differently in each process.
The evolution of the third cumulant depends on the fourth, which depends on the fifth, and so on.

Rather than analysing further the evolution of populations over time,  we now study what happens to the sexual and asexual populations in the long term.
We prove that, for the bounded model, whatever the initial populations are, sex outperforms asex in the long run.

We state the following theorem in terms of a mixed population containing both sexual and asexual individuals competing for resources.
Thus the population distribution $\bfphi$ now has domain $D\times \{\mathtt{s,a}\}$, the second coordinate indicating whether the individual is sexual or asexual.
Mutation acts exactly as before among each type of individual.
Selection is also the same, now using $M(\bfphi)=\sum_{\bfx\in D\times \{\mathtt{s,a}\}}F(\bfx) \bfphi(\bfx)$ to normalise.
Recombination  acts only among the sexual  individuals.

\begin{theorem}\label{main bounded alleles}
Given $\mu$ and $\ell$, for all sufficiently large bounds $N$ and for any initial population in which the proportion of sexual individuals is non-zero, the proportion of sexual individuals converges to 1 and the proportion of asexual ones converges to 0.
\end{theorem}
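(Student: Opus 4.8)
\emph{Decoupling the competition.} The plan is to first reduce the dynamics of the mixed population to a one-dimensional recursion for the proportion of sexual individuals. Writing $\bfphi_{\mathtt{s}},\bfphi_{\mathtt{a}}$ for the distributions conditioned on type and $p_{\mathtt{s}}$ for the proportion of sexual individuals, I would note that although selection is normalised by the global mean, the global normaliser is a common scalar, so selection sends $\bfphi_{\mathtt{s}}\mapsto\Sel(\bfphi_{\mathtt{s}})$ and $\bfphi_{\mathtt{a}}\mapsto\Sel(\bfphi_{\mathtt{a}})$; mutation and recombination act within each type, and truncation rescales both by a common factor. Hence the two conditional shapes evolve \emph{independently} of one another and of $p_{\mathtt{s}}$, each according to the ordinary single-type sex, respectively asex, process on $D=\{1,\dots,N\}^\ell$. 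Tracking unnormalised masses through one generation then gives the exact odds-ratio recursion
\[
\frac{p_{\mathtt{s}}'}{1-p_{\mathtt{s}}'}=\frac{p_{\mathtt{s}}}{1-p_{\mathtt{s}}}\cdot\frac{M_{\mathtt{s}}}{M_{\mathtt{a}}}\cdot\frac{1-r_{\mathtt{s}}}{1-r_{\mathtt{a}}},
\]
where $M_{\mathtt{s}}=M(\bfphi_{\mathtt{s}})$, $M_{\mathtt{a}}=M(\bfphi_{\mathtt{a}})$, and $r_{\mathtt{s}},r_{\mathtt{a}}$ are the mass fractions of each type removed by truncation. It therefore suffices to show that, eventually, $\log(M_{\mathtt{s}}/M_{\mathtt{a}})+\log\frac{1-r_{\mathtt{s}}}{1-r_{\mathtt{a}}}$ is bounded below by a strictly positive constant.

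\emph{Convergence of the conditional shapes.} Since $D$ is finite, the single-type maps act on a compact simplex. For the asex process the generation map $\bfphi\mapsto\Tru(\Mut(F\cdot\bfphi))/M$ is the projective normalisation of a \emph{fixed} nonnegative operator, and the $\{-1,0,1\}$ mutation kernel (with $\mu(0)>0$) connects every pair of points of $\{1,\dots,N\}^\ell$ aperiodically while remaining inside $D$, so the operator is primitive. By Perron--Frobenius there is a unique stationary distribution $\bfphi_{\mathtt{a}}^\ast$, globally attracting regardless of the (type-specific) initial condition, with limiting mean $M_{\mathtt{a}}^\ast$. For the sex process I would use the symmetry across loci to reduce to the single marginal on $\{1,\dots,N\}$ — recombination keeps the population at linkage equilibrium, so it is determined by its identical marginals — and establish a unique globally attracting fixed point $\phi_{\mathtt{s}}^\ast$, with mean $M_{\mathtt{s}}^\ast$, by a monotonicity/contraction argument for the induced marginal map. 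Thus $M_{\mathtt{s}}\to M_{\mathtt{s}}^\ast$ and $M_{\mathtt{a}}\to M_{\mathtt{a}}^\ast$.

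\emph{The comparison $M_{\mathtt{s}}^\ast>M_{\mathtt{a}}^\ast$ (the crux).} At any stationary distribution the per-generation change of $M$ vanishes; since selection raises $M$ by exactly $V/M$ and mutation lowers it by the fixed amount $\ell m$ with $m=\mu(-1)-\mu(1)>0$, up to the truncation correction both stationary distributions obey the balance $V^\ast\approx\ell m\,M^\ast$. The processes are separated by Theorems \ref{thm: effect rec} and \ref{thm: rec positive}: the sexual stationary distribution is at linkage equilibrium, so $V_{\mathtt{s}}^\ast=\ell v_{\mathtt{s}}^\ast$ (with $v^\ast$ the marginal variance), whereas the asexual process started at equilibrium maintains $LD_2<0$, forcing $V_{\mathtt{a}}^\ast=\ell v_{\mathtt{a}}^\ast+LD_2^\ast<\ell v_{\mathtt{a}}^\ast$. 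Substituting the balance, the asexual population must carry strictly more marginal variance than the selection--mutation balance converts into useful fitness variance; because for large $N$ the stationary marginal is pressed against the ceiling $N$, this excess spread depresses the asexual marginal mean below the sexual one, giving $M_{\mathtt{s}}^\ast>M_{\mathtt{a}}^\ast$. I would make this precise by a boundary-layer analysis near $x=N$ relating the stationary mean-deficit $N-\bar x$ to $v^\ast$ for each process and combining it with the two variance identities. Largeness of $N$ enters exactly here: it guarantees room below the ceiling for a nonzero stationary variance, whereas for small $N$ both populations collapse onto $N^\ell$ and the advantage disappears.

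\emph{Conclusion and the main obstacle.} Once $M_{\mathtt{s}}^\ast/M_{\mathtt{a}}^\ast=1+c_N$ with $c_N>0$, the recursion of the first paragraph shows that $\log\!\big(p_{\mathtt{s}}/(1-p_{\mathtt{s}})\big)$ increases by at least a fixed positive amount per generation once the shapes are near stationarity, whence $p_{\mathtt{s}}\to1$; the finite transient contributes only a bounded additive error. The hard part will be Step three together with truncation. First, Theorem \ref{thm: rec positive} is proved only for the $\ZZ$-model, so I must re-establish $LD_2^\ast<0$ for the \emph{bounded} asexual process, precisely the case the text flags as technically delicate. Second, the advantage $c_N$ may itself shrink as $N\to\infty$, since the gap between the means is a lower-order, boundary-driven correction to the common leading term $\approx\ell N$; the argument must deliver strict positivity of $c_N$ for each large $N$ rather than a uniform bound. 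Third, I must control the differential truncation factor $\frac{1-r_{\mathtt{s}}}{1-r_{\mathtt{a}}}$ at the ceiling — an effect invisible in the $\ZZ$-model — and show it cannot cancel the selection advantage. Rigorous error control in the balance $V^\ast=\ell m\,M^\ast$, the attractivity of the sexual fixed point, and the boundary-layer estimate are the three places where the real work lies.
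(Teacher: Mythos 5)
Your first step (dropping the common normaliser so that the two sub-populations evolve autonomously, and reducing the theorem to showing that the cumulative ratio of per-generation growth factors diverges) is exactly the paper's reduction, and your Perron--Frobenius treatment of the asexual component is also the paper's (Lemma~\ref{PF}). From there on, however, your plan has genuine gaps. The first is the claimed convergence of the sexual shape to a unique globally attracting fixed point ``by a monotonicity/contraction argument for the induced marginal map''. This is precisely what the authors state they were \emph{unable} to prove: the marginal map is nonlinear --- selection at locus $i$ acts as $\Sel_{\Wha_i}$, whose shift $\Wha_i$ is the current mean of the \emph{other} loci --- so Perron--Frobenius does not apply to it and no contraction estimate is in sight. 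The paper avoids the issue entirely: it never proves sexual convergence, but instead bounds the geometric mean of the sexual growth factors $\lambda_\sextt^t$ from below (Theorem~\ref{main thm Sex}), using the linkage-equilibrium product structure to work locus by locus, the order $\preceq$ for comparisons, and Lemma~\ref{lemma: GMs} to transfer estimates between processes.

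The second and decisive problem is that your ``crux'' step rests on an incorrect picture of the stationary states. Because deleterious mutations dominate ($c<a$), the stationary populations are \emph{not} pressed against the ceiling $N$: the paper shows, via conjugation by the diagonal matrix $C(x,x)=(a/c)^{x/2}$ (which converts biased mutation into symmetric mutation at the price of a factor $\tau=b+2\sqrt{ac}<1$ per locus), that the asexual growth rate is below $N\ell\tau^\ell$ while the sexual one exceeds $N\ell\tau^\ell$ by a factor $1+\epsilon$ with $\epsilon>0$ \emph{independent of} $N$. The stationary means sit at a constant fraction of $N\ell$ strictly below $1$, many standard deviations away from the ceiling, so the sex/asex gap is of leading order $\Theta(N)$ and is driven by the bulk mutation--selection balance; it is not the ``lower-order, boundary-driven correction to the common leading term $\approx\ell N$'' that your boundary-layer analysis near $x=N$ is designed to capture. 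That analysis is aimed at the wrong regime and cannot produce the comparison. Finally, your route needs $LD_2^\ast<0$ for the bounded asexual process, which Theorem~\ref{thm: rec positive} does not give; notably, the paper's proof of Theorem~\ref{main bounded alleles} never uses $LD_2$ at all, precisely because truncation makes that route unavailable. In short, your skeleton (the decoupling and the asexual Perron--Frobenius step) is sound and coincides with the paper's, but the quantitative separation of the growth rates --- the heart of the theorem --- is missing and would have to be replaced by something like the paper's conjugation-plus-geometric-mean machinery.
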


This is the longest and most complicated proof of the paper: the proof appears in (SI \S\ref{ss: theorem 3}).
Applying the Perron-Frobenius Theorem suffices to prove that the asex distribution converges to a limit, and using other techniques we are also able to find a good approximation for the mean fitness at that limit.
We do not prove that the sex process converges to a limit, but still get a good estimate of the geometrical average of the mean fitness over generations.
Such ideas would not work for the $\NN$- and $\ZZ$-models as in those cases the mean fitness diverges to infinity in both the sex and the asex processes. Figure 5 shows the manner in which sexual and asexual populations converge to their respective fixed points over time (while we do not prove that convergence to a fixed point always occurs for sexual populations, such convergence was observed in all simulations).

\begin{figure} 
\includegraphics[scale=.27]{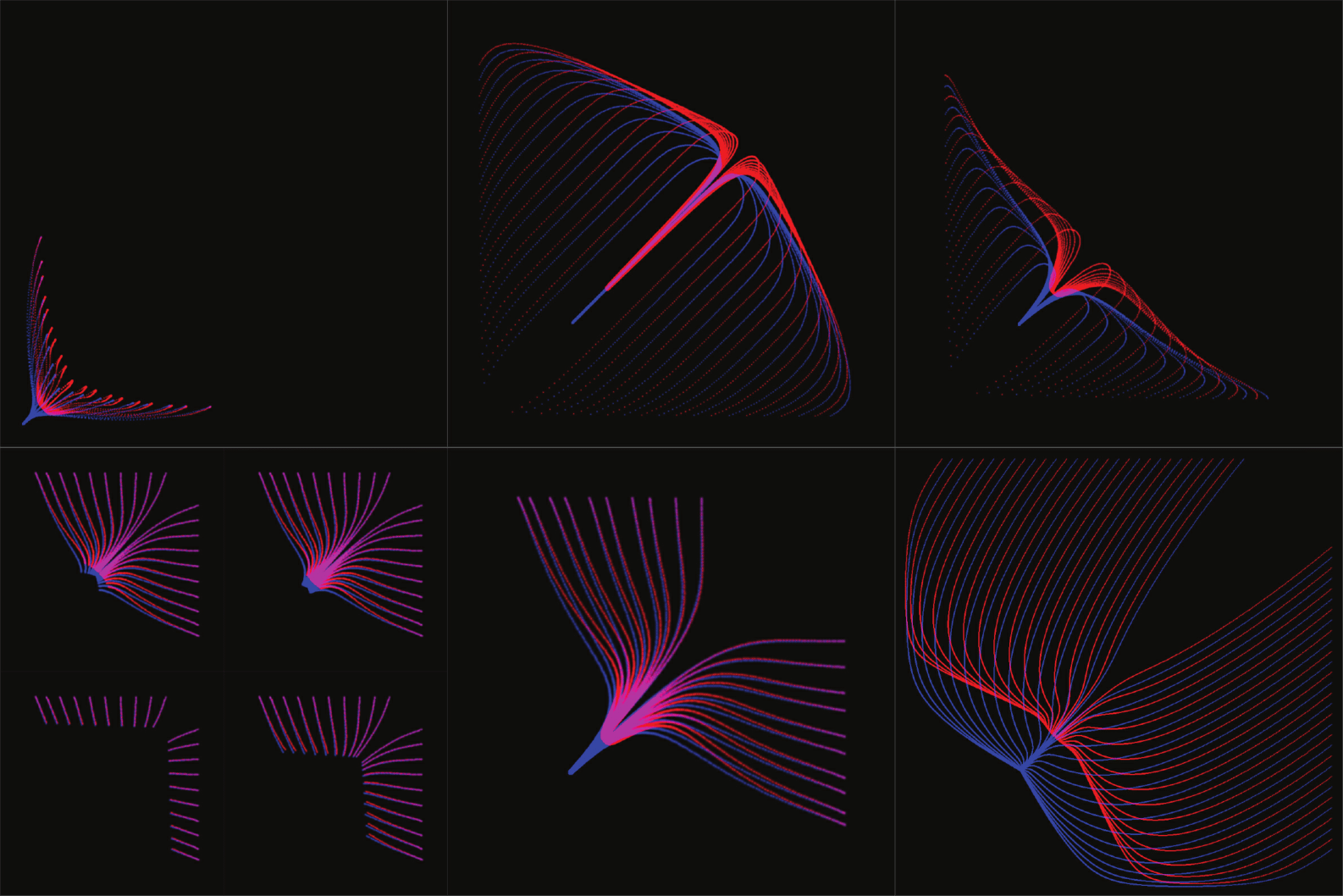}
\caption{Each of the six plots shows the trajectory of the centre of mass for various sexual and asexual 2-locus populations over multiple generations, for a number of different initial populations and for the bounded model.  Each point represents the centre of mass of a population at a single generation, and the populations were then allowed to evolve for sufficiently many generations that an equilibrium point was reached.
The bottom-left plot shows intermediate steps in the evolution towards the middle plot in the bottom row.
For that plot, we have 40 different initial populations, half sexual (red), half asexual (blue).  The bound, $N$, on gene fitness is 50 for all plots except for the top-centre and bottom-right, where $N=301$.
The probability of mutation is $0.5$ except for the top-left plot, where the probability of mutation is $0.9$.
The probability that a mutation is beneficial is $0.001$ in all cases.
Starting from the top-left and moving clockwise, the original populations are Gaussian distributions with standard deviations  $5$, $25$, $6$, $8$, $6$ and $6$ respectively. 
}
\end{figure}

\section{Discussion}
In nature one must surely expect a variety of mechanisms to be of significance in determining the most efficient methods of reproduction. As well as those factors mentioned in the introduction, sex may provide advantages for species not subject to random mating by strengthening selection \cite{GY}, for example, or may provide a straightforward advantage in providing two parents to care for young offspring \cite{DL}. Such arguments, however, do not suffice to explain the prevalence of sex in species for which random mating is a good approximation or without parental care. Our aim here has been to rigorously establish a fundamental and underlying mechanism conferring strong advantages to sex. 
We have seen that independence between loci allows for more rapid growth in mean fitness. 
In the absence of such independence, \emph{the selection of fitter alleles at a particular locus will be stronger when other genes have lower fitness values}. A simple analogy may be given in terms of the comparative value of improvements to sensory abilities: If an organism has little sight, a small improvement in hearing may be more important than it is for an organism with excellent vision.
Thus, in the asex process, the result is that individuals which have high fitness on a gene, tend to have low fitness on another -- this is essentially what negative $LD_2$ means, and what is behind the proof of Theorem \ref{thm: rec positive}. The effect of the sex process is to break down these negative associations, but not to increase or decrease the \emph{global} variance of a population. The key role of recombination is to transform the variance produced by negative associations --  the  flat variance -- into the form of variance which can then be acted upon by selection -- the fitness variance.

\

Of course a natural question, having considered the infinite populations case, is the extent to which this analysis carries over to the finite populations model. The principal difference in moving to finite populations is that the process is no longer deterministic.  The equations governing the change in mean fitness and variance due to selection and mutation for the infinite population model would now perfectly describe the \emph{expected} effect of mutation and selection for finite populations, and the finite populations model could be seen simply as a stochastic approximation to the infinite case, were it not for the loss in variance and higher cumulants due to sampling 
(since picking $n$ individuals from a distribution with variance $v$ produces a population with expected variance $v(n-1)/n$).
 For large populations this effect will be very small on a stage by stage basis, and so our analysis for infinite populations can be seen as a good approximation over a number of generations which is not too large. Ultimately, however, sampling will have the effect that mean fitness for the population no longer increases without limit: once variance is sufficiently large the expected loss in variance due to sampling balances the increase that one would see for an infinite population with the same cumulants. Larger populations are thus able to sustain much higher mean fitnesses than small ones. 
 
 \

While sexual reproduction has been seen here to confer strong advantages in the absence of epistasis, i.e.\ in the setting of simplistic and entirely modular fitness landscapes, we have said nothing about how this picture changes in the presence of epistasis. Assuredly, the task of efficiently navigating fitness landscapes (i.e.\ optimisation) is one that, beyond its relevance here, is of fundamental significance across large areas of applied mathematics and computer science (hence the recent interest of computer scientists in the benefits of sexual reproduction\cite{LP}). However large the role of epistasis in the biological context, it is certainly true that in most of these applications epistasis (in one guise or another) plays a crucial role, and so the interesting question becomes that as to whether sexual reproduction continues to offer these substantial benefits in the face of more complex fitness landscapes. It may be the case that as well as capitalising more efficiently on existing modularity, sex plays a fundamental role in \emph{finding} modularity\cite{LP2}. One would expect a proper analysis to require classification of fitness landscapes in terms of their amenability to different forms of population based search (see, for example, the work of Prugel-Bennet\cite{PB}).

%
%
%


\newpage

\noindent $^1$ Andrew E.M. Lewis-Pye, Department of Mathematics, London School of Economics, London UK. Email: andy@aemlewis.co.uk. \\

\noindent $^2$ Antonio Montalb\'{a}n, Department of Mathematics, University of California, Berkeley, USA. Email: antonio@math.berkeley.edu.




%

\newpage

\section{Extended Data} \label{EXTDAT}

\noindent Figures 6 and 7 display the outcome of simulations for the additive finite populations model. Figures 8 and 9 display the outcome of simulations for the additive infinite populations model. Figure 10 displays the outcome of simulations for the finite populations multiplicative model. Where required for our proofs, we have shown that the proportion of a population at the boundaries will generally be small after sufficiently many generations have passed. Tables 1, 2 and 3 show the proportion of the population at the boundaries for the additive infinite populations $\NN$-model and also for the bounded model.  All variants of the model are described in \S\ref{variants}. 

\begin{figure}[h!] \label{finitegraphs}
\includegraphics[scale=.38]{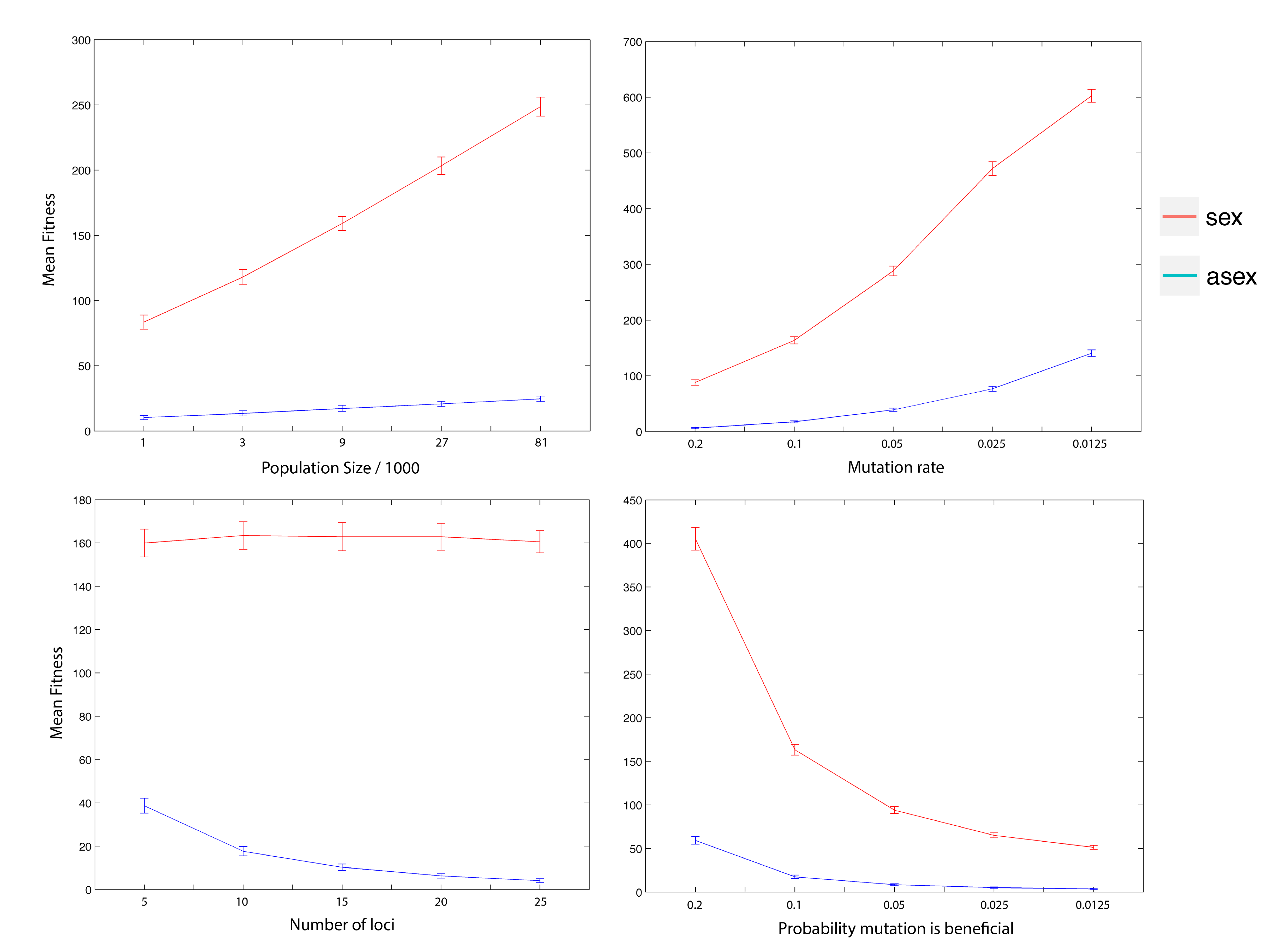}
\caption{Simulations for the finite additive model. In these simulations the `standard' input parameters were: population size 10000; mutation rate 0.1; probability mutation is positive 0.1; 10 loci, initial gene fitness 5. In each graph one parameter is varied, while the other parameters take the standard values. 100 simulations were run for each parameter set, and the mean fitnesses as well as the standard deviations for these mean fitnesses are depicted, after a number of generations which is sufficient for the mean fitness to stabilise. This number of generations was taken to be 4000, except for the case of varying mutation rate where 20000 generations were run for each simulation.
}
\end{figure} 

\newpage

\begin{table}[h] \label{zeros0}
\[
\begin{array}{ c || c | c | c }
                           	&	\mbox{initial gene fitness =}1 						&  10		&  20      	\\	
\hline			\hline

0.2				&	3.5\times 10^{-57} \ / \ 1.0\times 10^{-31} 	& 1.1 \times 10^{-57} \ / \ 3.2\times 10^{-32} 	& 3\times 10^{-58} \ / \ 5.6 \times 10^{-33}    \\
\hline 
		
0.1			&	1.6\times 10^{-70} \ / \ 1.0 \times 10^{-39}	&	3.7 \times 10^{-71}\ / \ 1.5\times 10^{-40} 	& 8.2 \times 10^{-72}\ / \ 6.6\times 10^{-42}   	\\		
\hline

0.05		&     2.9\times 10^{-82}\ / \ 6.6\times 10^{-45} 	&	5.7\times 10^{-83}\ / \ 3.6 \times 10^{-46}   &  1.1\times 10^{-83} \ / \ 2.4 \times 10^{-48} 	\\ 			
\hline

0.025	&	2.7\times 10^{-92} \ / \ 1.2\times 10^{-47} 			& 	4.6\times 10^{-93} \ / \ 1.7 \times 10^{-49}	&	 1.0\times 10^{-93} \ / \ 9.4 \times 10^{-53}  			 \\
\hline 
0.0125 & 9.8\times 10^{-101} \ / \ 1.5\times 10^{-48}  & 1.5\times 10^{-101}\ / \ 4.2 \times 10^{-51}  & 4.1\times 10^{-102} \ / \ 8.0\times 10^{-56} \\
\end{array}
\]
\caption{The table concerns the infinite additive $\NN$-model, and shows the proportion of a 2-locus population which has fitness 1 at either locus for sex/asex, after 1000 generations, for varying initial gene fitnesses, and for varying mutation rates. In all cases the probability that a given mutation is beneficial is $10^{-1}$.  }  
\label{table: Deltas}
\end{table}

\begin{table}[h]  \label{zeros1}
\[
\begin{array}{ c || c | c | c }
                           	&	N =200 						&  300	&  400      	\\	
\hline			\hline

0.2				&	3.4\times 10^{-57} \ / \ 3.8\times 10^{-49} 	& 1.5 \times 10^{-85} \ / \ 2.1\times 10^{-73} 	& 6.6\times 10^{-114} \ / \ 1.2 \times 10^{-97}    \\
\hline 
		
0.1			&	3.5\times 10^{-100} \ / \ 5.0 \times 10^{-95}	&	3.8 \times 10^{-150}\ / \ 2.1\times 10^{-142} 	& 3.9 \times 10^{-200}\ / \ 8.9\times 10^{-190}   	\\		
\hline

0.05		&     1.4\times 10^{-150}\ / \ 1.1\times 10^{-147} 	&	7.5\times 10^{-226}\ / \ 1.8 \times 10^{-221}   &  3.9\times 10^{-301} \ / \ 2.8 \times 10^{-295} 	\\ 			
\hline
\end{array}
\]
\caption{The table concerns the infinite additive bounded model, and shows the proportion of a 2-locus population which has fitness 1 at either locus for sex/asex, after 25000 generations, for varying $N$ (maximum allele fitness), and for varying mutation rates. In all cases the probability that a given mutation is beneficial is $10^{-3}$ and all alleles initially have fitness 50.}  
\label{table: Deltas}
\end{table}

\begin{table}[h]  \label{zeros2}
\[
\begin{array}{ c || c | c | c }
                           	&	N =200 						&  300	&  400      	\\	
\hline			\hline

0.2				&	6.5\times 10^{-29} \ / \ 1.2\times 10^{-30} 	& 3.4 \times 10^{-42} \ / \ 9.3\times 10^{-45} 	& 2.6\times 10^{-55} \ / \ 1.1 \times 10^{-58}    \\
\hline 
		
0.1			&	9.2\times 10^{-16} \ / \ 2.5 \times 10^{-16}	&	7.1 \times 10^{-23}\ / \ 1.1\times 10^{-23} 	& 7.5 \times 10^{-30}\ / \ 6.3\times 10^{-31}   	\\		
\hline

0.05		&     1.5\times 10^{-8}\ / \ 1.0\times 10^{-8} 	&	2.3\times 10^{-12}\ / \ 1.3 \times 10^{-12}   &  4.5\times 10^{-16} \ / \ 2.2 \times 10^{-16} 	\\ 			
\hline
\end{array}
\]
\caption{The table concerns the infinite additive bounded model, and shows the proportion of a 2-locus population which has fitness $N$ (maximum allele fitness) at either locus for sex/asex, after 25000 generations, for varying $N$ and for varying mutation rates. In all cases the probability that a given mutation is beneficial is $10^{-3}$ and all alleles initially have fitness 50.}  
\label{table: Deltas}
\end{table}

\newpage

\begin{figure}[h!] \label{finitegraphs2}
\includegraphics[scale=.75]{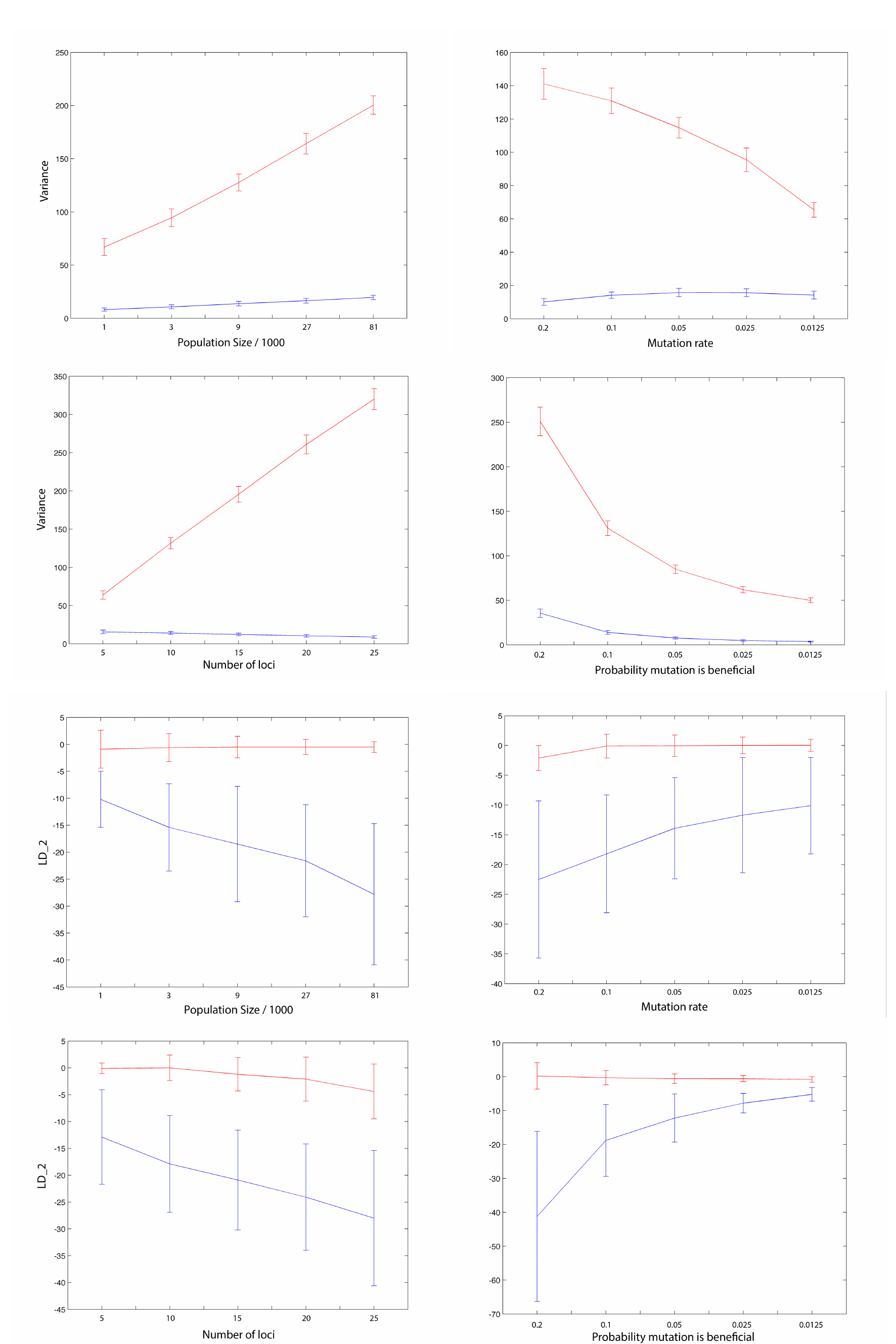}
\caption{These graphs display variance and $LD_2$  for the same simulations which have their mean fitnesses displayed in Figure 6.}
\end{figure} 

\newpage 

\begin{figure}[h!] \label{VoverM}
\includegraphics[scale=.165]{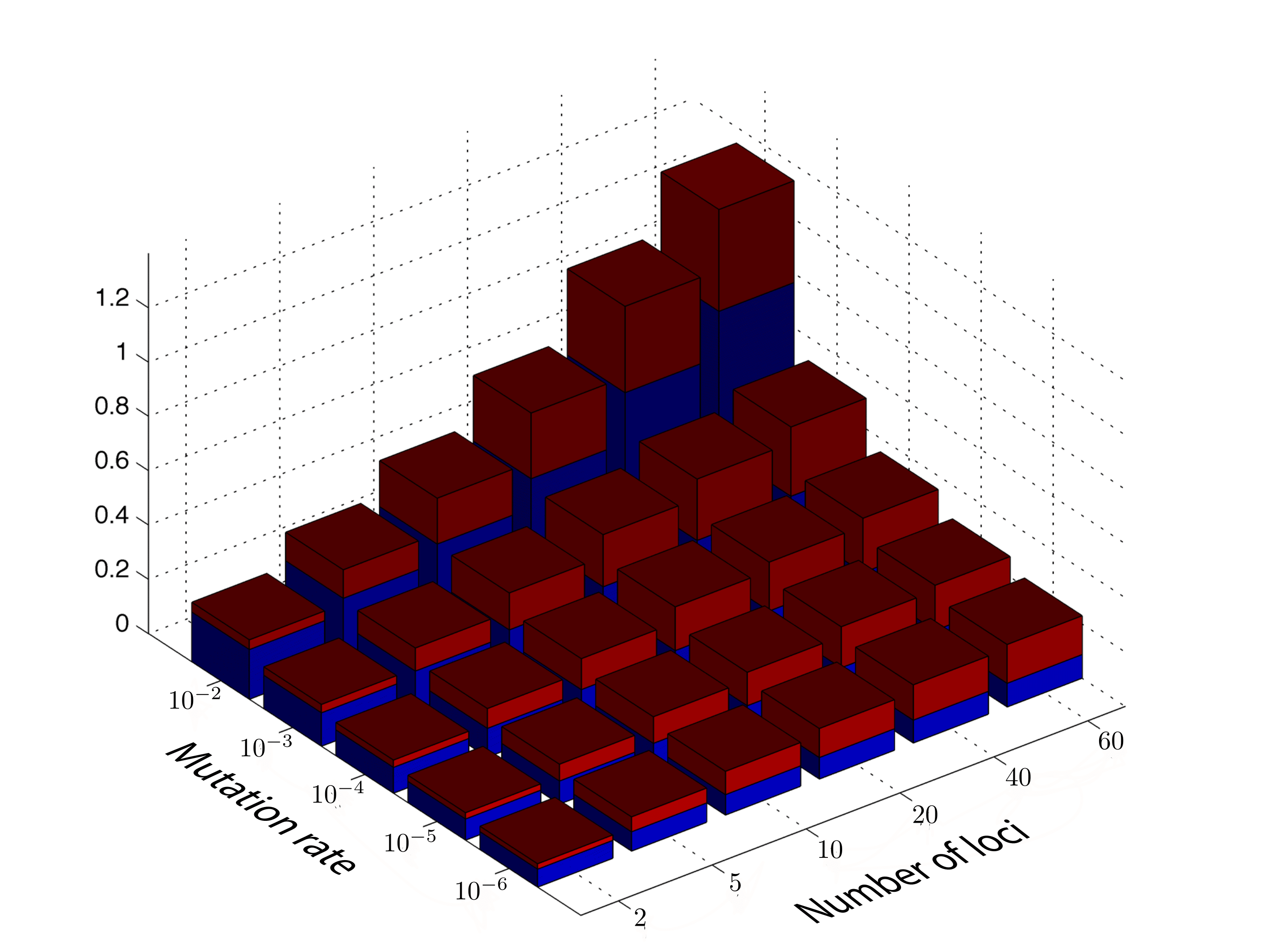}
\caption{Simulations for the infinite additive model appear to show $V/M$ reaching a limit value over time. The figure shows approximate values for these limits, for sex (red) and asex (blue). In all these simulations the probability that a given mutation is beneficial was fixed at 0.1, and gene fitnesses were initially 5 (although the latter parameter has no effect on the limit values found).}
\end{figure} 

\newpage 

\begin{figure}[!ht] \label{Fig9}
  \begin{adjustbox}{addcode={\begin{minipage}{\width}}
  {\caption{Simulations for the additive infinite model. Each set of four graphs shows (a) $M$, (b) $V/M$, (c) $\kappa_3/M$, (d) $\kappa_4/V$ for one simulation. In all simulations initial gene fitnesses are 5, and each simulation is then specified by a triple: $\ell$ specifies the number of loci, $p$ is the probability of mutation, $q$ is the probability a given mutation is beneficial. 
 }\end{minipage}},rotate=90,center}
  \includegraphics[scale=.58]{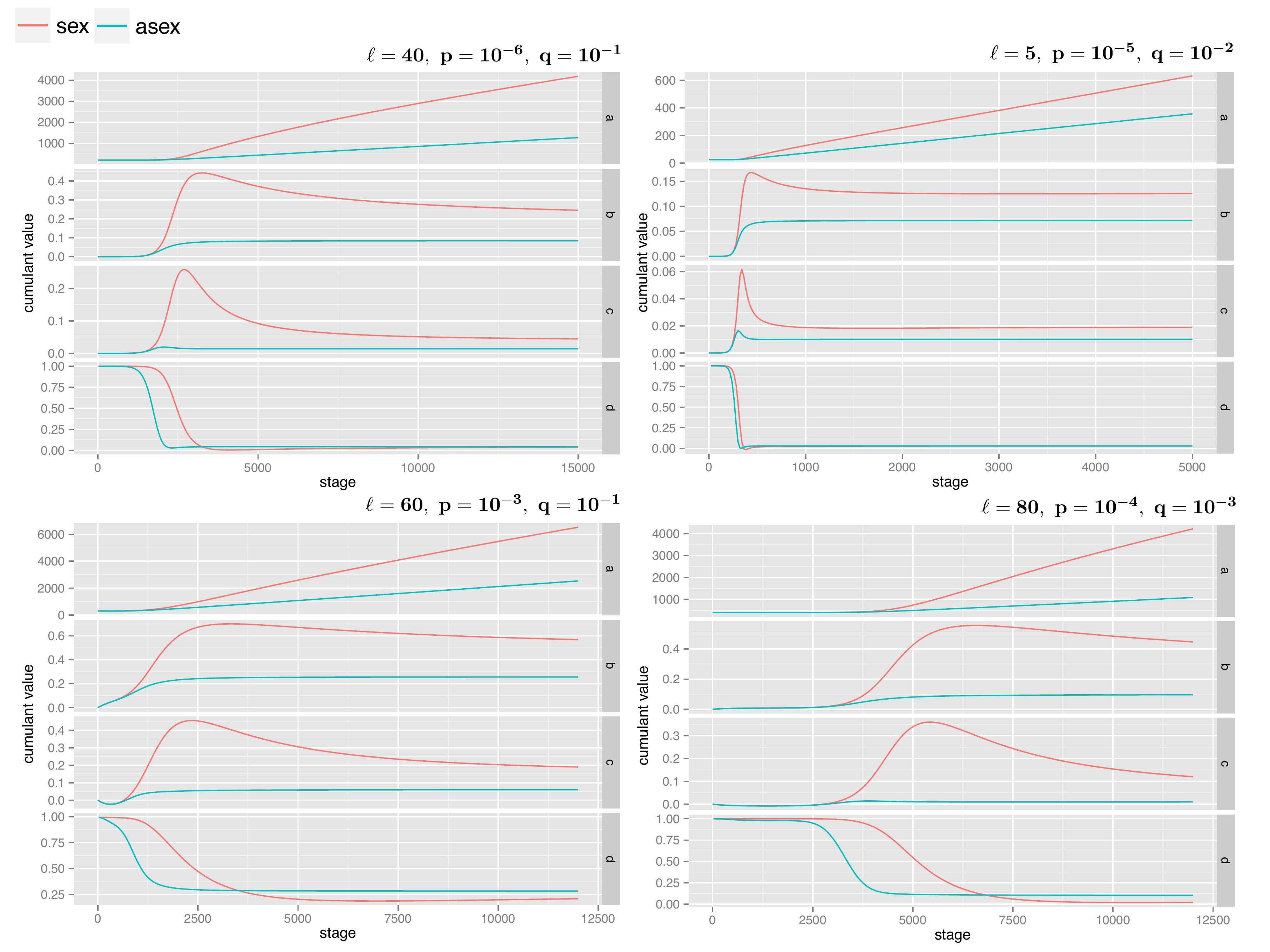}
  \end{adjustbox}
\end{figure}

\newpage

\begin{figure}[!ht] \label{finitegraphsmult}
\includegraphics[scale=.45]{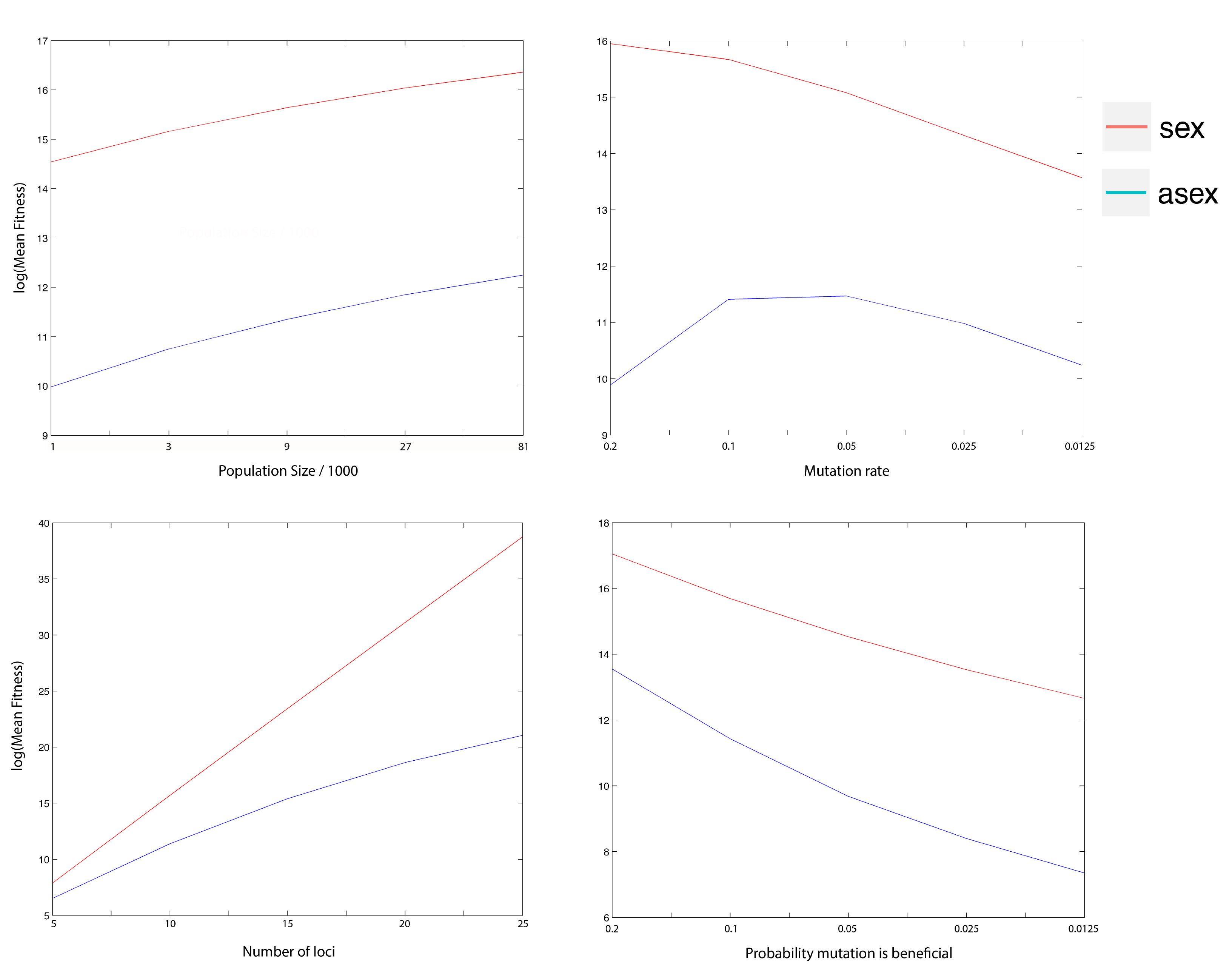}
\caption{Simulations for the finite multiplicative model. In these simulations the `standard' input parameters were: population size 10000; mutation rate 0.1; probability mutation is positive 0.1; 10 loci, initial gene fitness 1. In each graph one parameter is varied, while the other parameters take the standard values. 100 simulations were run for each parameter set, and the logarithms (base 10) of the mean fitnesses  are depicted after 500 generations (without any suggestion that the mean fitness has stabilised by this point).}
\end{figure} 

\newpage

\section{Supplementary Information}

In this section we provide a much deeper analysis of all the claims made in the main article.
We start, in Subsection \ref{key values}, by proving Theorem 1 and showing how the various operations affect different features of a population, such as mean fitness and variance.  
Subsection \ref{ss: theorem 2} is dedicated to the proof of Theorem \ref{thm: rec positive}.
Although Theorem \ref {thm: effect rec} and  Theorem \ref{thm: rec positive} are both concerned with establishing negative values for $LD_2$, their proofs are completely different and give us alternative ways of understanding the model.
Theorem \ref{main bounded alleles} is proved in Subsection \ref{ss: theorem 3}.
The proof of this theorem is much longer than those of the previous theorems, and, again, it is very different in style. 
Once we have proved our main theorems, we move on to discuss variants of our model in Subsection \ref{variants}.
The variants we consider are the finite version and the multiplicative version.
We do not have a full mathematical analysis for those models, but present the results of simulations.
The outcomes of simulations are presented in the previous section, Extended Data  \S\ref{EXTDAT}.

\subsection{The evolution of the key values} \label{key values} 

In this subsection we review some well known facts and describe in more detail how mutation, selection and recombination affect mean fitness, variance, $LD_2$ and flat variance, proving the claims  made in Section \ref{se: Basic Analysis}. 
The objective is to establish all of the results in the table below:

\begin{table}[h] 
\[
\begin{array}{ c || c | c | c }
\mbox{Effect of:}		&	\mbox{selection} 						&  \mbox{mutation}		&  \mbox{recombination}		\\			\hline			\hline
\Delta M				&		V/M								&	\ell\ E(\mu) 	&		0				 \\ \hline
\Delta V				&	\kappa_3/M-(V/M)^2						&	\ell\ \Var(\mu) 	&		-LD_2 			\\			\hline
\Delta LD_2			&     - \sum_{i\neq j} V_iV_j/M^2  \ \ \ \ \ \ (\ast)& 0	&	-LD_2		\\ 			\hline
\Delta \kappa_3			&	(V/M)((\kappa_4/V)-3(\kappa_3/M)+2(V/M)^2)									& 	\ell\  \kappa_3(\mu)	&		-LD_3			 \\
\end{array}
\]
\caption{By $\Delta M$ is meant the change in $M$ produced by the relevant operation. 
All  values ($M$, $V$, etc) inside the table are with respect to the population before the relevant operation is applied:
the box stating that $\Delta M$ for selection is $V/M$ should be read $M(\Sel(\bfphi))-M(\bfphi)=V(\bfphi)/M(\bfphi)$.
$(\ast)$  The stated effect of selection on $LD_2$ is only valid in the case that selection is acting on a population at linkage equilibrium.}  
\label{table: Deltas}
\end{table}
\subsubsection{The evolution of mean fitness and variance} 

The impact of mutation on the mean, variance and all cumulants is simply described (recall that mean fitness and variance are the first and second cumulants of $F(X)$). 
If $Y$ and $C$ are independent random variables and $\kappa_n $ is the $n$th cumulant, then $\kappa_n(Y+C)=\kappa_n(Y)+\kappa_n(C)$. 
Thus the effect of mutation on the mean fitness is to increase it by $\ell E(\mu)$ (which will be negative given our assumptions on $\mu$). Similarly, the effect on variance is to increase it by $\ell \Var(\mu)$.      

The effect of selection is given by the following lemma (while these claims are either well known or easily established, we include a proof for the sake of completeness).  Here  $M=M(\bfphi)$, $V=V(\bfphi)$, $\kappa_3=\kappa_3(\bfphi)$, $\kappa_4=\kappa_4(\bfphi)$ and $M^{\ast}$, $V^{\ast}$, $\kappa_3^{\ast}$ are the corresponding values for $\bfphi^{\ast}=\Sel(\bfphi)$. 

\begin{lemma}
The effect of selection on mean fitness, variance and $\kappa_3$ is given by:
\begin{eqnarray*} 
M^\ast -M  &=& V/M, \\ 
V^{\ast} -V &=& \kappa_3/M-(V/M)^2, \\
\kappa_3^{\ast}-\kappa_3 &=& (V/M)((\kappa_4/V)-3(\kappa_3/M)+2(V/M)^2). 
\end{eqnarray*} 
\end{lemma}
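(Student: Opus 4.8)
The plan is to observe that all three quantities in the lemma depend only on the distribution of the scalar fitness random variable $W=F(X)$, and that selection acts on this distribution as a size-biasing (fitness-weighting). Concretely, from the definition of $\Sel$, for any function $g$ one has
\[
E^{\ast}(g(W)) = \sum_{\bfx} g(F(\bfx))\,\frac{F(\bfx)}{M}\,\bfphi(\bfx) = \frac{E(W\,g(W))}{M},
\]
where $E$ denotes expectation with respect to $\bfphi$ and $E^{\ast}$ expectation with respect to $\bfphi^{\ast}=\Sel(\bfphi)$. Taking $g(W)=W^{k}$ gives the single identity that drives everything: $E^{\ast}(W^{k}) = E(W^{k+1})/M$. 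This reduces the lemma to a one-dimensional bookkeeping exercise relating the raw moments $E(W^{k})$ to the central moments and cumulants of $W$, with no further reference to the joint distribution of the $X_i$.

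First I would record the raw-to-central moment conversions. Writing $Y=W-M$, so that $E(Y)=0$, $E(Y^{2})=V$, $E(Y^{3})=\kappa_3$ (the third cumulant equals the third central moment), and $E(Y^{4})=\kappa_4+3V^{2}$, expansion yields $E(W^{2})=V+M^{2}$, $E(W^{3})=\kappa_3+3VM+M^{3}$, and $E(W^{4})=\kappa_4+3V^{2}+4M\kappa_3+6M^{2}V+M^{4}$. The mean formula is then immediate: $M^{\ast}=E^{\ast}(W)=E(W^{2})/M=M+V/M$. For the variance I compute $V^{\ast}=E^{\ast}(W^{2})-(M^{\ast})^{2}=E(W^{3})/M-(M+V/M)^{2}$, and on substituting the expansion of $E(W^{3})$ the terms $M^{2}$ and $2V$ cancel, leaving $V^{\ast}-V=\kappa_3/M-(V/M)^{2}$ as claimed.

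For the third cumulant I would start from $\kappa_3^{\ast}=E^{\ast}(W^{3})-3M^{\ast}E^{\ast}(W^{2})+2(M^{\ast})^{3}$ and rewrite each starred moment through the size-biasing identity as $E(W^{4})/M$, $E(W^{3})/M$ and $E(W^{2})/M$ respectively. Substituting the raw-moment expansions from the previous step produces a rational expression in $M,V,\kappa_3,\kappa_4$, and the main obstacle is the algebraic simplification: every term of degree up to $M^{4}$, together with all the mixed products $M\kappa_3$, $M^2V$ and $V^2$, must cancel so that the result collapses to the stated $\frac{V}{M}\big(\frac{\kappa_4}{V}-3\frac{\kappa_3}{M}+2(\frac{V}{M})^{2}\big)$. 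I expect this cancellation to be the only delicate point; organising the work by first expressing $(M^{\ast})^{2}$ and $(M^{\ast})^{3}$ in powers of $M$ and $V/M$, and then grouping the remaining terms by their power of $1/M$, should keep the bookkeeping transparent and make the final collapse verifiable by inspection.
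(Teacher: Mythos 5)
Your proposal is correct and is essentially the paper's own argument: the paper likewise observes that selection sends raw moments to one-degree-higher raw moments divided by $M$ (your size-biasing identity $E^{\ast}(W^{k})=E(W^{k+1})/M$, used implicitly there), and then converts between raw and central moments to obtain the identities. If anything, yours is slightly more complete, since the paper carries out this computation only for the first two identities and dismisses the third with ``follows by almost identical methods,'' whereas you spell out the $\kappa_3^{\ast}$ bookkeeping (including the needed relation $E((W-M)^{4})=\kappa_4+3V^{2}$) explicitly.
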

\begin{proof}
We prove the first two identities. The third then follows with a little more algebraic manipulation, by almost identical methods. 
In order to see the first identity, note that: 
\[ M^{\ast} = \sum_{\bfx} F( {\bfx})\ \Sel(\bfphi)(\bfx) =  \frac{1}{M}\sum_{\bfx} F(\bfx)^2 \bfphi(\bfx). \]
\noindent  Now, using that the second moment about the origin, $\sum_{\bfx} F(\bfx)^2 \bfphi(\bfx)$, is equal to $V+M^2$ we get:  
\[ V=\Big( \sum_{\bfx} F(\bfx)^2 \bfphi(\bfx)\Big)   \  - M^2= M^{\ast} M-M^2. \]  
\noindent This gives the well known identity  $ V/M= M^\ast -M$, as required. In order to derive the second identity, we recall the formula for the third central moment:

\begin{eqnarray*}
 \kappa_3 &=&   \sum_{\bfx} (F(\bfx)-M)^3  \bfphi(\bfx)\\
 &= & \sum_{\bfx} (F(\bfx)^3-3F(\bfx)^2M+3F(\bfx)M^2-M^3)\ \bfphi(\bfx) \\
 &= & \Big( \sum_{\bfx} F(\bfx)^3 \bfphi(\bfx)\Big)  -3M\left( \Big(\sum_{\bfx} F(\bfx)^2 \bfphi(\bfx)\Big)-M^2 \right) -M^3 \\
 &= & \Big( \sum_{\bfx} F(\bfx)^3 \bfphi(\bfx)\Big)  -3MV -M^3.
 \end{eqnarray*} 

\noindent Then: 
\[ V^{\ast}= \Big( \sum_{\bfx} F(\bfx)^2\  \bfphi^*(\bfx)\Big) - (M^*)^2= \frac{1}{M} \Big( \sum_{\bfx} F(\bfx)^3 \bfphi(\bfx)\Big) \  -(M^{\ast})^2. \] 
\noindent Substituting $V/M+M$ for $M^{\ast}$, we get:
 \[ V^{\ast}-V=   \frac{1}{M} \Big( \sum_{\bfx} F(\bfx)^3 \bfphi(\bfx)\Big) -\left(M^2 +2V +V^2/M^2\right)-V = \kappa^3/M - V^2/M^2,  \]
 \noindent as required. 
 \end{proof}
 
 Let us now consider recombination.
 Recall that $X_i$ is a random variable taking values according to the distribution $\phi_i$ (specifying the distribution at the $i$th locus). 
 At any point, the mean is given by $M(\bfphi)=\sum_i E(\phi_i)$. Since recombination has no effect on each $\phi_i$, it also has no impact on $M(\bfphi)$.
 The change of variance due to recombination is $-LD_2$, by the definition of $LD_2$.

 \subsubsection{Recombination, $LD_2$ and flat variance} \label{sss: LD2}

 It is not the case, however, that the sum of variances is the variance of the sum, unless the random variables in question are independent.
 Since $\Rec(\bfphi)$ is at linkage equilibrium, and recombination leaves each $\phi_i$ unchanged, we conclude that $V(\Rec(\bfphi))=\sum_i V_i$, where $V_i=\Var(\phi_i)$.
 We therefore have:
 \[
 LD_2 = V - \sum_{i=1}^\ell V_i.
 \]
 
 There is a second way to calculate $LD_2$ that will be useful later.
 The variance can be expressed: 
 \begin{eqnarray*}
 V(\bfphi) &=& E\Big(\big(\sum_{i=1}^\ell X_i\big)^2\Big)-\Big(E\big(\sum_{i=1}^{\ell} X_i\big)\Big)^2 \\
 &=& \sum_{i=1}^{\ell} \Big( E(X_i^2)-E(X_i)^2\Big) + \sum_{i\neq j} \Big( E(X_iX_j)-E(X_i)E(X_j) \Big). \\
 &=& \Big( \sum_{i=1}^{\ell} V_i \Big) +\sum_{i\neq j} \Big( E(X_iX_j)-E(X_i)E(X_j) \Big).    
 \end{eqnarray*}
 
\noindent This gives a description of $LD_2$ as a sum of covariance terms:
 \[
 LD_2 = \sum_{i\neq j} \Big( E(X_iX_j)-E(X_i)E(X_j) \Big).
 \]
 
Letting $\bfM=(E(X_1),E(X_2),...,E(X_\ell))\in \RR^\ell$, recall that the global variance $GV=GV(\bfphi)$ was defined as $E(\|X-\bfM\|^2)$, and the flat variance $FV=FV(\bfphi)$ was defined as $ E(\|\pi_P(X-\bfM)\|^2) \ell/(\ell-1)$.
Using that $\pi_P(X)+\pi_d(X)=X$ and that, by Pythagoras, $\|\pi_P(X-\bfM)\|^2+\|\pi_d(X-\bfM)\|^2=\|X-\bfM\|^2$, we get:
\begin{multline*}
GV = \sum_{i=1}^\ell V_i = E(\|X-\bfM\|^2) = \\
E(\|\pi_P(X-\bfM)\|^2) + E(\|\pi_d(X-\bfM)\|^2) = \frac{\ell-1}{\ell}FV+\frac{1}{\ell}V.
\end{multline*}
Thus, since $\sum_{i=1}^\ell V_i$ is  unaffected by recombination, so is $((\ell-1)FV+V)/\ell$.
We can also deduce that if $\bfphi^*$ is at linkage equilibrium and $V^*=\sum_{i=1}^\ell V_i^*$, then $FV^*=V^*$.
It follows that the effect of recombination on $V$ and $FV$ is to make them equal while leaving $((\ell-1)FV+V)/\ell$ unchanged, thus making them both equal to $((\ell-1)FV+V)/\ell$.
We then have: 
\[
V(\Rec(\bfphi))-V(\bfphi) = \frac{\ell-1}{\ell}(FV-V) 
\quad \mbox{ and } \quad
FV(\Rec(\bfphi))-FV(\bfphi) = \frac{1}{\ell}(V-FV), 
\]
and  
\[
LD_2 = ((\ell-1)/\ell) (V-FV).
\]

\subsubsection{The evolution of $LD_2$}

The most direct way in which recombination affects mean fitness is by changing the variance, which then affects the growth in mean fitness via selection.
The change in variance due to recombination is given by $-LD_2$.
Thus, to show that recombination has a positive effect on variance, one must show that $LD_2$ is negative. 
In this subsection we analyse the effect on $LD_2$ given by the different operations.
As part of our analysis we get a proof of Theorem \ref{thm: effect rec}.

 Since $LD_2=0$ when at linkage equilibrium, we have $LD_2(\mathtt{Rec}(\bfphi))=0$.

Mutation has no effect at all on $LD_2$ as shown by the following lemma.

\begin{lemma}
For any population $\bfphi$, $LD_2(\Mut(\bfphi))=LD_2(\bfphi)$.
\end{lemma}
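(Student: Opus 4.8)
The plan is to show that mutation preserves $LD_2$ by using the covariance characterization established earlier, namely $LD_2(\bfphi) = \sum_{i\neq j}\big(E(X_iX_j)-E(X_i)E(X_j)\big)$. Since mutation acts on the random variable $X=(X_1,\dots,X_\ell)$ by adding independent noise $(C_1,\dots,C_\ell)$, where the $C_i$ are i.i.d.\ with distribution $\mu$ and independent of $X$, the effect on $LD_2$ reduces to computing how the pairwise covariances $\mathrm{Cov}(X_i,X_j)$ transform under $X_i\mapsto X_i+C_i$.

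The key step is the bilinearity of covariance together with the independence assumptions. For each pair $i\neq j$ I would expand
\[
\mathrm{Cov}(X_i+C_i,\ X_j+C_j) = \mathrm{Cov}(X_i,X_j) + \mathrm{Cov}(X_i,C_j) + \mathrm{Cov}(C_i,X_j) + \mathrm{Cov}(C_i,C_j).
\]
Since the $C_k$ are independent of the whole population variable $X$, the two cross terms $\mathrm{Cov}(X_i,C_j)$ and $\mathrm{Cov}(C_i,X_j)$ vanish. Since $C_i$ and $C_j$ are independent for $i\neq j$ (the $C_k$ being i.i.d.), the term $\mathrm{Cov}(C_i,C_j)$ also vanishes. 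Hence each covariance term is individually unchanged, and summing over all ordered pairs $i\neq j$ gives $LD_2(\Mut(\bfphi)) = LD_2(\bfphi)$.

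The only point requiring care is the justification of the independence facts: that each $C_k$ is independent of $X$ (this is exactly the definition of the mutation operation, where mutation noise is drawn independently of the current genotype) and that the mutation noise acts independently across distinct loci (which is the assumption encoded in $\bfmu(a_1,\dots,a_\ell)=\prod_{i=1}^\ell\mu(a_i)$). Neither is really an obstacle — both are built into the model — so the main content is simply organizing the covariance expansion cleanly and noting that crossed and between-locus noise terms drop out. I would present the argument at the level of the random variables rather than manipulating the distribution $\Mut(\bfphi)(\bfx)$ directly, since the representation $\Mut(X)=(X_1+C_1,\dots,X_\ell+C_\ell)$ makes the covariance computation transparent.
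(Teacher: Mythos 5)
Your proof is correct and is essentially the same as the paper's: the paper expands $E((X_i+C_i)(X_j+C_j))-E(X_i+C_i)E(X_j+C_j)$ into its eight expectation terms and cancels six of them by independence, which is exactly your bilinearity-of-covariance expansion written out termwise. Both arguments rest on the same covariance characterization of $LD_2$ and the same two independence facts, so there is no substantive difference.
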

\begin{proof}
Recall the definition of mutation in terms of the random variables $C_i$.
\begin{eqnarray*}
LD_2(\Mut(\bfphi))     &=&	 \sum_{i\neq j} \Big( E((X_i+C_i)(X_j+C_j))-E(X_i+C_i)E(X_j+C_j) \Big)   \\
				 &=&	 \sum_{i\neq j} \Big( E(X_iX_j) + E(X_iC_j) + E(C_iX_j) + E(C_iC_j) \\
				 & & \quad -E(X_i)E(X_j)-E(X_i)E(C_j)-E(C_i)E(X_j)-E(C_i)E(C_j) \Big).  \\
\end{eqnarray*}
Since $C_i$ and $C_j$ are independent, and are independent of $X_i$ and $X_j$,  most of these terms cancel, leaving $E(X_iX_j)-E(X_i)E(X_j)$ as required.
\end{proof}

Let us take this opportunity to note that mutation has no effect at all on linkage equilibrium: 
This is because if the variables $X_i$ are independent, so are the variables $X_i+C_i$.
Also, since $FV=V-(\ell/(\ell-1))LD_2$, we conclude that the effect of mutation on flat variance is the same as that on variance: 
 $FV(\Mut(\bfphi))-FV(\bfphi) = \ell \Var(\mu)$.

\

The effect of selection on $LD_2$ is more complex and is given by Theorem \ref{thm: effect rec} (restated below) in the case that the operation is applied to a population at linkage equilibrium.
The rest of the subsection is dedicated to proving it.
We define $LD_3$ to be the decrease in the third cumulant of $F(X)$ produced by recombination.
Thus,
\[
LD_3(\bfphi)= \kappa_3(F(X)) - \sum_{i=1}^\ell \kappa_3(X_i).
\]
As with the other values, we use $\kappa_3$ to denote $\kappa_3(F(X))$ and $\kappa_{3,i}$ to denote $\kappa_3(X_i)$.

\vspace{0.2cm} 

\noindent \textbf{Theorem 1}. 
\emph{If $\bfphi^{\ast}=\Sel(\bfphi)$ was obtained by an application of selection to a population $\bfphi$ at linkage equilibrium, then 
the effect of recombination on variance is given by:
\[
V(\Rec(\bfphi^{\ast}))-V(\bfphi^{\ast}) = \frac{\sum_{i\neq j} V_iV_j}{M^2},
\]
where $V_i=\Var(\phi_i)$ and $M=M(\bfphi)$.}


\vspace{0.2cm} 

Theorem \ref{thm: effect rec} asserts, in other words, that $LD_2(\Sel(\bfphi))=  -(\sum_{i\neq j} V_iV_j)/M^2$ if $\bfphi$ is at linkage equilibrium.  The key to the proof is to study the effect of selection on each locus separately, as given by the following lemma.
Let us describe our notation. 
Let $\bfphi^*=\Sel(\bfphi)$.
Recall that we use a boldface greek character, $\bfphi$, to denote the distribution of a population in $\ZZ^\ell$, and the lightface version of that character, $\phi_i$ to denote the distribution  at the $i$th locus.
We denote the mean fitness at locus $i$ by $W_i=E(\phi_i)$.  By the linearity of expectation we have $M=\sum_{i=1}^\ell W_i$.
We use $\Wha_i$ to denote the mean fitness of the loci other than $i$, i.e., $\Wha_i=M-W_i$.
Use use $V_i$ to denote the variance in fitness at the $i$th locus: $V_i=\Var(\phi_i)$.
The notation is analogous for $\bfphi^*$:  $W^*_i=E(\phi^*_i)$, $V^*_i=\Var(\phi^*_i)$, etc.

\begin{lemma}\label{lemma: selection 1 locus}
If selection acts on a population at linkage equilibrium, the effect on the $i$th locus is given by:
\[
\phi^*_i(x) = \frac{1}{M}\left(x+ \Wha_i\right)\phi_i(x).
\]
\end{lemma}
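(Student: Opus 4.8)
The plan is to obtain the claimed marginal by computing $\phi^*_i$ directly from the definition of selection and exploiting the independence supplied by linkage equilibrium. First I would express the distribution at locus $i$ as the marginal of $\bfphi^*=\Sel(\bfphi)$, summing over every coordinate except the $i$th:
\[
\phi^*_i(x)=\sum_{\bfy\in\ZZ^\ell:\,y_i=x}\bfphi^*(\bfy)=\frac{1}{M}\sum_{\bfy:\,y_i=x}F(\bfy)\,\bfphi(\bfy).
\]
Because $\bfphi$ is at linkage equilibrium, its joint distribution factorises as $\bfphi(\bfy)=\prod_{j=1}^\ell\phi_j(y_j)$. Holding $y_i=x$ fixed and pulling out the resulting factor $\phi_i(x)$ then reduces the problem to evaluating $\frac{\phi_i(x)}{M}\sum_{(y_j)_{j\neq i}}F(\bfy)\prod_{j\neq i}\phi_j(y_j)$.

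Next I would decompose the fitness as $F(\bfy)=x+\sum_{k\neq i}y_k$ and split the sum into the corresponding two pieces. The piece carrying the constant $x$ factorises into $x\prod_{j\neq i}\big(\sum_{y_j}\phi_j(y_j)\big)=x$, since each marginal $\phi_j$ is a probability distribution summing to $1$. The remaining piece is, by linearity of expectation over the independent marginals, exactly $E\big(\sum_{k\neq i}X_k\big)=\sum_{k\neq i}W_k=\Wha_i$. Adding these and multiplying through by $\phi_i(x)/M$ yields $\phi^*_i(x)=\frac{1}{M}(x+\Wha_i)\phi_i(x)$, which is the assertion.

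The computation is essentially routine, and the single substantive ingredient is linkage equilibrium: it is what licenses the factorisation of $\bfphi$ into its marginals and hence the clean separation of the fitness $F(\bfy)$ into an ``own-locus'' contribution $x$ and an ``other-loci'' contribution that averages to $\Wha_i$. I do not anticipate a genuine obstacle here; the only point requiring mild care is the bookkeeping — keeping the fixed coordinate $y_i=x$ distinct from the summed coordinates $y_k$ ($k\neq i$), and recognising that the cross term collapses precisely to the complementary mean fitness $\Wha_i$ rather than to the full mean $M$.
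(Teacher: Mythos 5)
Your proof is correct and follows essentially the same route as the paper's: the paper phrases the key step as the conditional-expectation identity $E(F(X)\mid X_i=x)=x+\Wha_i$ (using that independence makes $E(X_j\mid X_i=x)=E(X_j)$ for $j\neq i$), while you derive the identical identity by explicitly factorising the joint distribution and splitting $F$ into the own-locus term $x$ and the complementary sum. The two are the same computation in different notation, so nothing further is needed.
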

\begin{proof}
First, let us observe that $E(F(X)\ | \ X_i=x) = x+\Wha_i$:  
\[
E(F(X)\ | \ X_i=x) \ =\  \sum_{j=1}^{\ell}E(X_j| X_i=x)\ =\  x+ \sum_{j\neq i}E(X_j)\ =\  x+ \sum_{j\neq i}W_j.
\]
For $x\in \ZZ$ and $\bfy\in \ZZ^{\ell-1}$ let $x\  {\widehat{\ }_i} \bfy$  be the vector of length $\ell$ with $x$ as the $i$-coordinate and with all other coordinates given by $\bfy$ in corresponding order.  Second, we calculate $\phi^*_i(x)$:
\begin{eqnarray*}
\phi^*_i(x)  		&=&		\sum_{\bfy\in\ZZ^{\ell-1}} \bfphi^*(x\  {\widehat{\ }_i} \bfy)  \ \  = \ \   ( 1/M)\	\sum_{\bfy\in\ZZ^{\ell-1}} F(x\  {\widehat{\ }_i} \bfy)\bfphi(x\  {\widehat{\ }_i} \bfy) \\
			&=&	(1/M)\	\phi_i(x)\ E(F(X)\ | \ X_i=x). 
\end{eqnarray*}
Putting these equations together, we get the result of the lemma.
\end{proof}

The next lemma shows the effect of selection on the fitness and variance at a single locus.

\begin{lemma}
If selection acts on a population at linkage equilibrium, the effect  on fitness and variance at locus $i$ is given by:
\begin{eqnarray*}
W^*_i-W_i 		&=&		 \frac{V_i}{M},   		\\
V_i^*-V_i			&=&		 \frac{\kappa_{3,i}}{M}-\left(\frac{V_i}{M} \right)^2.			\\
\end{eqnarray*}
\end{lemma}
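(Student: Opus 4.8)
The plan is to argue exactly as in the preceding population-level lemma, but starting from the explicit post-selection distribution at a single locus supplied by Lemma \ref{lemma: selection 1 locus}. That lemma gives $\phi^*_i(x) = \frac{1}{M}(x+\Wha_i)\phi_i(x)$, so everything reduces to reading off the first and second moments of $\phi^*_i$ in terms of the moments of $\phi_i$. No information about the joint distribution is needed, which is what makes the single-locus analysis clean.

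For the first identity I would compute $W^*_i = \sum_x x\,\phi^*_i(x) = \frac{1}{M}\sum_x x(x+\Wha_i)\phi_i(x)$. Expanding splits this into a second raw moment and a first raw moment of $\phi_i$; substituting $\sum_x x^2\phi_i(x) = V_i + W_i^2$ and $\sum_x x\,\phi_i(x)=W_i$ gives $W^*_i = \frac{1}{M}(V_i + W_i^2 + \Wha_i W_i)$. The crucial simplification is the identity $W_i + \Wha_i = M$ (which is just the definition of $\Wha_i$), collapsing $W_i^2 + \Wha_i W_i = W_i M$ and yielding $W^*_i - W_i = V_i/M$.

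For the variance I would first obtain the second raw moment of $\phi^*_i$, namely $\frac{1}{M}\sum_x x^2(x+\Wha_i)\phi_i(x)$, which requires the third raw moment of $\phi_i$. I would supply this via the central-moment expansion $\kappa_{3,i} = \sum_x x^3\phi_i(x) - 3W_iV_i - W_i^3$ (the single-locus analogue of the expansion used in the earlier lemma), together with $\sum_x x^2\phi_i(x) = V_i + W_i^2$. Then $V^*_i = (\text{second raw moment of }\phi^*_i) - (W^*_i)^2$, into which I insert $W^*_i = W_i + V_i/M$ from the first part. After substituting $\Wha_i = M - W_i$ throughout, the terms in $W_i$ and the cross terms reorganise so that everything except $\kappa_{3,i}/M$, $V_i$ and $-V_i^2/M^2$ cancels, giving $V^*_i - V_i = \kappa_{3,i}/M - (V_i/M)^2$.

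There is no genuine obstacle here beyond careful bookkeeping: the argument is a near-verbatim copy of the whole-population computation with $V \mapsto V_i$, $\kappa_3 \mapsto \kappa_{3,i}$ and $M$ left unchanged. The one point worth emphasising is that both the normalising factor and the selective shift are governed by the \emph{global} mean $M = W_i + \Wha_i$ rather than by the local mean $W_i$; this is precisely why the per-locus increments mirror the global formulae with $M$ appearing in every denominator, and it is the identity $W_i + \Wha_i = M$ that makes each simplification go through.
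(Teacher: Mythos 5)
Your proposal is correct and follows essentially the same route as the paper: both start from the single-locus selection formula $\phi^*_i(x) = \frac{1}{M}(x+\Wha_i)\phi_i(x)$, compute the first and second raw moments of $\phi^*_i$, convert via the raw/central moment expansions $\sum_x x^2\phi_i(x)=V_i+W_i^2$ and $\sum_x x^3\phi_i(x)=\kappa_{3,i}+3V_iW_i+W_i^3$, and close the computation with the identity $W_i+\Wha_i=M$. The only difference is cosmetic bookkeeping (you solve for $V^*_i$ directly, whereas the paper equates two expansions and rearranges), so there is nothing further to add.
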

\begin{proof}
For the first equation:
\begin{eqnarray*}
W^*_i   	&=&		\sum_xx\phi^*_i(x)		\\
			&=&	(1/M)\		\sum_xx(x+\Wha_i)\phi_i(x) 	\\
			&=&	(1/M)\   	\left(\sum_x x^2\phi_i(x) 	+\Wha_i\sum_xx\phi_i(x)\right)\\
			&=&	(1/M)\   	\left((V_i+W_i^2)  +\Wha_i W_i\right)    \\
			&=&	(1/M)\   	(V_i+M W_i). 
\end{eqnarray*}
This establishes the first equation of the lemma.

For the second equation, let $\Vati_i$ be the {\em  second moment about the origin of $\phi_i$}, that is, $\Vati_i=\sum_{x}x^2\phi_i(x)$, and analogously for $\phi^*_i$.
Let $\kati_{3,i}$ be the {\em third moment about the origin of $\phi_i$}, that is, $\kati_{3,i}=\sum_{x}x^3\phi_i(x)$.
Then
\begin{eqnarray*}
\Vati^*   	&=&		\sum_xx^2\phi^*_i(x)		\\
			&=&	(1/M)\		\sum_xx^2\left(x+\Wha_i\right)\phi_i(x)			\\
			&=&(1/M)\   	\left(\sum_x x^3\phi_i(x) 	+\Wha_i\sum_xx^2\phi_i(x)\right)\\
			&=&	(1/M)\  	\left(\kati_{3,i}  +\Wha_i\Vati_i\right).
\end{eqnarray*}
Now, using the developments of the moments about the origin in terms of the central moments we get: 
\begin{eqnarray*}
\Vati^*_i		&=&	 V^*_i+{W^*_i}^2 = V^*_i+W_i^2 + 2V_iW_i/M+ (V_i/M)^2,\\
\kati_{3,i}  &=&  \kappa_{3,i}+3 V_i W_i+ W_i^3,\\
\Vati_i		&=& V_i+W_i^2.
\end{eqnarray*}
The equation above then becomes
\[
V^*_i+W_i^2 + 2V_i \frac{W_i}{M}+ \left(\frac{V_i}{M}\right)^2  = \frac{\kappa_{3,i}}{M}+3 V_i \frac{W_i}{M}+ \frac{W_i^3}{M} + \frac{\Wha_i}{M}(V_i+W_i^2),
\]
which we can re-arrange as
\[
V^*_i + \left(2V_i \frac{W_i}{M} - 3 V_i \frac{W_i}{M} - V_i\frac{\Wha_i}{M}\right) = \frac{\kappa_{3,i}}{M} - \left(\frac{V_i}{M}\right)^2  + \left(\frac{W_i^3}{M}+\frac{\Wha_i}{M}W_i^2 - W_i^2\right).
\]
To finish the proof of the lemma one only has to observe that $\left(2V_i \frac{W_i}{M} - 3 V_i \frac{W_i}{M} - V_i\frac{\Wha_i}{M}\right)=-V_i$ and that $\left(\frac{W_i^3}{M}+\frac{\Wha_i}{M}W_i^2 - W_i^2\right)=0$.
\end{proof}

We now continue with the proof of Theorem \ref{thm: effect rec}.
Using that $LD_2 = V-\sum_i V_i$, we get: 

\begin{eqnarray*}
LD_2(\bfphi^*)-LD_2(\bfphi) 	&=& (V^*-V) - (\sum_i V^*_i-V_i) \\
						&=&	 \left(\frac{\kappa_3}{M}-\left(\frac{V}{M} \right)^2\right) - \sum_i \left(\frac{\kappa_{3,i}}{M}-\left(\frac{V_i}{M} \right)^2\right) \\
						&=& \frac{LD_3}{M} + \frac{(\sum_{i} V_i^2) - V^2 }{M^2} \\ 
						&=&  -\frac{\sum_{i\neq j} V_iV_j}{M^2}
\end{eqnarray*}

\noindent The last equality follows since $LD_3=0$ for a population at linkage equilibrium.

\subsection{The ordering on distributions}   \label{ss: theorem 2}

This subsection is dedicated to proving some basic combinatorial lemmas  which are required for the proof of Theorem \ref{thm: rec positive}.
Our new key notion is the ordering $\preceq$ among probability distributions on $\ZZ$, which will be useful throughout the rest of the paper. We made the assumption earlier that all cumulants of  populations are finite. It is similarly to be assumed that all cumulants of distributions discussed in this section are finite. 

\begin{definition} \label{preqdef}
Given two distributions $\psi_1$ and $\psi_2\colon\ZZ\to\RR^{\geq 0}$, we define: 
\[
\psi_2\preceq \psi_1
\quad\iff\quad
(\forall b_1<b_2\in \ZZ)\ \ 
\psi_1(b_1)\psi_2(b_2) \leq \psi_1(b_2)\psi_2(b_1).
\]
We let $\psi_2\prec \psi_1$ if, in addition,  there exist $b_1<b_2\in \ZZ$ with $\psi_1(b_1)\psi_2(b_2) < \psi_1(b_2)\psi_2(b_1)$. 
\end{definition}

To give some intuition for the meaning of  $\preceq$, let us remark that if $\psi_1$ and $\psi_2$ are non-zero on an interval $[A,B]$, and zero elsewhere, then:   
\[
\psi_2\preceq \psi_1
\quad\iff\quad
(\forall b\in\ZZ \mbox{ with }A\leq b<B)\ \ 
\frac{\psi_2(b+1)}{\psi_2(b)} \leq \frac{\psi_1(b+1)}{\psi_1(b)}.
\]

If $\psi_2 \preceq \psi_1$, this gives a lot of information about the supports of $\psi_1$ and $\psi_2$ (i.e.\ those $x$ for which $\psi_1(x)\neq 0$ or $\psi_2(x)\neq 0$). If $x$ is in the support of $\psi_1$, then for any $y>x$ in the support of $\psi_2$, $y$ must also be in the support of $\psi_1$. Similarly, if $x$ is in the support of $\psi_2$, then for any $y<x$ in the support of $\psi_1$, $y$ must also be in the support of $\psi_2$. We can therefore find disjoint (possibly empty) sets $\Pi_1,\Pi_2$ and $\Pi_3$ such that the support of $\psi_2$ is $\Pi_1\cup\Pi_2$, the support of $\psi_1$ is $\Pi_2\cup \Pi_3$, and all the elements of $\Pi_1$ are below all the elements of $\Pi_2$ which are all below all the elements of $\Pi_3$.


The main three properties of the ordering $\preceq$ are that it is preserved by mutation, it is preserved by selection, and it preserves the ordering of expected values.
The proof of Theorem \ref{thm: rec positive} in the next section will use all of these lemmas to show that $LD_2$ becomes and remains negative during an asex process initially at linkage equilibrium.

\begin{lemma} \label{mutpres} 
The orderings $\prec$ and $\preceq$ are preserved by mutation.
That is: 
\[
\psi_2\preceq \psi_1
\quad\implies\quad
\Mut(\psi_2)\preceq  \Mut(\psi_1).
\]
The same holds for $\prec$.
Here $\Mut$ refers to the mutation operation for $\ell=1$.
\end{lemma}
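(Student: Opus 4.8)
The plan is to use the fact that mutation at a single locus is convolution by $\mu$, together with the log-concavity of $\mu$. Writing $g_i := \Mut(\psi_i)$, so that $g_i(b) = \sum_{y} \psi_i(y)\,\mu(y-b)$, the goal is to show that for all $b_1 < b_2$ in $\ZZ$,
\[
g_1(b_1)g_2(b_2) \;\le\; g_1(b_2)g_2(b_1).
\]
I would form the difference $D := g_1(b_2)g_2(b_1) - g_1(b_1)g_2(b_2)$ and expand each product as a double sum over indices $(m,n)$, so that
\[
D = \sum_{m,n}\psi_1(m)\psi_2(n)\Big[\mu(m-b_2)\mu(n-b_1)-\mu(m-b_1)\mu(n-b_2)\Big].
\]

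The key device is to symmetrise this sum by collecting the term indexed by $(m,n)$ together with the term indexed by $(n,m)$. For a fixed unordered pair with $m<n$, the two contributions combine into a single product of two $2\times2$ ``determinants'',
\[
\big(\psi_1(m)\psi_2(n)-\psi_1(n)\psi_2(m)\big)\cdot\big(\mu(m-b_2)\mu(n-b_1)-\mu(m-b_1)\mu(n-b_2)\big),
\]
while the diagonal terms $m=n$ vanish. The first factor is $\le 0$ because $m<n$ and $\psi_2\preceq\psi_1$. So the whole proof reduces to showing the second factor is $\le 0$ for every $m<n$ and every $b_1<b_2$, after which $D$ is a sum of products of two non-positive quantities and hence $D\ge 0$.

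The heart of the matter -- and the step I expect to be the main obstacle -- is this second determinant. Setting $p=m-b_2$, $q=n-b_1$, $r=m-b_1$, $s=n-b_2$, one checks the bookkeeping $p+q=r+s$ and $p\le r,s\le q$ (with $p$ strictly the smallest and $q$ strictly the largest, since both $n>m$ and $b_2>b_1$). Thus the pair $\{p,q\}$ is a strict spreading-out of the pair $\{r,s\}$ with the same sum, and the required inequality $\mu(p)\mu(q)\le\mu(r)\mu(s)$ is exactly the statement that $\mu$ is a P\'olya frequency sequence of order two, i.e.\ log-concave with interval support. Our standing $\mu$ satisfies this: its support $\{-1,0,1\}$ is an interval, and $\mu(0)>\mu(-1)>\mu(1)>0$ gives $\mu(0)^2>\mu(-1)\mu(1)$, so $\mu$ is strictly log-concave; one must also observe that when an argument leaves $\{-1,0,1\}$ the corresponding value of $\mu$ is $0$ and the inequality degenerates harmlessly, so it holds on all of $\ZZ$. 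This establishes the $\preceq$ case, and the only property of $\mu$ actually used is its log-concavity.

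For the strict statement, given $\psi_2\prec\psi_1$ I would fix a pair $m<n$ witnessing strictness in the first determinant and then choose $b_1,b_2$ so that the $\mu$-determinant is strictly negative (arranging $\{p,q\}=\{-1,1\}$ and $\{r,s\}=\{0,0\}$ and invoking $\mu(0)^2>\mu(-1)\mu(1)$); since every pair contributes a non-negative amount to $D$, no cancellation can occur and $D>0$ for this choice of $b_1<b_2$, giving $\Mut(\psi_2)\prec\Mut(\psi_1)$. The routine but slightly fiddly points here are ensuring that such a witnessing pair can be taken at unit spacing after convolution and keeping track of the supports; these are the only places where real care is needed. As a cleaner alternative framing, I note that $\psi_2\preceq\psi_1$ says precisely that the two-row array with rows $\psi_2$ and $\psi_1$ is totally positive of order two, that convolution by a log-concave $\mu$ is multiplication by a totally-positive-of-order-two Toeplitz kernel, and that the basic composition (Cauchy--Binet) formula then yields that the product is again totally positive of order two -- which is exactly $\Mut(\psi_2)\preceq\Mut(\psi_1)$.
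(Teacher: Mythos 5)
Your proof is correct and is essentially the paper's own argument: the paper likewise expands both products into double sums, pairs each off-diagonal term $(c,d)$ with its transpose $(d,c)$, and reduces everything to the product of the two ``determinants'' $\psi_1(d)\psi_2(c)-\psi_1(c)\psi_2(d)\ge 0$ (from $\psi_2\preceq\psi_1$) and $\mu(b_2-d)\mu(b_1-c)-\mu(b_2-c)\mu(b_1-d)\ge 0$, verifying the latter by exactly the support/case analysis you sketch, whose only non-degenerate case is $\mu(0)^2>\mu(-1)\mu(1)$ --- i.e.\ your log-concavity observation in concrete form. Note only that your ``unit spacing'' worry in the strict case is unnecessary: for any witnessing pair $m<n$, choosing $b_1=m$, $b_2=n$ makes the $\mu$-determinant equal to $\mu(m-n)\mu(n-m)-\mu(0)^2$, which is strictly negative whether $n-m=1$ or $n-m>1$ (where it is $-\mu(0)^2$), and this is precisely the paper's choice of term as well.
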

\begin{proof}
We must show that for any values $b_2>b_1$: 

\begin{equation}
\sum_{d} \psi_2(d)\mu(b_2-d) \cdot \sum_{c} \psi_1(c)\mu(b_1-c)  \leq
\sum_{d}  \psi_1(d)\mu(b_2-d)  \cdot \sum_{c} \psi_2(c)\mu(b_1-c). 
\label{desire}
\end{equation} 
The r.h.s.\ can be re-expressed: 

\begin{align*} 
 \sum_{c} \Big(  \psi_1(c) \psi_2(c)\mu(b_2-c)\mu(b_1-c))  +  \sum_{d>c} \big((\psi_1(d) \psi_2(c)\mu(b_2-d)\mu(b_1-c) \\
+  \psi_1(c) \psi_2(d)\mu(b_2-c)\mu(b_1-d)\big) \Big) . 
\end{align*} 

\noindent The l.h.s.\ is: 
\begin{align*} 
 \sum_{c} \Big(  \psi_2(c) \psi_1(c)\mu(b_2-c)\mu(b_1-c)  +  \sum_{d>c} \big((\psi_2(d) \psi_1(c)\mu(b_2-d)\mu(b_1-c) \\
+  \psi_2(c) \psi_1(d)\mu(b_2-c)\mu(b_1-d)\big) \Big) . 
\end{align*}

\noindent For any given pair $(d,c)$ such that $d>c$ define: 
\[ \alpha_1=   \psi_1(c) \psi_2(d),\ \alpha_2= \psi_1(d) \psi_2(c), \] 
\[ \beta_1= \mu(b_2-c)\mu(b_1-d),\ \beta_2= \mu(b_2-d)\mu(b_1-c). \]

\noindent Now for any values $d>c$ we have $\alpha_2\geq \alpha_1$ because $\psi_2\preceq \psi_1$. 
We claim that we also have $\beta_2\geq\beta_1$: this holds because in order to have $\beta_1>0$ one requires $b_2\leq c+1$ and $d\leq b_1+1$, which can only be the case if $b_1+1=c+1=b_2=d$.
In that case $\beta_1=\mu(1)\mu(-1)$ and $\beta_2=\mu(0)\mu(0)$, and it follows that $\beta_2>\beta_1$ from our assumption that $\mu(0)>\mu(-1)>\mu(1)$.
Thus: 
\[
\alpha_2\beta_2 + \alpha_1 \beta_1 \geq \alpha_1 \beta_2 +\alpha_2 \beta_1.
\]
This establishes the inequality (\ref{desire}). 

Now suppose that $\psi_2 \prec \psi_1$, and let $b_1<b_2$ be such that $\psi_1(b_1)\psi_2(b_2) < \psi_1(b_2)\psi_2(b_1)$. Consider again the expansions of the l.h.s.\ and r.h.s.\ of (\ref{desire}). Since we have already shown that each term on the r.h.s.\ is greater than or equal to the corresponding term on the left, we need only identify one term on the right which is strictly greater than the corresponding term on the left. The reasoning above already suffices to give this strict inequality for the case $c=b_1,d=b_2$, since then $\alpha_2>\alpha_1$ and $\beta_2=\mu(0)^2>\beta_1$. 
\end{proof}

The next lemma shows that $\preceq$ is also preserved by selection.
In fact we shall prove a stronger result.
For $\ell=1$ and $W\in \RR$, we define a new form of selection, which, as we saw in Lemma \ref{lemma: selection 1 locus}, allows us to understand the effect of selection on a single locus under certain conditions.
For $\phi$ a probability distribution on $\ZZ$ and $x\in \ZZ$, we define
\[
\Sel_W(\phi) (x) = \left(\frac{1}{s}\right)(x+W)\phi(x),
\]
where $s$ is the normalising factor required to make $\Sel_W(\phi)$ a probability distribution: $s=\sum_{x\in\ZZ}(x+W)\phi(x) = E(\phi)+W$. We call a probability distribution on $\ZZ$ \emph{non-trivial} if its support consists of more than one point.

\begin{lemma} \label{lemma: Sel preserves prec}
If $W_1\leq W_2$ and $\psi_2\preceq \psi_1$, then $\Sel_{W_2}(\psi_2)\preceq \Sel_{W_1}(\psi_1)$. Furthermore, 
if $W_1<W_2$, $\psi_2\preceq \psi_1$ and at least one of $\psi_1$ and $\psi_2$ is non-trivial, then $\Sel_{W_2}(\psi_2)\prec \Sel_{W_1}(\psi_1)$.
\end{lemma}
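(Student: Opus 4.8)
The plan is to unwind the definition of $\preceq$ for the two selected distributions, cancel the positive normalisers, and reduce the whole statement to one elementary algebraic comparison together with the hypothesis $\psi_2\preceq\psi_1$. Write $s_1=E(\psi_1)+W_1$ and $s_2=E(\psi_2)+W_2$; both are positive, being the normalising sums that make $\Sel_{W_1}(\psi_1)$ and $\Sel_{W_2}(\psi_2)$ probability distributions, and the same requirement forces $x+W_1\ge 0$ on $\operatorname{supp}(\psi_1)$ and $x+W_2\ge 0$ on $\operatorname{supp}(\psi_2)$ (otherwise a selected distribution would take a negative value). Fix $b_1<b_2$. Substituting $\Sel_W(\psi)(b)=\tfrac1s(b+W)\psi(b)$ into the defining inequality for $\Sel_{W_2}(\psi_2)\preceq\Sel_{W_1}(\psi_1)$ and clearing the common factor $1/(s_1s_2)>0$, the goal becomes
\[
(b_1+W_1)(b_2+W_2)\,\psi_1(b_1)\psi_2(b_2)\ \le\ (b_2+W_1)(b_1+W_2)\,\psi_1(b_2)\psi_2(b_1).
\]

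The engine of the proof is the identity
\[
(b_2+W_1)(b_1+W_2)-(b_1+W_1)(b_2+W_2)=(W_2-W_1)(b_2-b_1)\ \ge\ 0,
\]
so, abbreviating the two coefficients by $A=(b_1+W_1)(b_2+W_2)$ and $B=(b_2+W_1)(b_1+W_2)$ and the two products by $P=\psi_1(b_1)\psi_2(b_2)$ and $Q=\psi_1(b_2)\psi_2(b_1)$, I must show $AP\le BQ$ given $B\ge A$ and (from $\psi_2\preceq\psi_1$) $P\le Q$. I would argue by a two-step chain $AP\le AQ\le BQ$, the first step multiplying $P\le Q$ by $A$ and the second multiplying $A\le B$ by $Q\ge 0$. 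The delicate point, and the main obstacle, is the sign of $A$: this chain needs $A\ge 0$. If $P>0$ then $b_1\in\operatorname{supp}(\psi_1)$ and $b_2\in\operatorname{supp}(\psi_2)$, whence $b_1+W_1\ge 0$ and $b_2+W_2\ge 0$, so $A\ge 0$ and the chain applies; if $P=0$ the left side vanishes and one only checks $BQ\ge 0$, which holds because a nonzero $Q$ forces $b_2+W_1\ge 0$ and $b_1+W_2\ge 0$. This gives the non-strict conclusion.

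For the strict statement I would use that $W_1<W_2$ makes the gap $B-A=(W_2-W_1)(b_2-b_1)$ strictly positive for every $b_1<b_2$, and produce a single witnessing pair. Using the support decomposition $\Pi_1<\Pi_2<\Pi_3$ recalled just before the lemma (with $\operatorname{supp}(\psi_2)=\Pi_1\cup\Pi_2$ and $\operatorname{supp}(\psi_1)=\Pi_2\cup\Pi_3$), I would first show that non-triviality of at least one of $\psi_1,\psi_2$ guarantees a pair $b_1<b_2$ with $b_1\in\operatorname{supp}(\psi_2)$ and $b_2\in\operatorname{supp}(\psi_1)$: were there none, every element of $\operatorname{supp}(\psi_2)$ would lie at or above every element of $\operatorname{supp}(\psi_1)$, which together with $\Pi_1<\Pi_2<\Pi_3$ collapses both supports onto a common single point, contradicting non-triviality. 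For such a pair $Q>0$, and since the factors $b_2+W_1$ and $b_1+W_2$ are positive on the supports (in the setting where $\Sel_W$ is a genuine selection, $x+W$ being the positive conditional mean fitness), one has $B>0$; writing $BQ-AP=B(Q-P)+(B-A)P$ with both summands non-negative, the term $B(Q-P)$ is positive when $P=0$ (as then $BQ>0$) while $(B-A)P$ is positive when $P>0$, so in either case $AP<BQ$, giving $\Sel_{W_2}(\psi_2)\prec\Sel_{W_1}(\psi_1)$. The only bookkeeping to watch is the at-most-one support point where a factor $x+W$ could vanish; non-triviality lets the witnessing pair avoid it.
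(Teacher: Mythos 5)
Your proof is correct and takes essentially the same route as the paper's: cancel the positive normalisers, reduce everything to comparing $(b_1+W_1)(b_2+W_2)$ against $(b_2+W_1)(b_1+W_2)$ via the identity $(b_2+W_1)(b_1+W_2)-(b_1+W_1)(b_2+W_2)=(W_2-W_1)(b_2-b_1)\geq 0$, combine with $\psi_1(b_1)\psi_2(b_2)\leq\psi_1(b_2)\psi_2(b_1)$, and for strictness exhibit a single witnessing pair $b_1\in\mathrm{supp}(\psi_2)$, $b_2\in\mathrm{supp}(\psi_1)$ obtained from the support decomposition $\Pi_1<\Pi_2<\Pi_3$. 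The only difference is that you spell out the sign conditions ($x+W\geq 0$ on the supports, needed to multiply the two inequalities, and $B>0$ for the strict case) which the paper's proof uses silently; this is added care rather than a different argument.
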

\begin{proof}
Let $\psi_1^*=\Sel_{W_1}(\psi_1)$, let $\psi_2^*=\Sel_{W_2}(\psi_2)$  and consider $b_1<b_2$.
On the one side we have
\[
\psi_1^*(b_1)\psi_2^*(b_2) = \frac{1}{s_1s_2}(b_1+W_1)(b_2+W_2)\psi_1(b_1)\psi_2(b_2),
\]
which we need to show is less than or equal to
\[
\psi_1^*(b_2)\psi_2^*(b_1) = \frac{1}{s_1s_2}(b_2+W_1)(b_1+W_2)\psi_1(b_2)\psi_2(b_1),
\]
where $s_1$ and $s_2$ are the normalising factors for $\psi_1$ and $\psi_2$.
We know that $\psi_1(b_1)\psi_2(b_2) \leq \psi_1(b_2)\psi_2(b_1)$, so it is enough to show that $(b_1+W_1)(b_2+W_2)\leq (b_2+W_1)(b_1+W_2)$.
For this, one just needs to observe that:
\[
(b_2+W_1)(b_1+W_2) - (b_1+W_1)(b_2+W_2)  = (b_2-b_1)(W_2-W_1)\geq 0.
\]
\noindent Note that this actually suffices to show $(b_1+W_1)(b_2+W_2)< (b_2+W_1)(b_1+W_2)$ if $W_1<W_2$.

Now suppose that we also have $W_1<W_2$. The reasoning above actually suffices to show \emph{for all pairs} $b_1<b_2$ that $\psi_1^*(b_1)\psi_2^*(b_2) < \psi_1^*(b_2)\psi_2^*(b_1)$, \emph{so long as} $\psi_1(b_2)\psi_2(b_1)>0$. If at least one of $\psi_1$ and $\psi_2$ is non-trivial then there exists a pair $b_1<b_2$ with $\psi_1(b_2)\psi_2(b_1)>0$, giving $\Sel_{W_2}(\psi_2)\prec \Sel_{W_1}(\psi_1)$ as required.
\end{proof}

\begin{lemma}\label{lemma prec E}
If $\psi_2 \preceq \psi_1$, then $E(\psi_2)\leq E(\psi_1)$. Furthermore, if $\psi_2 \prec \psi_1$ then $E(\psi_2)< E(\psi_1)$.
\end{lemma}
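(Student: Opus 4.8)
The plan is to rewrite the difference $E(\psi_1)-E(\psi_2)$ as a single sum of manifestly non-negative pairwise terms, in the same spirit as the pairing argument of Lemma \ref{mutpres}. The quantity I would target is the antisymmetrized double sum
\[
S \;=\; \sum_{b_1<b_2}(b_2-b_1)\big(\psi_1(b_2)\psi_2(b_1)-\psi_1(b_1)\psi_2(b_2)\big).
\]
By Definition \ref{preqdef}, the hypothesis $\psi_2\preceq\psi_1$ says exactly that $\psi_1(b_2)\psi_2(b_1)-\psi_1(b_1)\psi_2(b_2)\geq 0$ for every $b_1<b_2$, while $b_2-b_1>0$; hence every summand is non-negative and $S\geq 0$. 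The whole content of the lemma is then to identify $S$ with $E(\psi_1)-E(\psi_2)$.

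To carry out that identification I would introduce independent random variables $Y_1\sim\psi_1$ and $Y_2\sim\psi_2$ and observe that the two halves of $S$ are tail expectations: $\sum_{b_1<b_2}(b_2-b_1)\psi_1(b_2)\psi_2(b_1)=E\big[(Y_1-Y_2)\mathbf 1_{Y_1>Y_2}\big]$ and $\sum_{b_1<b_2}(b_2-b_1)\psi_1(b_1)\psi_2(b_2)=E\big[(Y_2-Y_1)\mathbf 1_{Y_2>Y_1}\big]$. Subtracting, and using that $(Y_1-Y_2)$ vanishes on $\{Y_1=Y_2\}$, gives $S=E\big[(Y_1-Y_2)(\mathbf 1_{Y_1>Y_2}+\mathbf 1_{Y_1<Y_2})\big]=E[Y_1-Y_2]=E(\psi_1)-E(\psi_2)$ by linearity of expectation. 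Combined with $S\geq 0$ this yields $E(\psi_2)\leq E(\psi_1)$. For the strict statement, if $\psi_2\prec\psi_1$ then by definition there is a pair $b_1<b_2$ with $\psi_1(b_1)\psi_2(b_2)<\psi_1(b_2)\psi_2(b_1)$; this single summand of $S$ is then strictly positive while all others remain non-negative, so $S>0$ and hence $E(\psi_2)<E(\psi_1)$.

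The only genuine subtlety, and the step I would be most careful about, is the legitimacy of these infinite rearrangements when $\psi_1,\psi_2$ have unbounded support. This is controlled by the standing assumption (made at the start of this subsection) that all cumulants, and in particular the means, are finite: each of the two halves of $S$ is a sum of non-negative terms bounded by $E|Y_1|+E|Y_2|<\infty$, so both are finite, the splitting of $S$ into them is valid, and the passage to $E[Y_1-Y_2]$ is justified. Once this is noted the argument is complete; it is essentially the likelihood-ratio-implies-mean-ordering fact, and an alternative route would be to first derive first-order stochastic dominance by summing the defining inequality over all $b_1\leq t<b_2$ to get $\sum_{b\leq t}\psi_1(b)\leq\sum_{b\leq t}\psi_2(b)$ for every $t$, then integrate. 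I prefer the pairing identity above, however, since it produces the non-strict and strict conclusions simultaneously and matches the combinatorial style of the surrounding lemmas.
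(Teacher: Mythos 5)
Your proof is correct, but it takes a genuinely different route from the paper's. The paper proves the lemma by interpolation: it builds a chain of distributions $\psi_2=\varphi_A\preceq\varphi_{A+1}\preceq\cdots\preceq\varphi_B=\psi_1$, adjusting one likelihood ratio $\varphi_i(b+1)/\varphi_i(b)$ at a time (solving a small $2\times 2$ system for the rescaling constants $c,d$) and checking that the mean is monotone along the chain; because the chain construction depends on the shape of the supports, the paper's argument splits into six cases governed by the decomposition $\Pi_1,\Pi_2,\Pi_3$ of the supports, with limit arguments $E(\psi_1)=\lim_i E(\varphi_i)$ needed when the common support is unbounded above, below, or both. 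Your antisymmetrization identity $S=\sum_{b_1<b_2}(b_2-b_1)\bigl(\psi_1(b_2)\psi_2(b_1)-\psi_1(b_1)\psi_2(b_2)\bigr)=E(\psi_1)-E(\psi_2)$ replaces all of this with a single computation: termwise non-negativity is literally Definition \ref{preqdef}, the identification with the difference of means follows from splitting $E[Y_1-Y_2]$ over the events $\{Y_1>Y_2\}$, $\{Y_1<Y_2\}$, $\{Y_1=Y_2\}$, and the strict statement drops out of the same display since a $\prec$-witnessing pair contributes a strictly positive summand. This is essentially the standard argument that the likelihood-ratio order implies the ordering of means. What your approach buys is uniformity and rigor at lower cost: no case analysis on supports, no interpolating sequences, and the only analytic issue (rearranging infinite sums) is settled by the one-line observation that both halves of $S$ are bounded by $E|Y_1|+E|Y_2|<\infty$ under the paper's standing finiteness assumption --- whereas the paper's Cases 3--5 pass expectations through limits of distributions with only brief justification. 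What the paper's approach buys is a more constructive picture of how $\preceq$ deforms one distribution into the other, but that machinery is not reused elsewhere, so your shorter argument would serve the paper equally well.
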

\begin{proof}
The proof is divided into various cases depending on the supports of $\psi_1$ and $\psi_2$. Let $\Pi_1,\Pi_2$ and $\Pi_3$ be as defined subsequent to Definition \ref{preqdef}. 

Case 1: The support of both $\psi_1$ and $\psi_2$ is a finite interval $[A,B]$ (so $\Pi_1=\Pi_3=\emptyset$ and $\Pi_2=[A,B]$).
This is the simplest case, but gives the principal idea for the entire proof. We will define probability density functions $\varphi_i$ for $i\in [A,B]$, with $\psi_2=\varphi_A \preceq \varphi_{A+1}\preceq\cdots\preceq \varphi_B=\psi_1$, and $E(\varphi_i)\leq E(\varphi_{i+1})$ for all $i\in [A,B)$.
Each $\varphi_i$ will satisfy:
\[
(\forall b\in[A,i))\ \ 
\frac{\varphi_i(b+1)}{\varphi_i(b)} = \frac{\psi_1(b+1)}{\psi_1(b)}
\quad\mbox{ and }\quad
(\forall b\in[i,B))\ \ 
\frac{\varphi_i(b+1)}{\varphi_i(b)} = \frac{\psi_2(b+1)}{\psi_2(b)}.
\]
Suppose we have already defined $\varphi_i$ and we want to define $\varphi_{i+1}$.
We need to change the value of $\frac{\varphi_i(i+1)}{\varphi_i(i)}$ from $\frac{\psi_2(i+1)}{\psi_2(i)}$ to $\frac{\psi_1(i+1)}{\psi_1(i)}$ without changing any of the other fractions.
For this, we need to find values $c,d$ such that defining $\varphi_{i+1}(b) = c\ \varphi_i(b)$ for  $b\leq i$ and $\varphi_{i+1}(b) = d\ \varphi_i(b)$ for $b> i$ gives the required probability density function.
To find such $c$ and $d$ all one needs to do is to solve the following equation:
\begin{eqnarray*}
c S+ d (1-S) 	&=&	 	1\\
d \ \psi_1(i)\psi_2(i+1) &=& c\  \psi_1(i+1)\psi_2(i),
\end{eqnarray*} 
where $S=\sum_{j=A}^i\varphi_i(j)$.
Since $\frac{\psi_2(i+1)}{\psi_2(i)}\leq \frac{\psi_1(i+1)}{\psi_1(i)}$, we  have $c\leq 1\leq d$, and if $\frac{\psi_2(i+1)}{\psi_2(i)}< \frac{\psi_1(i+1)}{\psi_1(i)}$ then $c<1<d$.
Since we are increasing the values of $\varphi_i(b)$ for  $b>i$ and decreasing them for $b\leq i$, it is not hard to see that $E(\varphi_i)\leq E(\varphi_{i+1})$, and that if $\frac{\psi_2(i+1)}{\psi_2(i)}< \frac{\psi_1(i+1)}{\psi_1(i)}$, then $E(\varphi_i)< E(\varphi_{i+1})$.
This finishes the construction of the $\varphi_i$s and the proof for the case where the support of $\psi_1$ and $\psi_2$ is $[A,B]$.

Case 2: The support of $\psi_1$ and $\psi_2$ is not an interval, but it still finite and equal for both functions.
The proof above works almost the same way, except that one has to skip the values not in the support.

Case 3: The supports of $\psi_1$ and $\psi_2$ are equal, but while $\Pi_2$ is bounded below it is not bounded above. 
One runs the same proof, but now constructs an infinite sequence  $\psi_2=\varphi_A \preceq \varphi_{A+1}\preceq\cdots$. Ultimately $\psi_1$ is the limit of this sequence, i.e.\ for all $b$, $\psi_1(b)=\mbox{lim}_i \varphi_i(b)$. Since we have assumed that $\psi_1$ and $\psi_2$ have finite means, it follows that $E(\psi_1)=\mbox{lim}_{i\rightarrow \infty}  E(\varphi_i)$. 

Case 4: The supports of $\psi_1$ and $\psi_2$ are equal, but while $\Pi_2$ is bounded above it is not bounded below. 
One runs the same proof, but now constructs an infinite sequence  $\psi_2=\varphi_B \preceq \varphi_{B-1}\preceq\cdots$, such that  $E(\varphi_i)\leq E(\varphi_{i-1})$ for all $i\leq B$. Again we have $\psi_1$ as the limit of this sequence and  $E(\psi_1)=\mbox{lim}_{i\rightarrow \infty}  E(\varphi_i)$.

Case 5: The supports of $\psi_1$ and $\psi_2$ are equal, and  $\Pi_2$ neither bounded above nor bounded below. 
One runs almost the same proof, but now in two stages. Choosing $A\in \Pi_2$, we first construct an infinite sequence  $\psi_2=\varphi_A \preceq \varphi_{A+1}\preceq\cdots$ which has the intermediate value $\psi_3$ as limit. One then constructs an infinite sequence $\psi_3=\varphi_A' \preceq \varphi_{A-1}'\preceq\cdots$ with $\psi_1$ as limit. 

Case 6: At least one of $\Pi_1$ or $\Pi_3$ is non-empty. If $\Pi_2$ is empty then it immediately follows that $E(\psi_2)<E(\psi_1)$, so suppose this does not hold. Let $\psi_1^{\ast}$ be the probability density function formed from $\psi_1$ by restricting the support to $\Pi_2$ (and normalising as appropriate), and form $\psi_2^{\ast}$ similarly. If $\Pi_1$ is non-empty then we have: 

\[ E(\psi_2)<E(\psi_2^{\ast})\leq E(\psi_1^{\ast})\leq E(\psi_1). \] 
\noindent If $\Pi_3$ is non-empty then we have:

\[ E(\psi_2)\leq E(\psi_2^{\ast})\leq E(\psi_1^{\ast}) <  E(\psi_1). \]  
\end{proof}

\subsection{The properties $(\dagger)$ and $(\dagger \dagger)$}   \label{ss: theorem 2}

This section is dedicated to proving  Theorem \ref{thm: rec positive}, which asserts that $LD_2$ stays negative throughout the process, independent of what operations are applied and in which order, except when recombination has just been applied in which case $LD_2=0$.
This is for the $\ZZ$-model, and assuming truncation is applied after each application of mutation and recombination (or at least before any application of selection).

We previously described a distribution on $\ZZ$ as non-trivial if there exists more than more point in the support. We shall refer to a population $\bfphi$ (on $\ZZ^{\ell}$ for $\ell >1$) as non-trivial if at least two of the marginal distributions $\phi_i$ are non-trivial (and providing this remains the case when truncation is applied to $\bfphi$). 

The key idea behind the proof of Theorem \ref{thm: rec positive} is to consider a property, $(\dagger)$, which suffices to ensure that $LD_2$ is non-positive. We will also consider a strengthening of $(\dagger)$, which we call $(\dagger\dagger)$ and which ensures that $LD_2$ is negative.
To prove Theorem \ref{thm: rec positive} we then use induction on the generations and show that the property $(\dagger)$ is satisfied at each step of the process, and that, in fact, the stronger property $(\dagger\dagger)$ is also satisfied from some early stage onwards (the first stage after which $\Sel$ is applied to a non-trivial population).

\begin{definition} \label{dagstuff} Given a probability distribution $\psi\colon \ZZ^2\to\RR^{\geq 0}$ and $a\in \ZZ$, we shall say that $\psi^a$ is defined if $\psi(a)=\sum_{b} \psi (a,b) \neq 0$. In this case $\psi^a$ is the distribution given by $\psi^a(b)=\psi(a,b)/\psi(a)$. 

We say $\psi$ satisfies $(\dagger)$ if for every $a_1<a_2$ such that $\psi^{a_1}$ and $\psi^{a_2}$ are defined,  $\psi^{a_2}\preceq \psi^{a_1}$. 
We say that $\psi$ satisfies $(\dagger\dagger)$ if, in addition, there exist $a_1<a_2$ such that $\psi^{a_1}$ and $\psi^{a_2}$ are defined and $\psi^{a_2}\prec \psi^{a_1}$ as witnessed by a pair $b_1<b_2$ with  $a_1+b_1>0$.
\end{definition}

To define $(\dagger)$ for a population with $\ell>2$, we need to consider each locus compared to the rest of the loci altogether. 
For  $i\neq j\in\{1,...,\ell\}$, let $F_i(X)= X_1+\cdots + X_{i-1}+X_{i+1}+\cdots+\X_\ell = F(X)-X_i$.
Given a population $\bfphi$, let $\phiha_{i}$ be the distribution corresponding to the random variable $(X_i,F_i(X))$.
Equivalently: 
\[
\phiha_{i}(a,b)=\sum_{\substack{\bfx\in \ZZ^\ell,\\ x_i=a,\ F(\bfx)=a+b}}\bfphi(\bfx).
\]

\begin{definition}
A population $\bfphi$ satisfies $(\dagger)$ if $\phiha_i$ does for every $i=1,...,\ell$. 
A population $\bfphi$ satisfies $(\dagger\dagger)$ if there exists $i$ such that $\phiha_i$ satisfies $(\dagger\dagger)$.
\end{definition}

The next step is to prove that $(\dagger)$ and $(\dagger\dagger)$ are preserved through the operations.
Recall that we are assuming the process starts at linkage equilibrium. 
Note that if $\bfphi$ is at linkage equilibrium, then $(\dagger)$ holds -- in that case we have equality between the left-hand side and the right-hand side in the definition of the $\preceq$ relation.

\begin{lemma}\label{lemma dagger sel}
For $\ell\geq 2$, if $\bfphi$ satisfies $(\dagger)$, then $\Sel(\bfphi)$ satisfies $(\dagger)$, and if $\bfphi$ is non-trivial then $\Sel(\bfphi)$ satisfies $(\dagger\dagger)$.
\end{lemma}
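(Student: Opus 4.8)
The plan is to push selection down to the one-dimensional conditional distributions $\phiha_i^a$ and then quote Lemma~\ref{lemma: Sel preserves prec}. The key computation is that, on these conditionals, selection on the whole population is exactly the single-locus operator $\Sel_a$. Since every $\bfx$ contributing to $\phiha_i(a,b)$ has total fitness $a+b$,
\[
\widehat{\Sel(\bfphi)}_i(a,b)=\sum_{\substack{\bfx\in\ZZ^\ell,\ x_i=a,\\ F(\bfx)=a+b}}\frac{F(\bfx)}{M}\bfphi(\bfx)=\frac{a+b}{M}\,\phiha_i(a,b),
\]
and dividing by $\widehat{\Sel(\bfphi)}_i(a)=\tfrac1M\sum_b(a+b)\phiha_i(a,b)$ and cancelling $\phiha_i(a)$ yields
\[
\big(\widehat{\Sel(\bfphi)}_i\big)^a(b)=\frac{(a+b)\,\phiha_i^a(b)}{a+E(\phiha_i^a)}=\Sel_a\big(\phiha_i^a\big)(b).
\]
Thus conditioning the selected population at locus $i$ on the value $a$ applies $\Sel_a$ to $\phiha_i^a$, with selection strength $W=a$ increasing in $a$. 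Because we are in the $\ZZ$-model, $a+b=F>0$ on the support, so $\widehat{\Sel(\bfphi)}_i(a)\neq0$ iff $\phiha_i(a)\neq0$; hence $(\widehat{\Sel(\bfphi)}_i)^a$ is defined exactly when $\phiha_i^a$ is.

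Given this, $(\dagger)$ is immediate. Fix $i$ and $a_1<a_2$ with both conditionals defined. From $(\dagger)$ for $\bfphi$ we have $\phiha_i^{a_2}\preceq\phiha_i^{a_1}$, and $a_1<a_2$ gives selection strengths $W_1=a_1\le a_2=W_2$, so the first part of Lemma~\ref{lemma: Sel preserves prec} yields $\Sel_{a_2}(\phiha_i^{a_2})\preceq\Sel_{a_1}(\phiha_i^{a_1})$, that is $(\widehat{\Sel(\bfphi)}_i)^{a_2}\preceq(\widehat{\Sel(\bfphi)}_i)^{a_1}$. As $i$ and the pair were arbitrary, $\Sel(\bfphi)$ satisfies $(\dagger)$.

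For $(\dagger\dagger)$ I would first locate a usable locus. If $F_i$ were constant for every $i$, then summing $F_i=F-X_i$ over $i$ gives $\sum_i F_i=(\ell-1)F$; as the left side is then constant, so is $F$, and hence each $X_i=F-F_i$, making $\bfphi$ a single point and contradicting non-triviality. So some locus $i$ has $F_i$ non-constant, whence the conditionals $\phiha_i^a$ are not all a single common point mass: either some $\phiha_i^a$ is non-trivial, or two of them are distinct point masses. Picking $a_1<a_2$ accordingly and now using the strict inequality $W_1=a_1<a_2=W_2$, the second part of Lemma~\ref{lemma: Sel preserves prec} (or, when both conditionals are distinct point masses, the elementary fact that $\Sel_a$ fixes a point mass) gives $(\widehat{\Sel(\bfphi)}_i)^{a_2}\prec(\widehat{\Sel(\bfphi)}_i)^{a_1}$, witnessed by \emph{every} pair $b_1<b_2$ with $\phiha_i^{a_1}(b_2)>0$ and $\phiha_i^{a_2}(b_1)>0$.

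The remaining, and delicate, point is to choose such a witness with $a_1+b_1>0$, as Definition~\ref{dagstuff} demands; this is where the $\ZZ$-model is essential. A witness only forces $a_2+b_1>0$, and the gap $a_2-a_1\ge1$ can drag $a_1+b_1$ down to $0$ or below. The clean way around this is to take $b_1$ in the \emph{overlap} of the two supports: writing $\operatorname{supp}\phiha_i^{a_2}=\Pi_1\cup\Pi_2$ and $\operatorname{supp}\phiha_i^{a_1}=\Pi_2\cup\Pi_3$ as in the discussion after Definition~\ref{preqdef}, any $b_1\in\Pi_2$ lies in $\operatorname{supp}\phiha_i^{a_1}$, so $(a_1,b_1)$ is a genuine individual and $a_1+b_1>0$ is automatic; one then needs a strictly larger $b_2\in\Pi_2\cup\Pi_3$. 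This succeeds whenever $\Pi_2$ has at least two points, or one point together with $\Pi_3\neq\emptyset$. The genuinely delicate cases are those where the two conditional supports barely meet ($\Pi_2$ small and the only eligible $b_1$ forced into $\Pi_1$); handling these, by re-choosing the locus $i$ or the pair $a_1<a_2$ among the non-trivial loci and analysing the $\Pi_1,\Pi_2,\Pi_3$ structure, is the main obstacle of the proof.
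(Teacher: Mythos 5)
Your first two steps --- the identity $(\phiha_i^*)^a=\Sel_a((\phiha_i)^a)$ and the deduction of $(\dagger)$ from the first part of Lemma~\ref{lemma: Sel preserves prec} --- are exactly the paper's own proof, and that half is complete and correct. The gap is the $(\dagger\dagger)$ half, and you name it yourself: you never actually produce a witness pair $b_1<b_2$ with $a_1+b_1>0$, but only describe the favourable configurations (the overlap $\Pi_2$ containing two points, or one point together with $\Pi_3\neq\emptyset$) and then defer ``the genuinely delicate cases'' to an unspecified re-choice of locus. A proof whose final sentence declares its main obstacle unresolved does not establish the second assertion. (A smaller soft spot: the locus you locate via ``$F_i$ non-constant'' may have a trivial marginal $\phi_i$, in which case no pair $a_1<a_2$ with both conditionals defined exists at that locus; you need a locus where $\phi_i$ itself is non-trivial.)

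That said, you have put your finger on a real defect rather than an artifact of your route: the paper's proof is silent on precisely the two points you isolate --- it invokes the strict part of Lemma~\ref{lemma: Sel preserves prec} without checking that one of $(\phiha_i)^{a_1},(\phiha_i)^{a_2}$ is non-trivial, and it never addresses the side condition $a_1+b_1>0$ demanded by Definition~\ref{dagstuff}. In fact, under the stated hypotheses the conclusion can fail. Take $\ell=2$ and $\bfphi$ supported on the two points $(10,-5)$ and $(-5,10)$, each with probability $1/2$. Both marginals are non-trivial and both points have fitness $5>0$, so $\bfphi$ is a non-trivial population of the truncated $\ZZ$-model; $(\dagger)$ holds because every conditional $(\phiha_i)^a$ is a point mass and the left-hand side of each $\preceq$-inequality vanishes identically; and selection fixes $\bfphi$ since the two fitnesses are equal. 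Yet at either locus the only values are $a_1=-5<a_2=10$, the conditionals of $\Sel(\bfphi)$ are the point masses at $10$ and at $-5$ respectively, and the unique pair witnessing $(\phiha_i^*)^{a_2}\prec(\phiha_i^*)^{a_1}$ is $b_1=-5<b_2=10$, for which $a_1+b_1=-10\leq 0$. So $\Sel(\bfphi)$ fails $(\dagger\dagger)$, and the lemma is false as literally stated. Any correct argument must therefore use more than ``$(\dagger)$ plus non-trivial'': in the process where the lemma is applied, $\Sel$ acts on a population of the form $\Truncation(\Mut(\cdot))$, and after mutation the support of $\phiha_i$ contains, around any surviving point $(a,b)$ with $a+b>0$, the whole block $\{a,a+1\}\times\{b,b+1\}$; taking $a_1=a$, $a_2=a+1$, $b_1=b$, $b_2=b+1$ then yields a witness with $a_1+b_1=a+b>0$ immediately. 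Your $\Pi_2$-overlap strategy is the right kind of repair, but it cannot be closed from the lemma's stated hypotheses alone.
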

\begin{proof}
Let $\bfphi^*=\Sel(\bfphi)$.
Fix $i\in\{1,...,\ell\}$.
A similar argument to that of Lemma \ref{lemma: selection 1 locus} shows that $\phiha_i^*(a,b)=(1/M)\left(a+b\right)\phiha_i(a,b)$.
Thus, $(\phiha_i^*)^a = \Sel_a((\phiha_i)^{a}).$
Suppose $a_1<a_2$ are such that $(\phiha_i^*)^{a_1}$ and $(\phiha_i^*)^{a_2}$ are both defined.
It follows from Lemma \ref{lemma: Sel preserves prec} that, since  $(\phiha_i)^{a_2}\preceq (\phiha_i)^{a_1}$, we have $(\phiha_i^*)^{a_2}\preceq (\phiha_i^*)^{a_1}$. If $\bfphi$ is non-trivial it also follows from Lemma \ref{lemma: Sel preserves prec} that $(\phiha_i^*)^{a_2}\prec (\phiha_i^*)^{a_1}$. 
Thus $\phiha_i^*$ satisfies $(\dagger)$ and also satisfies $(\dagger\dagger)$ if $\bfphi$ is non-trivial, as required.
\end{proof}

\begin{lemma}\label{lemma dagger mut}
Both $(\dagger)$ and $(\dagger\dagger)$ are preserved by mutation.
\end{lemma}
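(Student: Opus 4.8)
The plan is to recast $(\dagger)$ as a symmetric two-dimensional positivity condition and then reduce the whole statement, for each locus separately, to the already-established Lemma~\ref{mutpres}. Clearing the positive denominators in the definition, $\phiha_i$ satisfies $(\dagger)$ exactly when
\[
\phiha_i(a_1,b_1)\,\phiha_i(a_2,b_2)\ \le\ \phiha_i(a_1,b_2)\,\phiha_i(a_2,b_1)
\qquad\text{for all } a_1<a_2,\ b_1<b_2,
\]
the cases with a vanishing marginal being trivially true. This inequality is symmetric in the two coordinates, so it admits two equivalent slice-readings: that the first-coordinate slices satisfy $\phiha_i(\cdot,b_2)\preceq\phiha_i(\cdot,b_1)$ for all $b_1<b_2$, and that the second-coordinate slices satisfy $\phiha_i(a_2,\cdot)\preceq\phiha_i(a_1,\cdot)$ for all $a_1<a_2$. (Here $\preceq$ is applied to the unnormalised slices; as $\preceq$, $\prec$ and $\Mut$ are all compatible with positive scaling, this stays within the scope of Lemma~\ref{mutpres}.) The second ingredient is that mutation acts on $\phiha_i$ as an independent coordinate-wise noise: writing $X'=\Mut(X)$ one has $X_i'=X_i+C_i$ and $F_i(X')=F_i(X)+\sum_{j\neq i}C_j$, so $\widehat{\Mut(\bfphi)}_i$ is obtained from $\phiha_i$ by convolving the first coordinate with $\mu$ and the second with $\rho:=\mu^{\ast(\ell-1)}$; these two single-coordinate convolutions are independent and hence commute. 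Since a population satisfies $(\dagger)$ iff every $\phiha_i$ does, it suffices to treat one locus and one planar mutation.

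For the preservation of $(\dagger)$ I would carry out the two convolutions one at a time. Convolving the first coordinate by $\mu$ sends each slice $\phiha_i(\cdot,b)$ to $\Mut(\phiha_i(\cdot,b))$, so by Lemma~\ref{mutpres} the relations $\phiha_i(\cdot,b_2)\preceq\phiha_i(\cdot,b_1)$ (the first reading) are preserved. Convolving the second coordinate by $\rho=\mu^{\ast(\ell-1)}$ amounts to $\ell-1$ successive applications of the $\mu$-mutation, so iterating Lemma~\ref{mutpres} preserves the relations $\phiha_i(a_2,\cdot)\preceq\phiha_i(a_1,\cdot)$ (the second reading). Either reading alone certifies $(\dagger)$ for the mutated distribution, giving the first assertion.

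The $(\dagger\dagger)$ statement is where the real work lies, and the crucial observation, read off from the proof of Lemma~\ref{mutpres}, is that that lemma preserves not merely $\prec$ but the very index pair witnessing it: the strict inequality is extracted at the output pair $(b_1,b_2)$ from the $\mu(0)^2$ (``no mutation'') term, so $\psi_2\prec\psi_1$ witnessed by $(b_1,b_2)$ yields $\Mut(\psi_2)\prec\Mut(\psi_1)$ witnessed by the same $(b_1,b_2)$. Granting this, suppose some $\phiha_i$ satisfies $(\dagger\dagger)$, witnessed by $a_1<a_2,\ b_1<b_2$ with $a_1+b_1>0$ and $\phiha_i(a_1,b_1)\phiha_i(a_2,b_2)<\phiha_i(a_1,b_2)\phiha_i(a_2,b_1)$. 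Reading this strict inequality in the second coordinate as $\phiha_i(a_2,\cdot)\prec\phiha_i(a_1,\cdot)$ with witness $(b_1,b_2)$, the index-preserving form of Lemma~\ref{mutpres} (iterated $\ell-1$ times) keeps it strict at $(a_1,a_2,b_1,b_2)$ after the second-coordinate mutation; reading the result in the first coordinate as a relation $\prec$ between first-coordinate slices with witness $(a_1,a_2)$, a single further application keeps it strict at the same indices after the first-coordinate mutation. Thus $\widehat{\Mut(\bfphi)}_i$ satisfies the strict inequality at exactly the same $(a_1,a_2,b_1,b_2)$; the positivity of the right-hand side shows the conditionals at $a_1,a_2$ are defined, and the condition $a_1+b_1>0$ is inherited verbatim, so $(\dagger\dagger)$ survives for this same $i$. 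The one genuinely delicate point is exactly this insistence on fixed witnessing indices so that $a_1+b_1>0$ carries over; the rest is the bookkeeping of applying Lemma~\ref{mutpres} along each coordinate, with the reduction $\rho=\mu^{\ast(\ell-1)}$ to an iterate of $\mu$ letting us quote that lemma verbatim.
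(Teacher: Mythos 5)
Your proof is correct and follows essentially the same route as the paper's: decompose mutation into single-coordinate (per-locus) convolutions acting on $\phiha_i$, apply Lemma~\ref{mutpres} slice by slice, and use the transposition symmetry of the $(\dagger)$ inequality to handle the mutation at locus $i$ itself. You are in fact more careful than the paper on one point: where the paper dismisses the strict case with ``we get $(\dagger\dagger)$ similarly,'' you correctly note that the constraint $a_1+b_1>0$ forces one to track the location of the witnessing pair, and that this is supplied not by the statement of Lemma~\ref{mutpres} but by its proof (the strict term $\beta_2=\mu(0)^2$ arises at the same pair $(b_1,b_2)$).
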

\begin{proof}
First, let us note that mutation can be broken down into a number of consecutive steps, by treating one locus at a time.
Let $\Mut^k$ be the application of mutation on the $k$th locus, i.e., $\Mut^k(X_1,...,X_k,...,X_\ell)=(X_1,...,X_k+C_k,...,X_\ell)$.
We will show that both  $(\dagger)$ and $(\dagger\dagger)$ are preserved by each of the operations $\Mut^k$.
Let $\bfphi^*=\Mut^k(\bfphi)$.
Fix $i$.
Assuming $(\dagger)$ or $(\dagger\dagger)$ for $\phiha_i$, we establish that the same condition holds  for $\phiha_i^*$.

Suppose first that $k\neq i$.
Then for $a\in \ZZ$, $(\phiha_i^*)^{a} = \Mut((\phiha_i)^a)$, because the mutation happens in one of the loci included in the second coordinate, and has the same effect on $F_i(X)$.
Consider $a_1<a_2$ such that $(\phiha_i)^{a_1}$ and $(\phiha_i)^{a_2}$ are both defined.
It then follows from Lemma \ref{mutpres} that since $(\phiha_i)^{a_2}\preceq (\phiha_i)^{a_1}$ we have $(\phiha_i^*)^{a_2}\preceq (\phiha_i^*)^{a_1}$, and hence that $\phiha_i^*$ satisfies $(\dagger)$.
We get $(\dagger\dagger)$ similarly.  

If $k=i$, then the proof is the same.
One just needs to observe that $\psi\colon \ZZ^2\to\RR^{\geq 0}$ satisfies $(\dagger)$ (or $(\dagger\dagger)$) if and only if $\psi'(a,b)=\psi(b,a)$ does.
\end{proof}

So far both lemmas hold for any of the models.
The following lemma only holds for the $\ZZ$-model.

\begin{lemma}\label{lemma dagger trunc}
Both $(\dagger)$ and $(\dagger\dagger)$ are preserved by truncation for the $\ZZ$-model.
\end{lemma}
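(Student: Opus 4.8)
The plan is to reduce everything to understanding how truncation acts on the conditional slices $(\phiha_i)^a$, and then to apply a simple restriction argument. First I would observe that in the $\ZZ$-model truncation deletes exactly those $\bfx$ with $F(\bfx)\le 0$; since the total fitness of any point contributing to $\phiha_i(a,b)$ is $a+b$, truncation keeps precisely the pairs $(a,b)$ with $a+b>0$. Writing $\bfphi^*=\Tru(\bfphi)$, this means that for each $i$ and each $a$ for which $(\phiha_i^*)^a$ is defined, the slice $(\phiha_i^*)^a$ is obtained from $(\phiha_i)^a$ by restricting its support to $\{b\in\ZZ: b>-a\}$ and renormalising. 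The crucial feature is that the cutoff $-a$ depends on the conditioning value: for $a_1<a_2$ the threshold $-a_2$ lies strictly below $-a_1$, so the slice with the larger conditioning value retains strictly more of its low-$b$ mass.

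For $(\dagger)$: fix $i$ and a pair $a_1<a_2$ for which both $(\phiha_i^*)^{a_1}$ and $(\phiha_i^*)^{a_2}$ are defined. Since truncation only removes mass, definedness can only shrink, so $(\phiha_i)^{a_1}$ and $(\phiha_i)^{a_2}$ were already defined and $(\dagger)$ for $\bfphi$ gives $(\phiha_i)^{a_2}\preceq(\phiha_i)^{a_1}$. Because the relation $\preceq$ is unaffected by the positive normalising factors, it suffices to check the defining inequality for the (unnormalised) functions $\psi_1$ and $\psi_2$ obtained by restricting $(\phiha_i)^{a_1}$ to $\{b>-a_1\}$ and $(\phiha_i)^{a_2}$ to $\{b>-a_2\}$ respectively (setting them to $0$ outside). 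For any $b_1<b_2$ this is a short case split: if $b_1>-a_1$ then $b_1,b_2>-a_1>-a_2$, all four relevant values coincide with the originals, and the inequality is exactly the one inherited from $(\phiha_i)^{a_2}\preceq(\phiha_i)^{a_1}$; otherwise $b_1\le -a_1$ forces $\psi_1(b_1)=0$, so the left-hand side $\psi_1(b_1)\psi_2(b_2)$ vanishes and the inequality is trivial. Hence $(\phiha_i^*)^{a_2}\preceq(\phiha_i^*)^{a_1}$, and $\phiha_i^*$ satisfies $(\dagger)$ for every $i$.

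For $(\dagger\dagger)$: here the definition is tailored precisely to this lemma. Suppose $\phiha_i$ witnesses $(\dagger\dagger)$ via $a_1<a_2$ and $b_1<b_2$ with $(\phiha_i)^{a_2}\prec(\phiha_i)^{a_1}$ and $a_1+b_1>0$. The condition $a_1+b_1>0$ says exactly $b_1>-a_1$, so $b_1$ (and a fortiori $b_2>b_1$) survives the cutoff at slice $a_1$, and also at slice $a_2$ since $-a_2<-a_1<b_1$. The strict witness requires $(\phiha_i)^{a_1}(b_2)>0$ and $(\phiha_i)^{a_2}(b_1)>0$, which, as $b_1,b_2>-a_1>-a_2$, guarantees that both $(\phiha_i^*)^{a_1}$ and $(\phiha_i^*)^{a_2}$ remain defined. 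On this pair the truncated values differ from the original ones only by the positive normalising factors, so the strict inequality persists, while $a_1+b_1>0$ is unchanged; thus the same $a_1<a_2$ and $b_1<b_2$ witness $(\dagger\dagger)$ for $\bfphi^*$.

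The only real subtlety, and the step I would verify most carefully, is the restriction argument for $(\dagger)$: one must confirm that cutting the support of the larger-$a$ slice \emph{less} aggressively than that of the smaller-$a$ slice can never reverse a defining inequality. This is exactly why the lemma is stated for the $\ZZ$-model, where the cut is the single clean condition $a+b>0$ and so aligns perfectly with the conditioning variable; for the $\NN$- and bounded models the per-locus cutoffs do not translate into a single threshold on $b$ in each slice, which is what breaks this argument there.
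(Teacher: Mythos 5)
Your proof is correct and takes essentially the same approach as the paper: both reduce to the observation that truncation in the $\ZZ$-model cuts each slice $(\phiha_i)^a$ at the threshold $a+b>0$, and then split into the case where $a_1+b_1>0$ (so all four relevant values survive and the inequality is inherited, up to positive normalising factors) and the case where the left-hand side vanishes. Your treatment is in fact slightly more explicit than the paper's on the definedness of the truncated slices, but the underlying argument is identical.
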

\begin{proof}
Let $\bfphi^*$ be the population which results from an application of truncation to $\bfphi$. 
Let $s=\sum_{\bfx\in D}\bfphi(\bfx)$, where $D$ is as in the $\ZZ$-model.
Note that $\phiha_i^*(a,b)=0$ if $a+b\leq 0$ and $\phiha_i^*(a,b)=\phiha_i(a,b)/s$ if $a+b>0$.

Fix $i$, $a_1<a_2$ and $b_1<b_2$.
Then if $\phiha_i^*(a_1,b_1)\neq 0$ we have $a_1+b_1>0$,  and hence both $a_1+b_2$ and $a_2+b_1$ are positive. 
Therefore, if $(\phiha_i)^{a_1}(b_2)(\phiha_i)^{a_2}(b_1)\geq (\phiha_i)^{a_2}(b_2)(\phiha_i)^{a_1}(b_1)$ then  $(\phiha_i^*)^{a_1}(b_2)(\phiha_i^*)^{a_2}(b_1)\geq (\phiha_i^*)^{a_2}(b_2)(\phiha_i^*)^{a_1}(b_1)$. In the same way 
if $(\dagger\dagger)$ holds because $(\phiha_i)^{a_1}(b_2)(\phiha_i)^{a_2}(b_1)> (\phiha_i)^{a_2}(b_2)(\phiha_i)^{a_1}(b_1)$ holds and $a_1+b_1>0$ (the latter condition being required by Definition \ref{dagstuff}), then this implies $(\phiha_i^*)^{a_1}(b_2)(\phiha_i^*)^{a_2}(b_1)> (\phiha_i^*)^{a_2}(b_2)(\phiha_i^*)^{a_1}(b_1)$.
Thus both $(\dagger)$ and $(\dagger\dagger)$ are preserved, as required.
\end{proof}

The second part of the proof of Theorem \ref{thm: rec positive} is to show the connection between the  properties $(\dagger)$, $(\dagger\dagger)$ and $LD_2$.
Recall that $\Co(X,Y)$ is the {\em covariance} of the random variables $X$ and $Y$, i.e., $\Co(X,Y)=E(XY)-E(X)E(Y)$.
Using our calculations from \S\ref{sss: LD2} we obtain that: 
\[
LD_2(\bfphi) =  \sum_{i=1}^\ell Co(\phiha_i).
\]

\begin{lemma}
If $\varphi\colon\ZZ^2\to\RR^{\geq 0}$ satisfies $(\dagger)$, then $\Co(\varphi)\leq 0$.
Furthermore, if $\varphi$ satisfies $(\dagger\dagger)$ then $\Co(\varphi)<0$.
\end{lemma}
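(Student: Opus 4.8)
The plan is to reinterpret $(\dagger)$ as a monotonicity statement about conditional means, and then invoke a standard correlation inequality for monotone functions of a single variable. Write $(X,Y)$ for the pair of coordinates distributed according to $\varphi$, and for each $a$ in the support of the first marginal set $g(a)=E(\varphi^a)=E(Y\mid X=a)$. Property $(\dagger)$ says precisely that $\varphi^{a_2}\preceq\varphi^{a_1}$ whenever $a_1<a_2$ and both conditional distributions are defined, so Lemma \ref{lemma prec E} immediately gives $g(a_2)\leq g(a_1)$. In other words, $g$ is a non-increasing function on the support of $X$.

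Next I would rewrite the covariance by conditioning on the first coordinate. Since $E(XY)=E\big(X\,E(Y\mid X)\big)=E\big(X g(X)\big)$ and $E(Y)=E(g(X))$, we have
\[
\Co(\varphi) = E(XY) - E(X)E(Y) = E\big(X g(X)\big) - E(X)\,E(g(X)) = \Co\big(X, g(X)\big).
\]
Thus the covariance of $\varphi$ equals the covariance of $X$ with a non-increasing function of $X$, and its sign is controlled by the usual two-independent-copies trick. Letting $X'$ be an independent copy of $X$ (distributed according to the first marginal of $\varphi$), a direct expansion yields
\[
2\,\Co\big(X, g(X)\big) = E\big[(X - X')\,(g(X) - g(X'))\big].
\]
For every pair of support values $x,x'$ the factors $x-x'$ and $g(x)-g(x')$ have opposite signs (or at least one vanishes), because $g$ is non-increasing; hence the integrand is pointwise $\leq 0$ and $\Co(\varphi)\leq 0$, proving the first claim.

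For the strict inequality under $(\dagger\dagger)$, recall that $(\dagger\dagger)$ supplies $a_1<a_2$, both in the support, with $\varphi^{a_2}\prec\varphi^{a_1}$. The strict half of Lemma \ref{lemma prec E} then gives $g(a_2)<g(a_1)$, so for the ordered pair $(x,x')=(a_1,a_2)$ the integrand $(a_1-a_2)(g(a_1)-g(a_2))$ is strictly negative, and this pair together with its transpose carries strictly positive probability $2\varphi(a_1)\varphi(a_2)>0$. Since the integrand is nowhere positive, this strictly negative contribution is not cancelled, so the expectation is strictly negative and $\Co(\varphi)<0$.

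The only genuinely delicate point is the passage to monotonicity of $g$, which rests entirely on Lemma \ref{lemma prec E}; everything afterwards is the elementary Chebyshev-type correlation inequality. I would also remark that finiteness of the relevant moments, guaranteed by our standing assumption that all cumulants are finite, is what justifies the conditioning identity and the rearrangement of sums in the two-copies computation. Note finally that the extra requirement $a_1+b_1>0$ on the witness in the definition of $(\dagger\dagger)$ plays no role in this lemma; it is needed only for the truncation bookkeeping of Lemma \ref{lemma dagger trunc}.
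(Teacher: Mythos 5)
Your proof is correct, and it shares the paper's crucial first step --- invoking Lemma \ref{lemma prec E} to turn $(\dagger)$ into monotonicity of the conditional mean $g(a)=E(Y\mid X=a)$ --- but the way you extract negativity of the covariance from that monotonicity is genuinely different. The paper does not pass through $\Co\big(X,g(X)\big)$ or the two-independent-copies identity; instead it centres the variables, setting $W=X-E(X)$ and $Z=Y-u$, where $u$ is chosen between $E(Y\mid X=\lfloor E(X)\rfloor)$ and $E(Y\mid X=\lceil E(X)\rceil)$, and then checks sign term by term in $E(WZ)=\sum_a a\,\psi_W(a)\,E(Z\mid W=a)$: monotonicity makes $E(Z\mid W=a)$ non-positive for $a>0$ and non-negative for $a<0$, while $E(W)E(Z)=0$. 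Your symmetrisation $2\,\Co\big(X,g(X)\big)=E\big[(X-X')(g(X)-g(X'))\big]$ buys two things: it avoids the slightly awkward choice of the centering constant $u$, which the paper needs precisely because $E(X)$ may fail to be an integer, and it makes the strict case immediate, since the single strictly negative term at the witness pair $(a_1,a_2)$ (whose marginal probabilities are positive because $\varphi^{a_1},\varphi^{a_2}$ are defined) cannot be cancelled by the nowhere-positive integrand. The paper's route, by contrast, leaves implicit a small case analysis showing that the $(\dagger\dagger)$-witness produces a strictly negative term on the correct side of the centering point. Your closing remarks are also accurate: the tower-property and rearrangement steps are covered by the standing finiteness assumptions, and the requirement $a_1+b_1>0$ in the definition of $(\dagger\dagger)$ plays no role in this lemma --- it is needed only to carry $(\dagger\dagger)$ through truncation in Lemma \ref{lemma dagger trunc}.
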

\begin{proof}
Let $X,Y$ be random variables such that $(X,Y)$ has probability distribution $\varphi$.
The key idea is to use that $(\dagger)$ implies that $E(Y|X=a)$ is decreasing in $a$, which follows from Lemma \ref{lemma prec E}.

Set $v=E(X)$. Now we would like to put $u=E(Y|X=v)$, but since we may have $v\notin \ZZ$ this presents a slight difficulty.
If $v\notin\ZZ$, we let $u$ be a number in between $E(Y|X=\lfloor v\rfloor)$ and $E(Y|X=\lceil v\rceil)$.
Now let $W=X-v$  and $ Z=Y-u$.  Since $v$ and $u$ are constants we have: 

\[ \Co (X,Y)=\Co (W,Z)=E(WZ)-E(W)E(Z). \] 

Let $\psi(W,Z)$ specify the distribution on the pair $(W,Z)$. 
Notice that since $\varphi$ satisfies $(\dagger)$, so does $\psi$.
 Let $\psi_W$ and $\psi_Z$ specify  the corresponding marginal distributions, and let $\psi_Z(Z|W)$ specify the conditional distribution.  
We claim that  $E(WZ)$ is non-positive. This is because: 
\[  \sum_{a} \sum_{b} \psi(a,b)\cdot ab \ =\  \sum_{a} a\psi_W(a)  \sum_{b} b \psi_Z(b|a) \ = \  \sum_{a} a \psi_W(a) E(Z|W=a). \] 

Now satisfaction of $(\dagger)$ and Lemma \ref{lemma prec E} imply that when $a$ is positive $E(Z|W=a)$ is non-positive, and when $a$ is negative $E(Z|W=a)$ is non-negative. 
Also $E(W)E(Z)=0$  because $E(W)=0$. 

If $\varphi$ satisfies $(\dagger\dagger)$ then $E(Z|W=a)$ is non-zero somewhere, and hence $E(WZ)<0$, as needed to get $\Co (X,Y)<0$.
\end{proof}

\begin{corollary}\label{cor: LD 2 negative}
If $\bfphi$ satisfies $(\dagger)$, then $LD_2(\bfphi)\leq 0$.
Furthermore, if $\bfphi$ satisfies $(\dagger\dagger)$ then $LD_2(\bfphi)<0$.
\end{corollary}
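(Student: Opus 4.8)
The plan is to combine two ingredients that are already available: the covariance decomposition $LD_2(\bfphi) = \sum_{i=1}^{\ell}\Co(\phiha_i)$ derived just above, and the preceding lemma, which shows that a distribution on $\ZZ^2$ satisfying $(\dagger)$ has non-positive covariance and one satisfying $(\dagger\dagger)$ has strictly negative covariance. Once these are in hand, the corollary follows by summing over the loci, so the work is essentially bookkeeping at the level of definitions.

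First I would unpack the population-level meaning of $(\dagger)$. By definition $\bfphi$ satisfies $(\dagger)$ exactly when each of the two-dimensional marginals $\phiha_i$ (the joint distribution of $X_i$ and $F_i(X)$) satisfies $(\dagger)$. Applying the preceding lemma locus by locus then gives $\Co(\phiha_i)\leq 0$ for every $i\in\{1,\dots,\ell\}$, and substituting into the decomposition yields $LD_2(\bfphi)=\sum_{i=1}^\ell \Co(\phiha_i)\leq 0$ as a sum of non-positive terms. For the strict statement, $\bfphi$ satisfying $(\dagger\dagger)$ supplies, by definition, an index $i_0$ with $\phiha_{i_0}$ satisfying $(\dagger\dagger)$; the strict half of the preceding lemma then gives $\Co(\phiha_{i_0})<0$. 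Combined with $\Co(\phiha_j)\leq 0$ for the remaining $j$, the sum has one strictly negative summand and the rest non-positive, whence $LD_2(\bfphi)<0$.

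The one point of care, and really the only non-mechanical step, is that the strict conclusion needs not merely a negative contribution from a single locus but also the absence of any positive contribution from the others. I would therefore make explicit that the population-level $(\dagger\dagger)$ is to be read as \emph{$(\dagger)$ together with strictness at some locus}, so that $\bfphi$ still satisfies $(\dagger)$ and every $\phiha_j$ keeps $\Co(\phiha_j)\leq 0$. This is exactly how the induction in the proof of Theorem \ref{thm: rec positive} delivers the hypothesis: the process always maintains $(\dagger)$ and only acquires the extra strictness from some stage onward. With this convention fixed, there is no genuine obstacle, and the corollary is a one-line consequence of the covariance-sum formula together with the preceding lemma.
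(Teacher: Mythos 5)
Your proof is correct and is essentially the paper's own argument: the paper treats the corollary as an immediate consequence of the decomposition $LD_2(\bfphi)=\sum_{i=1}^\ell \Co(\phiha_i)$ together with the preceding lemma applied locus by locus. Your added remark that the population-level $(\dagger\dagger)$ must be read as $(\dagger)$ plus strictness at some locus (so that no $\phiha_j$ contributes a positive covariance) correctly pins down a subtlety the paper leaves implicit, and it matches how the hypothesis is actually supplied in the proof of Theorem \ref{thm: rec positive}.
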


To finish the proof of Theorem \ref{thm: rec positive}, all we need to observe is that since the population starts at linkage equilibrium, it initially satisfies $(\dagger)$.
By Lemmas \ref{lemma dagger sel}, \ref{lemma dagger mut} and \ref{lemma dagger trunc}, the condition $(\dagger)$ is then satisfied throughout the process. Furthermore, the condition $(\dagger\dagger)$ is satisfied after an application of $\Sel$ to any non-trivial population (mutation will quickly produce non-trivial populations), and then remains satisfied until any such point as $\Rec$ is applied. Since satisfaction of $(\dagger\dagger)$ ensures negative $LD_2$, any non-trivial application of $\Rec$ therefore increases variance.


\subsection{The bounded model}  \label{ss: theorem 3}

This section provides a deep analysis of the asymptotic behavior of the bounded model and is dedicated to giving  the full proof of Theorem \ref{main bounded alleles}.
Recall that in the bounded models the fitness values of the genes are restricted to $\{1,...,N\}$, which we denote $[1,N]$, and the domain of the process is $D=[1,N]^\ell$, which we sometimes denote $N^{\ell}$.
Also recall our assumption that our mutation distributions only take non-zero values on $-1$, $0$ and $1$; we let $a=\mu(-1)$, $b=\mu(0)$ and $c=\mu(1)$.
Even though that assumption seems to be inessential in simulations, it is important for the type of formal analysis we do in this section. 
Recall also that we assume that $c<a<b$.

Let us restate Theorem \ref{main bounded alleles}.
In the statement of the theorem, we consider the case where both types of individuals, sexual and asexual, live together.
Mutation and selection act the same way on both, but recombination acts only among the sexual  individuals.
Imagine, for instance, that at a certain time there exists a purely asexual population, and that by random chance a few of the individuals in the population then become sexual and start reproducing among themselves.
The theorem states that over time the proportion of the population which is sexual will then tend to 1. 

\vspace{0.2cm}
\noindent \textbf{Theorem 3}. 
\emph{For every $a,b,c$ with $c<a$,  for every $\ell>1$, and for all sufficiently large $N$ (i.e.\ there exists $N_0$ such that $\forall N\geq N_0$), whatever the initial population is, so long as the proportion of sexual individuals is non-zero, we have that the proportion of sexual individuals converges to 1 and the proportion of asexual ones converges to 0.}\\

Let us fix the values of $a,b,c$ and $\ell$ throughout the rest of Section \ref{ss: theorem 3}.

The proof of Theorem \ref{main bounded alleles} proceeds by showing that the mean fitness of the sexual population is eventually higher than that of the asexual population, over a time average.
We will be able to show that the asexual population in isolation converges to a limit distribution, and we will provide an upper bound for the mean fitness of that limit.
While we strongly suspect that the sexual population in isolation also converges to a limit (as evidenced by simulations), we have not been able to prove it.
Nevertheless, we can still provide a lower bound for the geometric mean of the mean fitness over generations, which is larger than the upper bound for asex. We will show this is enough to establish that sex outperforms asex.
To obtain these upper and lower bounds, the key technique is to study the case when positive and negative mutations have the same probabilities. This case is much easier to analyse, and we then find a way of translating those results to the case we are interested in, where downward mutations are more likely. 
An issue that we have to be constantly aware of is how truncation affects the populations.

Let us start by showing that we can analyse the sexual and asexual populations separately, as we were doing earlier in the paper.
Let $\bfphi^t\in \RR^{D\times \{\mathtt{s,a}\}}$ be the probability distribution for the entire population at stage $t$ (no confusion should result from any conflict in notation with that given in Definition \ref{dagstuff} -- the latter notation may be considered only to apply to the proof of Theorem \ref{thm: rec positive}).
We can split $\bfphi^t$ into two functions $\bfphi_\sextt^t$ and $\bfphi_\asextt^t$, each with domain $D$.
Let $\varphi_\sextt^t$ and $\varphi_\asextt^t$ be the proportions of individuals which are sexual and asexual respectively, i.e., $\varphi_\sextt^t=|\bfphi_\sextt^t|$ and $\varphi_\asextt^t=|\bfphi_\asextt^t|$, where $|g|$ is the {\em taxicab norm}: For $\bfpsi\in \RR^D$,
\[
|\bfpsi|=\sum_{\bfx\in D}|\bfpsi(\bfx)|.
\]

\noindent So far, we have always assumed that populations, $\bfpsi$, are probability distributions, and hence that $|\bfpsi|=1$.
This is not the case with $\bfphi_\sextt^t$ and $\bfphi_\asextt^t$, and we will not be making that assumption anymore.
If a probability distribution is really required, all we have to do is consider normalisation:
\[
\Proje(\bfpsi)= \frac{\bfpsi}{|\bfpsi|}.
\]
From now on when we refer to the operations of truncation and selection, we will omit the normalisation:
Thus, when we apply truncation, giving $\Truncation(\bfpsi)$, we just erase the individuals outside the domain, and applying selection, giving $\Sel(\bfpsi)$, simply involves multiplying $\bfpsi(\bfx)$ by $F(\bfx)$.
We do not alter the definition of $\Mut$ as this operation already preserves the norm.  The definition of $\Rec$ is now altered in the obvious manner, so that it remains norm preserving (requiring division by $| \bfpsi |^{\ell-1}$). 
Since the normalization operation, $\Proje(\bfpsi)$, commutes with all the other operations, it does not really matter when we apply it.
The advantage of describing the same process in this new fashion, is that now mutation, truncation and selection act independently for the sexual and asexual populations,  and it is only normalisation that involves interaction between them. 
Of the operations selection, recombination, mutation and truncation, only truncation and selection affect the values  $\varphi_\sextt$ and $\varphi_\asextt$.
Since recombination and mutation commute, we can and will assume in what follows that  the operations cycle through in that order. In fact, we will assume that $\bfphi^{t+1}$ is obtained from $\bfphi^t$ by applying: mutation, truncation, selection and recombination  in that order (recombination, of course, being applied for sex only).
For each $t$, let: 
\[
\lambda_\sextt^t = \frac{|\Rec(\Sel(\Truncation(\Mut(\bfphi^t_\sextt))))|}{|\bfphi^t_\sextt|}
\quad{\and}\quad
\lambda_\asextt^t = \frac{|\Sel(\Truncation(\Mut(\bfphi^t_\asextt)))|}{|\bfphi^t_\asextt|}.
\]
For the quotients, we have:  
\[
\frac{\varphi_\sextt^t}{\varphi_\asextt^t}  
\ \ = \ \  
\frac{\varphi_\sextt^{t-1}}{\varphi_\asextt^{t-1}}\frac{\lambda_\sextt^{t-1}}{\lambda_\asextt^{t-1}} 
\ \ = \ \
 \frac{\varphi_\sextt^0}{\varphi_\asextt^0}  \frac{ \prod_{i=0}^{t-1}\lambda_\sextt^i} {\prod_{i=0}^{t-1}\lambda_\asextt^i}. 
\]

To establish Theorem \ref{main bounded alleles}, it will be enough to show that for all sufficiently large $N$: 
\[
\lim_t\frac{\prod_{i=0}^{t-1} \lambda_\sextt^i} {\prod_{i=0}^{t-1} \lambda_\asextt^i}  = +\infty.
\]
We may therefore consider the populations separately, because all we need be concerned with are the values 
 $\prod_{i=0}^{t-1} \lambda_\sextt^i$ and $\prod_{i=0}^{t-1} \lambda_\asextt^i$, which evolve independently.
We will establish the limit above,  by showing that the geometric mean of $\lambda_\sextt^i$ is eventually always greater than that of $\lambda_\asextt^i$ by at least a fixed margin.

Let us fix the notation for dealing with geometric means.
Given a finite sequence $a_1,...,a_n$ of numbers, we let $\GM_{t\leq n}(\{a_t\}) = \sqrt[n]{\prod_{i=1}^na_i}$.
Given an infinite sequence $\{a_t\}_{t\in\NN}$ we let
\[
\GMsup_t(a_t)  = \limsup_n \sqrt[n]{\prod_{i=1}^na_i},
\quad \mbox{   }\quad
\GMinf_t(a_t)  = \liminf_n \sqrt[n]{\prod_{i=1}^na_i},
\]
and if both limits are the same we call this common value $\GM_t(a_t)$.

The rest of this subsection is dedicated to proving the following theorems:

\begin{theorem}\label{main thm Asex}
For every $N$, the limit of $\lambda_\asextt^t$ exists and, for $\tau = b+ 2\sqrt{ac}$:
\[
\lim_{t\to\infty}  \lambda_\asextt^t  <  N \ell \tau^\ell.
\]
\end{theorem}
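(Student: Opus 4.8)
The plan is to recognise the unnormalised asexual dynamics as the iteration of a single non-negative linear operator on the finite-dimensional space $\RR^{D}$ (with $D=[1,N]^\ell$), and to extract both the existence of the limit and the bound from the Perron--Frobenius theory of that operator. Set $T=\Sel\circ\Tru\circ\Mut$, where $\Mut$ is read as a map into $\RR^{\ZZ^\ell}$, $\Tru$ projects back onto $\RR^{D}$, and $\Sel$ multiplies the coordinate at $\bfx$ by $F(\bfx)$. Since no recombination occurs in the asexual process, the chosen update order gives $\bfphi^{t+1}_\asextt=T(\bfphi^{t}_\asextt)$, hence $\bfphi^{t}_\asextt=T^{t}(\bfphi^{0}_\asextt)$ and
\[
\lambda_\asextt^{t}=\frac{|T^{t+1}\bfphi^{0}_\asextt|}{|T^{t}\bfphi^{0}_\asextt|}.
\]

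First I would establish convergence. The matrix $T$ is non-negative on the finite set $D$; its diagonal is positive because $\mu(0)=b>0$ and selection scales by $F(\bfx)\geq\ell>0$, while its off-diagonal support is exactly that of $\Tru\circ\Mut$, since left-multiplication by the positive diagonal $\Sel$ does not alter the zero pattern. As $a,c>0$ this support joins each pair of nearest neighbours of the grid $[1,N]^\ell$, which is connected, so $T$ is irreducible; together with the positive diagonal this makes $T$ aperiodic, hence primitive. Perron--Frobenius then supplies a simple dominant eigenvalue $\rho(T)>0$ with a strictly positive eigenvector $u$ and $T^{t}v/\rho(T)^{t}\to c_{v}\,u$ (some $c_{v}>0$) for every non-zero $v\geq 0$. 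It follows that $\lambda_\asextt^{t}\to\rho(T)$, independently of the (non-trivial) initial asexual population, which proves that the limit exists.

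The bound on $\rho(T)$ is the heart of the matter, and the key device is to decouple selection from mutation rather than to diagonalise the non-commuting product $T$ directly. Writing $\Sel=\mathrm{diag}(F(\bfx))$ with $F(\bfx)\leq N\ell$ on $D$, we have the entrywise inequality $0\leq T\leq N\ell\,(\Tru\circ\Mut)$, so monotonicity of the spectral radius for non-negative matrices yields $\rho(T)\leq N\ell\,\rho(\Tru\circ\Mut)$. Mutation acts independently on the $\ell$ loci and truncation to $[1,N]^\ell$ factors as the product of the one-dimensional truncations to $[1,N]$, whence $\Tru\circ\Mut=B_{1}^{\otimes\ell}$ and $\rho(\Tru\circ\Mut)=\rho(B_{1})^{\ell}$, where $B_{1}$ is the corresponding one-locus operator on $\RR^{[1,N]}$.

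Finally, $B_{1}$ is precisely the $N\times N$ tridiagonal Toeplitz matrix with diagonal $b$ and off-diagonal entries $a$ and $c$: the mass that mutation would push past $1$ or $N$ is simply deleted by truncation, which is exactly the Toeplitz boundary condition. Its eigenvalues are the standard values $b+2\sqrt{ac}\,\cos(k\pi/(N+1))$ for $k=1,\dots,N$, so $\rho(B_{1})=b+2\sqrt{ac}\,\cos(\pi/(N+1))<b+2\sqrt{ac}=\tau$. Combining the three steps gives $\rho(T)\leq N\ell\,\rho(B_{1})^{\ell}<N\ell\,\tau^{\ell}$, the desired strict inequality. I expect the main obstacle to lie not in any single calculation but in assembling the pieces cleanly: checking that truncation, mutation and selection genuinely compose into a fixed non-negative operator on $\RR^{D}$, confirming primitivity, and — most importantly — recognising that the crude scalar bound $\Sel\leq N\ell$ together with spectral-radius monotonicity is exactly what lets the solvable tensor/tridiagonal structure of $\Tru\circ\Mut$ carry the rest of the argument.
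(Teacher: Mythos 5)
Your proof is correct, and while your convergence step coincides with the paper's (both apply Perron--Frobenius to the primitive non-negative matrix $\Sel\cdot\Tru\cdot\Mut_{ac}$ on $\RR^{[1,N]^\ell}$, exactly as in Lemma~\ref{PF}), your derivation of the bound takes a genuinely different route. The paper conjugates the \emph{full} operator by the diagonal matrix $C(\bfx,\bfx)=(a/c)^{(\sum_i x_i)/2}$, proving $\tau^\ell\,\Sel\Mut_{a'a'}\cdot C=C\cdot\Sel\Mut_{ac}$ (Lemma~\ref{lemma: C commutes}); this transports the Perron eigenvector to the balanced-mutation process $a'=c'$, where the eigenvalue equals $\rho(\cdot)\,\MF(\cdot)<N\ell$ by trivial probabilistic bounds, yielding $\lambda<N\ell\tau^\ell$ (Lemma~\ref{lemma: asex fixed point bound}). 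You instead discard selection at the outset via the entrywise domination $T\leq N\ell\,(\Tru\circ\Mut)$, invoke monotonicity of the spectral radius, factor the truncated mutation operator as the tensor power $B_1^{\otimes\ell}$, and read off $\rho(B_1)=b+2\sqrt{ac}\,\cos\bigl(\pi/(N+1)\bigr)<\tau$ from the classical spectrum of a tridiagonal Toeplitz matrix. Your version is more self-contained for this single theorem and even gives the marginally sharper bound $N\ell\bigl(b+2\sqrt{ac}\,\cos(\pi/(N+1))\bigr)^\ell$, with the source of strictness made completely transparent; note, however, that the classical Toeplitz eigenvalue formula you invoke is itself proved by precisely the diagonal symmetrization that the paper carries out explicitly, so the algebraic core of the two arguments is the same. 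What the paper's formulation buys is reuse: its conjugation identity is stated for arbitrary (non-eigenvector) populations and passes through the shifted selection operators $\Sel_K$, which is exactly what is needed later for the sex process (Theorem~\ref{main thm Sex}, via Lemmas~\ref{lemma: GMs} and \ref{lemma: boundaries}), where no fixed point is available and one must control geometric means of a time-inhomogeneous iteration; your domination-plus-Toeplitz argument, which depends on the iterated operator being one fixed matrix with Toeplitz-structured mutation, does not extend directly to that setting.
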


\noindent Using facts observed from simulations, we are confident in claiming that in actual fact  $\lim_{N\to\infty}\lim_{t\to\infty}  \lambda_\asextt^t/N = \ell \tau^\ell$.
We will not need this extra fact, however,  and the result of the theorem will be enough for our purposes.

\begin{theorem}\label{main thm Sex}
Let $\tau = b+ 2\sqrt{ac}$ as above. For all sufficiently large $N$:
\[
\GMinf_{t\to\infty}(\lambda_\sextt^t)/  > N \ell \tau^\ell.
\]
\end{theorem}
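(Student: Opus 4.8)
The plan is to mirror the device used for the asexual bound (Theorem \ref{main thm Asex}), namely a diagonal change of coordinates that symmetrises the mutation, and then to control the extra factor contributed by recombination. I set $\theta=\sqrt{c/a}$ and let $D_\theta$ be the diagonal operator $(D_\theta\bfpsi)(\bfx)=\theta^{F(\bfx)}\bfpsi(\bfx)$. A direct computation gives $D_\theta^{-1}\Mut D_\theta=\tau^\ell\widetilde{\Mut}$ (a factor $\tau$ per locus), where $\widetilde{\Mut}$ is the norm-preserving \emph{symmetric} mutation with down/stay/up weights $\sqrt{ac}/\tau,\ b/\tau,\ \sqrt{ac}/\tau$ and $\tau=b+2\sqrt{ac}$; since $\Sel$ and $\Truncation$ are diagonal they commute with $D_\theta$. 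For asex this already yields $\lim_t\lambda_\asextt^t=\tau^\ell\,\rho(\widetilde G_\asextt)$ with $\rho(\widetilde G_\asextt)<N\ell$, where $\widetilde G_\asextt=\Sel\,\Truncation\,\widetilde{\Mut}$; this is the content of Theorem \ref{main thm Asex}. The first task for sex is to make sense of the same conjugation in the presence of recombination, which does \emph{not} commute with $D_\theta$.

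The observation that removes this obstruction is that the sexual population is permanently at linkage equilibrium, so $\bfphi^t_\sextt$ is always a product $\prod_i\phi_i^t$ and the whole (nonlinear) sexual dynamics is carried by the marginals. Passing to the product of post-selection marginals is exactly what $\Rec$ does, and by Lemma \ref{lemma: selection 1 locus} the induced update on the $i$th marginal is $\phi_i\mapsto\Proje(\Sel_{\hat W_i}\,\Truncation\,\Mut\,\phi_i)$, a genuine one-locus map coupled to the others only through the scalars $\hat W_i=\sum_{k\neq i}E(\Truncation\Mut\,\phi_k)$. On one locus recombination is trivial, so the conjugation becomes clean: $D_\theta^{-1}\Sel_{\hat W_i}\Truncation\Mut\,D_\theta=\tau\,\Sel_{\hat W_i}\Truncation\widetilde{\Mut}$. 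Thus the one-locus \emph{map} transforms exactly as in the asexual proof, and the asymmetry survives only in the self-consistent values $\hat W_i$ (which are asymmetric-frame means) and in the one-generation growth factor, which for the symmetric product state equals $\lambda_\sextt=\ell\,s^\ell M_\chi\le N\ell$, with $s$ the per-locus survival fraction after mutation and truncation and $M_\chi$ the corresponding mean.

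Next I would analyse the symmetrised one-locus map $\phi\mapsto\Proje(\Sel_{W}\Truncation\widetilde{\Mut}\,\phi)$ with $W$ fixed by the self-consistency $\hat W=(\ell-1)E$. Here the order $\preceq$ of \S\ref{ss: theorem 2} does the work: Lemmas \ref{mutpres} and \ref{lemma: Sel preserves prec} show $\widetilde{\Mut}$ and $\Sel_W$ preserve $\preceq$, so the map is monotone and its orbit can be sandwiched between $\preceq$-comparable distributions; together with Lemma \ref{lemma prec E} this pins down the fixed point and shows that for large $N$ the equilibrium marginal sits well inside $[1,N]$, forcing the truncation loss $s\to1$ (in accordance with the boundary tables of \S\ref{EXTDAT}) and leaving $M_\chi$ as the only quantity to estimate. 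The conjugation then expresses the true growth factor as $\tau^\ell$ times a $\theta$-weighted version of these symmetric moments.

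The main obstacle is the final, quantitative step: one must show this geometric-mean growth clears the threshold $N\ell\tau^\ell$, i.e.\ that sex beats not merely asex but the limiting asex value $\tau^\ell\rho(\widetilde G_\asextt)<N\ell\tau^\ell$ by a margin persisting for all large $N$. This is exactly where the mechanism of Theorems \ref{thm: effect rec} and \ref{thm: rec positive} must be made effective: the asexual leading eigenvector carries the negative $LD_2$ that asex accumulates, whereas the sexual state is forced to linkage equilibrium, so under the common conjugation the selection step acts strictly more efficiently on the sexual marginals; the aim is to convert this into a fixed multiplicative gain over $\rho(\widetilde G_\asextt)$ that is not erased by the $\theta$-weighting. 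Because convergence of the sexual process is unavailable (and, per the remark, presumably false in general), everything must be run with $\GMinf$ rather than a genuine limit: using the telescoping identity $\prod_{i<n}\lambda_\sextt^i=|\Psi^n_\sextt|/|\Psi^0_\sextt|$ and the monotonicity of the one-locus maps, I would bound $\GMinf_t(\lambda_\sextt^t)$ below by the growth evaluated at the fixed-point sub-sequential limits of the marginals, and then apply the estimates above. Combined with Theorem \ref{main thm Asex} and the quotient identity for $\varphi_\sextt^t/\varphi_\asextt^t$, this gives $\GMinf_t(\lambda_\sextt^t)>N\ell\tau^\ell$ for all sufficiently large $N$, and hence Theorem \ref{main bounded alleles}.
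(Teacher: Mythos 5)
Your setup is the same as the paper's: the diagonal conjugation that symmetrises mutation (your $D_\theta$ is the inverse of the paper's matrix $C$, and your identity $D_\theta^{-1}\Sel_{\hat W_i}\Truncation\Mut\,D_\theta=\tau\,\Sel_{\hat W_i}\Truncation\widetilde{\Mut}$ is Lemma \ref{lemma: conjugating mutation} plus commutation of diagonal matrices), together with the reduction to one-locus maps via linkage equilibrium and Lemma \ref{lemma: selection 1 locus}, and the use of the $\preceq$-order to control these maps. But the decisive step is missing, and the mechanism you propose for it is not the one that works. You defer the margin over $N\ell\tau^\ell$ to ``making Theorems \ref{thm: effect rec} and \ref{thm: rec positive} effective,'' i.e.\ to converting the negativity of $LD_2$ in the asexual eigenvector into a fixed multiplicative gain. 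The paper never uses those theorems here: they are purely qualitative (recombination increases variance) and give no lower bound on the size of the gain, let alone one surviving the $\theta$-weighting uniformly in $N$. The actual source of the margin is arithmetic and intrinsic to the sexual process alone: by the conjugation, the per-locus growth factor $\rho_i^t(W_i^t+\Wha_i^t)$ is, in geometric mean, close to $\tau(N+\Wha_i^t)$; since Lemma \ref{lemma: M below tau N} forces $M^t\lesssim\tau N\ell<N\ell$, writing $\xi=\GM(\ell N/M^t)-1\gtrsim(1-\tau)/\tau>0$ and compounding over the $\ell$ loci gives
\[
\GM_{t<k}(\lambda_\sextt^t)\ \gtrsim\ N\ell\tau^\ell\cdot\frac{(1+\xi/\ell)^\ell}{1+\xi}\,(1-\epsilon_N)^{3\ell},
\]
and the theorem follows because $x\mapsto(1+x/\ell)^\ell/(1+x)$ exceeds $1$ for $x>0$ (a Jensen-type convexity fact, Lemma \ref{lemma GM of sums}). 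Nothing in your sketch produces this, or any substitute for it.

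Two further steps you elide are genuinely needed because the sexual process is not known to converge and the loci need not be identical. First, your self-consistency ansatz $\hat W=(\ell-1)E$ presumes symmetric, stabilised marginals; in fact the coupling scalars $\Wha_i^t$ vary in time and across loci, so one needs the paper's Lemma \ref{lemma: boundaries} (a boundary/mean estimate for the symmetrised one-locus map holding uniformly over \emph{arbitrary} sequences $K^t\in[0,(\ell-1)N]$, proved by $\preceq$-sandwiching between the constant-$K$ Perron--Frobenius processes with $K=0$ and $K=(\ell-1)N$) and Lemma \ref{lemma: Whas} (that $N+\Wha_i^t$ agrees with $N+M^t(\ell-1)/\ell$ in geometric mean, proved via the two-locus comparison of Lemma \ref{lemma: delta t}); ``evaluating at fixed-point sub-sequential limits of the marginals'' does not substitute for either, since the limits of the coupled system are exactly what cannot be controlled. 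Second, relating the geometric means of the true and conjugated growth factors requires an argument -- the paper's Lemma \ref{lemma: GMs}, which uses compactness of the simplex to bound the conjugation distortion $|C\cdot\Proje(\psi)|$ between fixed positive constants so that its $k$-th root tends to $1$; your telescoping identity alone does not deliver this.
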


\noindent Using  facts observed from simulations, we are confident in claiming  that in actual fact $\lim_{N\to\infty}\lim_{t\to\infty}  \lambda_\sextt^t/N = \ell\tau/(\ell-\tau(\ell-1))$, which is greater than $\ell\tau^\ell$ for $\ell>1$ and $\tau<1$.
Once again, we will not need this extra fact, however,  and the result of the theorem will be enough for our purposes.

It then follows from the theorems above that $\GMinf_{t\to\infty}(\lambda_\sextt^t/\lambda_\asextt^t) > 1$ for any large enough $N$, and hence that $\lim_t(\prod_{i=0}^{t-1} \lambda_\sextt^i)/(\prod_{i=0}^{t-1} \lambda_\asextt^i)  = +\infty$ as required.

\subsubsection{Understanding $\lambda^t_\sextt$ and $\lambda^t_\asextt$}

In general, given a population $\bfpsi\in \RR^{\NN^\ell}$, we define 
\[
\lambda(\bfpsi) = \frac{|(\Sel(\Truncation(\Mut(\bfpsi))))|}{|\bfpsi|}.
\]
Then $\lambda^t_\sextt=\lambda(\bfphi^t_\sextt)$,  and similarly  for asex.

Given a population $\bfpsi\in \RR^{\NN^\ell}$, let $\rho(\bfpsi)$ be the proportion of individuals  surviving mutation followed by truncation, i.e.: 
\[
\rho(\bfpsi) = \frac{|\Truncation(\Mut(\bfpsi))|}{|\bfpsi|}.
\]
Let us remark that $\rho(\bfpsi)\leq 1$.
Given a population $\bfpsi\in \RR^{\NN^\ell}$,  we use $\MF(\bfpsi)$ to denote its mean fitness, even in the case that $\bfpsi$ is not normalised:
\[
M(\bfpsi) = \frac{\sum_{\bfx\in N^\ell} F(\bfx)\bfpsi(\bfx)}{|\bfpsi|}.
\]
The increase in norm caused by an application of selection (ignoring normalization) is given by the mean fitness:
\[
\frac{|\Sel(\bfpsi)|}{|\bfpsi |} =\frac{\sum_{\bfx\in D}F(\bfx)\bfpsi(\bfx)}{|\bfpsi |} = M(\bfpsi).
\]

\noindent Let us remark that $\MF(\bfpsi)\leq N\ell$ because $F(\bfx) \leq N\ell$ for every $\bfx\in D$.

Since  mutation and recombination do not affect the norms, we have: 
\[
\lambda(\bfpsi) = \rho(\bfpsi) \MF(\bfpsi'),
\]
where $\bfpsi'=\Truncation(\Mut(\bfpsi))$.


\subsubsection{Changing the parameters}

A key idea here  is to use the case when positive and negative mutations are equiprobable to get information about the case we are interested in, where $c<a$.
In this subsection we show how we can change the mutation parameters from $a,b,c$  to $a', b', c'$ satisfying $a'=c'$,  in a manner which allows us to translate from one process to the other in a controlled way.

We define $a',b',c'$ so that they satisfy the following equations:
\[
a'+b'+c'=1,
\quad\quad
\frac{b}{\sqrt{ac}}=\frac{b'}{\sqrt{a'c'}}
\quad \mbox{and}\quad
a'=c'.
\]
The reason we require $b/\sqrt{ac}=b'/\sqrt{a'c'}$ will become clear later. These equations are enough to determine the values of $a'$, $b'$ and $c'$ as follows.
Since $a'=c'=(1-b')/2$, for $\tau=   b+ 2\sqrt{ac}$ we get:
\begin{eqnarray}
\frac{b}{\sqrt{ac}}  &=& \frac{b'}{\frac{1}{2}(1-b')}   \\
b(1-b') &=& 2b' \sqrt{ac}     \\
b & = & b'(b+2\sqrt{ac})    \\
\frac{\sqrt{ac}}{\sqrt{a'c'}}=\frac{b}{b'}&=&\tau.
\end{eqnarray}

\noindent So $b'=b/\tau$ and $a'=c' = \sqrt{ac}/\tau$.

To better visualize the translation from the $abc$-process to the $a'b'c'$-process, let us start by considering the case $\ell=1$ first.
Consider the diagonal square matrix $C$ of size $N\times N$ given by:
\[
C(x,x) = \left(a/c\right)^{x/2}.
\]
Our goal now is to show that applying the $abc$-process to a population $\bfphi$ is equivalent to applying the $a'b'c'$-process to $C\cdot \bfphi$ up to a factor of $\tau$.
In other words, we will show that $\tau\cdot \Sel(\Mut_{a'a'}(C\cdot \bfphi)) =   C\cdot  \Sel(\Mut_{ac}(\bfphi))$.

Let $\Mut_{ac}$ be the matrix corresponding to an application of mutation with probabilities $\mu(-1)=a$, $\mu(0)=b$ and $\mu(1)=c$, followed by truncation (but without normalisation).
That is:
\[
\Mut_{ac}= \left(
\begin{array}{cccccccc}
b	&	a	&	0	&	0	& 	... & 0  & 0\\
c	&	b	&	a	&	0	&	... & 0  & 0\\
0	&	c	&	b	&	a	&	...& 0  & 0 \\
0	&	0	&	c	&	b	&	\ddots &   & 0\\
\vdots & \vdots &  \vdots	& \ddots 	&   \ddots	&	\ddots &	\vdots \\  
0	&	0	&	0	&      	&    \ddots&  b	&	a\\
0	&	0	&	0	&      0	&      \dots&   c	&	b	\\
\end{array}
\right)
\]

\noindent From now on, we will assume truncation is part of mutation, and mutation refers to multiplication by  $\Mut_{ac}$.

\begin{lemma} \label{lemma: conjugating mutation}
In the case $\ell=1$:
\[
\tau \cdot \Mut_{a'a'} =   C\cdot \Mut_{ac} \cdot C^{-1}
\]
\end{lemma}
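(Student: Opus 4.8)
The plan is to verify the identity by a direct entrywise comparison of the two $N\times N$ matrices, exploiting the fact that conjugation by a diagonal matrix merely rescales entries without moving them. Write $r=(a/c)^{1/2}$, so that $C$ is the diagonal matrix with $C_{xx}=r^{x}$ and $(C^{-1})_{xx}=r^{-x}$. Because $C$ is diagonal, for any matrix $M$ one has $(CMC^{-1})_{xy}=r^{x}M_{xy}r^{-y}=r^{x-y}M_{xy}$; in particular the conjugation preserves the zero pattern of $M$, so $C\,\Mut_{ac}\,C^{-1}$ is again tridiagonal and the (truncated) corner rows are treated on exactly the same footing as the interior rows.

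It therefore suffices to read off the three bands of $\Mut_{ac}$. The diagonal entries ($y=x$) are fixed at $b$; each superdiagonal entry ($y=x+1$, value $a$) is multiplied by $r^{-1}=(c/a)^{1/2}$, giving $a(c/a)^{1/2}=\sqrt{ac}$; and each subdiagonal entry ($y=x-1$, value $c$) is multiplied by $r=(a/c)^{1/2}$, giving $c(a/c)^{1/2}=\sqrt{ac}$. Hence $C\,\Mut_{ac}\,C^{-1}$ is the symmetric tridiagonal matrix with diagonal band $b$ and both off-diagonal bands equal to $\sqrt{ac}$. It then remains to match this against $\tau\,\Mut_{a'a'}$. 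Recalling from the display preceding the lemma that $b'=b/\tau$ and $a'=c'=\sqrt{ac}/\tau$, with $\tau=b+2\sqrt{ac}$, and that $\Mut_{a'a'}$ has diagonal band $b'$ and both off-diagonal bands $a'=c'$, multiplying by $\tau$ yields diagonal band $\tau b'=b$ and off-diagonal bands $\tau a'=\sqrt{ac}$. These agree band-for-band with $C\,\Mut_{ac}\,C^{-1}$, including at the boundary rows, which completes the verification.

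I do not anticipate a genuine obstacle: the content is a one-line observation about diagonal conjugation followed by bookkeeping of three bands. The only points requiring a moment's care are that $C$ is well defined and invertible, which holds since $0<c<a$ makes $a/c$ a positive real, and that the truncation built into $\Mut_{ac}$ (the dropping of out-of-range transitions at the extreme fitness values $1$ and $N$) is respected by the conjugation; the latter is automatic because a diagonal similarity neither creates nor destroys entries. The normalisation $b/\sqrt{ac}=b'/\sqrt{a'c'}$ imposed when defining the primed parameters is precisely what forces the two symmetric tridiagonal matrices to coincide after the scalar $\tau$, which explains why that condition was built into the definition of $a',b',c'$.
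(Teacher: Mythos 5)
Your proof is correct and follows essentially the same route as the paper's: the paper verifies the identity by explicitly multiplying out $C\cdot \Mut_{ac}\cdot C^{-1}$ band by band, obtaining the symmetric tridiagonal matrix with diagonal $b$ and off-diagonals $\sqrt{ac}$, and then factoring out $\tau$ using $b=\tau b'$ and $\sqrt{ac}=\tau\sqrt{a'c'}$. Your formulation via $(C M C^{-1})_{xy}=r^{x-y}M_{xy}$ is just a cleaner way of organising the same entrywise computation, including the observation that the truncated boundary rows need no special treatment.
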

\begin{proof}
We carry out the matrix multiplications:

{\small
\[
C\cdot \Mut_{ac} \cdot C^{-1} = \left(
\begin{array}{ccccc}
b \sqrt{a/c}^{1}\sqrt{a/c}^{-1}	&	a \sqrt{a/c}^{1}\sqrt{a/c}^{-2}	&	0&	0	& 	...\\
c \sqrt{a/c}^{2}\sqrt{a/c}^{-1}	&	b	\sqrt{a/c}^{2}\sqrt{a/c}^{-2}&	a	\sqrt{a/c}^{2}\sqrt{a/c}^{-3}&	0	&	...\\
0  &	c\sqrt{a/c}^{3}\sqrt{a/c}^{-2}	&	b	\sqrt{a/c}^{3}\sqrt{a/c}^{-3}&	a	\sqrt{a/c}^{3}\sqrt{a/c}^{-4}&	...\\
0	&	0	&	c	\sqrt{a/c}^{4}\sqrt{a/c}^{-3}&	b	\sqrt{a/c}^{4}\sqrt{a/c}^{-4}&	...\\
\vdots & \vdots & 	\vdots & \vdots & 	\ddots  
\end{array}
\right)
\]}
\[
= \left(
\begin{array}{ccccc}
b 	&	a \sqrt{a/c}^{-1}	&	0	&	0	& 	...\\
c \sqrt{a/c}	&	b	&	a	\sqrt{a/c}^{-1}&	0	&	...\\
0 	&	c\sqrt{a/c}	&	b	&	a	\sqrt{a/c}^{-1}   &	...\\
0	&	0	    		&	c	\sqrt{a/c}&	b	&	...\\
\vdots & \vdots & 	\vdots & \vdots & 	\ddots  
\end{array}
\right)
= \left(
\begin{array}{ccccc}
b 	&	 \sqrt{ac}	&	0	&	0	& 	...\\
 \sqrt{ac}	&	b	&		\sqrt{ac} &	0	&	...\\
0 	&	\sqrt{ac}	&	b	&		\sqrt{ac}   &	...\\
0	&	0	    		&		\sqrt{ac}&	b	&	...\\
\vdots & \vdots & 	\vdots & \vdots & 	\ddots  
\end{array}
\right)
\]

\[
= \sqrt{\frac{ac}{a'c'}} 
\left(
\begin{array}{ccccc}
b' 	&	 \sqrt{a'c'}	&	0	&	0	& 	...\\
 \sqrt{a'c'}	&	b'	&		\sqrt{a'c'} &	0	&	...\\
0 	&	\sqrt{a'c'}	&	b'	&		\sqrt{a'c'}   &	...\\
0	&	0	    		&		\sqrt{a'c'}&	b'	&	...\\
\vdots & \vdots & 	\vdots & \vdots & 	\ddots   
\end{array}
\right)
= \tau\cdot \Mut_{a'a'}.
\]
The last equality uses that $\sqrt{a'c'}=a'=c'$ and that $ \sqrt{\frac{ac}{a'c'}} = \tau$. 
\end{proof}

Notice that, in the 1-locus case, selection without normalization is given by a diagonal matrix $\Sel$ where $\Sel(i,i)=i$.
Since diagonal matrices commute, we have:
\[
\tau\cdot \Sel\cdot \Mut_{a'a'} \cdot C =   C\cdot \Sel\cdot \Mut_{ac}.
\]

The case when $\ell>1$ is not overly different, but the notation is now a little more cumbersome.
Consider the diagonal square matrix $C$ of size $N^\ell\times N^\ell$ given by
\[
C((x_1,...,x_\ell), (x_1,...,x_\ell)) = (a_{-1}/a_{1})^{(\sum x_i)/2}
\]
where $a_{-1}, a_0, a_1$ are $a,b,c$ respectively.

Let us use $\Mut_{ac}$ to denote the matrix corresponding to $abc$-mutation with $\ell$ genes.
We should actually denote this matrix by $\Mut_{ac,\ell,N}$, but since there is no risk of confusion we prefer to simplify the notation.

\begin{lemma}\label{lemma: C commutes}
For $\ell\geq 1$:
\[
\tau^\ell \cdot\Mut_{a'a'} =   C\cdot \Mut_{ac} \cdot C^{-1}
\]
\end{lemma}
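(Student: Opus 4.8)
Lemma \ref{lemma: C commutes} generalizes the one-locus identity of Lemma \ref{lemma: conjugating mutation} to $\ell$ loci, and the natural plan is to exploit the tensor-product structure of $\ell$-locus mutation. The key observation is that because mutation acts independently on distinct loci (and truncation into the cube $[1,N]^\ell$ factors as truncation on each coordinate), the $\ell$-locus mutation matrix is the Kronecker product of $\ell$ copies of the one-locus matrix:
\[
\Mut_{ac} = \underbrace{\Mut_{ac}^{(1)}\otimes \cdots \otimes \Mut_{ac}^{(1)}}_{\ell},
\]
where $\Mut_{ac}^{(1)}$ denotes the $N\times N$ one-locus matrix from Lemma \ref{lemma: conjugating mutation}. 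First I would make this factorization explicit, checking that the ordering of the basis $(x_1,\dots,x_\ell)$ used to index the $N^\ell\times N^\ell$ matrix is consistent with the Kronecker convention, and that truncation along the boundary of $[1,N]^\ell$ really does decompose as a tensor product of the one-dimensional truncations (it does, since $\bfx\in D$ iff $x_i\in[1,N]$ for every $i$).

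Second, I would verify that the conjugating matrix $C$ also factors as a Kronecker product. Since $C$ is diagonal with
\[
C((x_1,\dots,x_\ell),(x_1,\dots,x_\ell)) = (a/c)^{(\sum_i x_i)/2} = \prod_{i=1}^\ell (a/c)^{x_i/2},
\]
it is exactly $C^{(1)}\otimes\cdots\otimes C^{(1)}$, where $C^{(1)}$ is the one-locus diagonal matrix with $C^{(1)}(x,x)=(a/c)^{x/2}$ from the $\ell=1$ case. The same factorization holds for $C^{-1}$. Similarly $\Mut_{a'a'}=\Mut_{a'a'}^{(1)}\otimes\cdots\otimes\Mut_{a'a'}^{(1)}$.

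Third, the computation then reduces to a single application of the multiplicativity of conjugation across tensor factors, using the mixed-product property $(A\otimes B)(C\otimes D)=(AC)\otimes(BD)$. This gives
\[
C\cdot\Mut_{ac}\cdot C^{-1} = \bigotimes_{i=1}^\ell \Big(C^{(1)}\,\Mut_{ac}^{(1)}\,(C^{(1)})^{-1}\Big) = \bigotimes_{i=1}^\ell \big(\tau\cdot\Mut_{a'a'}^{(1)}\big) = \tau^\ell\cdot\Mut_{a'a'},
\]
where the middle equality is exactly the content of Lemma \ref{lemma: conjugating mutation} applied factor-by-factor, and the scalars $\tau$ from each of the $\ell$ factors multiply to give $\tau^\ell$.

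I expect the only real obstacle to be bookkeeping rather than mathematics: one must be careful that the stated matrix $\Mut_{ac}$ genuinely equals the $\ell$-fold Kronecker product under the chosen indexing of $N^\ell$, and in particular that the truncation step (which is the one operation that could break the clean tensor structure) does factor coordinatewise. Once the tensor factorizations of $\Mut_{ac}$, $\Mut_{a'a'}$ and $C$ are established, the mixed-product property makes the rest immediate, and there is no need to write out the large matrices explicitly as in the $\ell=1$ case.
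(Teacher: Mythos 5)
Your proposal is correct, but it takes a genuinely different route from the paper's. The paper proves Lemma \ref{lemma: C commutes} by a direct coordinate computation: it applies $\Mut_{ac}\cdot C^{-1}$ to an arbitrary $\bfpsi\in\RR^{N^\ell}$, expands the result as an $\ell$-fold sum over mutation increments $(i_1,\dots,i_\ell)\in\{-1,0,1\}^\ell$, and re-derives the substitution $a_0=a_0'\tau$, $a_{-1}=a_{-1}'\tau\sqrt{a_{-1}/a_{1}}$, $a_1=a_1'\tau\sqrt{a_{1}/a_{-1}}$ inside the product so that the scalar $\tau^\ell$ and the diagonal weight $\sqrt{a_{-1}/a_{1}}^{-\sum_j x_j}$ factor out; notably, it never invokes the one-locus Lemma \ref{lemma: conjugating mutation}, but in effect repeats its computation in $\ell$ coordinates. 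You instead factor $\Mut_{ac}$, $\Mut_{a'a'}$ and $C$ as $\ell$-fold Kronecker products of their one-locus versions and conclude from the mixed-product property, applying the one-locus lemma factor by factor. The point you single out as the only potential obstacle --- that truncation respects the tensor structure because $D=[1,N]^\ell$ is a product box, so the ``mutate then erase outside $D$'' matrix has entries $\prod_{i}\Mut_{ac,1}(x_i,y_i)$ --- is exactly right, and it is the same fact the paper encodes by declaring $\bfpsi$ to vanish outside $D$; likewise $C$ is diagonal with multiplicative entries, so it factors as you say. Your approach buys modularity: it reuses Lemma \ref{lemma: conjugating mutation} rather than repeating it, makes the independence across loci structurally transparent, and generalizes immediately (for instance to locus-dependent mutation rates, where the conjugating scalars from the factors would simply multiply). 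The paper's computation buys self-containedness: it requires no Kronecker formalism or care about the indexing of $N^\ell$, at the cost of a heavier explicit display.
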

\begin{proof}
Consider $\bfpsi\in \RR^{N^\ell}$.
Then, for $\bfx=(x_1,...,x_\ell)$,
\[
C^{-1}\cdot \bfpsi(\bfx) = \bfpsi(\bfx) \sqrt{a_{-1}/a_{1}}^{-\sum x_i}
\]
and
\[
\Mut_{ac} \cdot C^{-1}\cdot \bfpsi (\bfx)  =   \sum_{i_1=-1}^1 \sum_{i_2=-1}^1 ...\sum_{i_\ell=-1}^1  \left(\prod_{j=1}^\ell  a_{i_j}\right)\bfpsi(x_1-i_i, x_2-i_2, ..., x_\ell-i_\ell) \sqrt{a_{-1}/a_{1}}^{-\sum x_j-i_j}.
\]
In the equation above, assume that if $(x_1-i_i, x_2-i_2, ..., x_\ell-i_\ell)\not\in D$, then $\bfpsi(x_1-i_i, x_2-i_2, ..., x_\ell-i_\ell)=0$.
Replacing each $a_0$ by $a_0'\tau$, each $a_{-1}$ by $a_{-1}'\tau\sqrt{a_{-1}/a_{1}}$ and each $a_1$ by  $a_1'\tau\sqrt{a_1/a_{-1}}$ we get
\[
=  \sum_{i_1=-1}^1 \sum_{i_2=-1}^1 ...\sum_{i_\ell=-1}^1 \left( \prod_{j=1}^\ell  a'_{i_j}\tau \sqrt{a_{1}/a_{-1}}^{i_j}\right) \bfpsi(x_1-i_i, x_2-i_2, ..., x_\ell-i_\ell) \sqrt{a_{-1}/a_{1}}^{-\sum_j x_j+\sum_ji_j}  
\]
\[
= \tau^\ell \sqrt{a_{-1}/a_{1}}^{-\sum_j x_j} \sum_{i_1=-1}^1 \sum_{i_2=-1}^1 ...\sum_{i_\ell=-1}^1 \left( \prod_{j=1}^\ell  a'_{i_j}  \right) \bfpsi(x_1-i_i, x_2-i_2, ..., x_\ell-i_\ell)   
\]
\[
= \tau^\ell\cdot C^{-1} \cdot\Mut_{a'a'}\cdot \bfpsi(\bfx).
\qedhere \]
\end{proof}

\subsubsection{The fixed point for Asex} \label{ss: fixedpoint asex}

In this subsection we prove Theorem \ref{main thm Asex} which states that the limit of $\lambda_\asextt^t$ is less than or equal to $N\ell\tau^\ell$.
The proof has two steps. First we show that if the asex process reaches a fixed point $\bfpsi$, then  $\lambda(\bfpsi)\leq N\ell\tau^\ell$. Second, we show that, independent of the starting point, the asex population always converges to a fixed point.

As we mentioned before, mutation (which we now consider to incorporate truncation) acts on a population (considered as a vector) by multiplying this vector by the matrix $\Mut_{ac}$.
We use $\Sel\Mut_{ac}$ to denote the matrix $\Sel\cdot\Mut_{ac}$.
In the case $\ell=1$ we have
\[
\Sel\Mut_{ac}= \left(
\begin{array}{ccccc}
b	&	a	&	0	&	0	& 	...\\
2c	&	2b	&	2a	&	0	&	...\\
0	&	3c	&	3b	&	3a	&	...\\
0	&	0	&	4c	&	4b	&	...\\
\vdots & \vdots & 	\vdots & \vdots & 	\vdots  
\end{array}
\right).
\]

\noindent The following lemma shows how useful is  the translation developed in the previous subsection.

\begin{lemma}\label{lemma: asex fixed point bound}
Suppose that $\bfpsi_{ac,\ell,N}$ is a fixed point for the asex process. Then  $\lambda(\bfpsi_{ac,\ell,N})< N\ell\tau^\ell$.
\end{lemma}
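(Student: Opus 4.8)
The plan is to leverage the conjugation of Lemma~\ref{lemma: C commutes}, which says that the equiprobable $a'b'c'$-process coincides, up to the scalar $\tau^\ell$ and the diagonal change of variables $C$, with the $abc$-process we care about. A fixed point $\bfpsi=\bfpsi_{ac,\ell,N}$ of the asex process is, after unwrapping the normalisation $\Proje$, precisely an eigenvector of $\Sel\Mut_{ac}$: we have $\Sel\Mut_{ac}\,\bfpsi=\lambda\,\bfpsi$, and since $\lambda(\bfpsi)=|\Sel\Mut_{ac}\,\bfpsi|/|\bfpsi|$ with $\bfpsi\geq 0$, the eigenvalue is exactly $\lambda=\lambda(\bfpsi)$. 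I would then transport this eigenvector through $C$ and interpret the result in the equiprobable process.

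First I would record that, because $\Sel$ is the diagonal matrix $\bfx\mapsto F(\bfx)$ and $C$ is also diagonal, the two commute, so $C\,\Sel\,C^{-1}=\Sel$; combining with Lemma~\ref{lemma: C commutes} gives $C(\Sel\Mut_{ac})C^{-1}=\tau^\ell\,\Sel\Mut_{a'a'}$. Applying $C$ to the eigenvector equation yields $\Sel\Mut_{a'a'}(C\bfpsi)=(\lambda/\tau^\ell)(C\bfpsi)$, so $C\bfpsi$ is a nonnegative, nonzero eigenvector of the equiprobable selection-mutation operator with eigenvalue $\lambda':=\lambda/\tau^\ell$. As the defining ratio of norms is insensitive to normalisation, $\lambda'$ is exactly $\lambda(C\bfpsi)$ computed in the $a'b'c'$-process.

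Next I would invoke the generic factorisation $\lambda(C\bfpsi)=\rho(C\bfpsi)\,\MF((C\bfpsi)')$, with $(C\bfpsi)'=\Truncation(\Mut_{a'a'}(C\bfpsi))$. The crude bounds $\rho\leq 1$ and $F(\bfx)\leq N\ell$ already give $\lambda'\leq N\ell$, hence $\lambda=\tau^\ell\lambda'\leq N\ell\tau^\ell$. To obtain the strict inequality claimed, I would argue $\MF((C\bfpsi)')<N\ell$: equality forces $(C\bfpsi)'$ to be concentrated at the unique maximal point $(N,\dots,N)$, yet every $\bfy$ in the support of $C\bfpsi$ survives mutation-plus-truncation (the no-change term has weight $(b')^\ell>0$), so such a $\bfy$ would have to equal $(N,\dots,N)$; but then a single downward mutation, of weight $a'(b')^{\ell-1}>0$, deposits surviving mass at $(N-1,N,\dots,N)\neq(N,\dots,N)$, a contradiction. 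Thus $\lambda'<N\ell$ and $\lambda<N\ell\tau^\ell$.

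The only delicate point, and the step I expect to be the real obstacle, is this strictness: the elementary bounds give only $\leq$, and the improvement to $<$ must come from the fact that mutation always leaves positive mass off the top fitness value, which is exactly where the positivity $a'=c'=\sqrt{ac}/\tau>0$ (equivalently $a,c>0$) enters. The remainder is bookkeeping: verifying $C\bfpsi\geq 0$ and $C\bfpsi\neq 0$, and confirming the two identifications of eigenvalues with norm ratios.
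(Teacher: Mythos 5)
Your proposal is correct and follows essentially the same route as the paper: identify the fixed point as an eigenvector of $\Sel\Mut_{ac}$ with eigenvalue $\lambda(\bfpsi)$, conjugate by the diagonal matrix $C$ via Lemma \ref{lemma: C commutes} (using that $C$ and $\Sel$ commute) to obtain an eigenvector $\bfvartheta=C\cdot\bfpsi$ of $\Sel\Mut_{a'a'}$ with eigenvalue $\lambda(\bfpsi)/\tau^\ell$, and then bound that eigenvalue by $N\ell$ via $\lambda(\bfvartheta)=\rho(\bfvartheta)\MF(\Mut_{a'a'}\cdot\bfvartheta)$. The only difference is that the paper asserts the strict inequality $\lambda(\bfvartheta)<N\ell$ without elaboration, whereas you supply a correct justification: mutation always deposits positive surviving mass off the top corner $(N,\dots,N)$, so $\MF(\Mut_{a'a'}\cdot\bfvartheta)<N\ell$.
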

\begin{proof}
Let $\bfpsi = \bfpsi_{ac,\ell,N}$.
That $\bfpsi$ is a fixed point for the asex process means  $\bfpsi=\Proje(\Sel\Mut_{ac}\cdot \bfpsi)$, or equivalently that $\bfpsi$ is an eigenvector for $\Sel\Mut_{ac}$ with eigenvalue $\lambda(\bfpsi)$, i.e., $\Sel\Mut_{ac}\cdot\bfpsi = \lambda(\bfpsi)\bfpsi$.
From Lemma \ref{lemma: C commutes} we have that $\tau^\ell \cdot \Sel\Mut_{a'a'} \cdot C = C\cdot \Sel\Mut_{ac}$.
It follows that $\bfvartheta = C\cdot \bfpsi$ is an eigenvector of $\Sel\Mut_{a'a'}$ with eigenvalue $\tau^{-\ell}\lambda(\bfpsi)$.
Thus
\[
\tau^\ell \lambda(\bfvartheta) = \lambda(\bfpsi),
\]
where $\lambda(\bfvartheta)$ is calculated using the mutation $\mu'(-1)=a'$, $\mu'(0)=b'$, $\mu'(1)=a'$.  
(We should use the notation $\lambda_{a'a'}(\bfvartheta)$ and $\lambda_{ac}(\bfpsi)$ to specify the mutation used, but it will be clear from context which definition we are using.)
Since $\lambda(\bfvartheta)=\rho(\bfvartheta)M(\Mut_{a'a'}\cdot \bfvartheta) < N\ell$, we have $\lambda(\bfpsi)< N\ell \tau^\ell$ as required.
\end{proof}

 Theorem \ref{main thm Asex} now follows from the following lemma.

\begin{lemma} \label{PF}
For every $a,b,c, \ell, N$, there is a unique $\bfpsi_{ac,\ell,N}\in \RR^{N^\ell}$ such that for any  non-negative, non-zero $\bfphi\in \RR^{N^\ell}$:
\[ \mbox{lim}_{t\rightarrow \infty} (\Proje\Sel\Mut_{ac})^t\cdot \bfphi = \bfpsi_{ac,\ell,N}. \] 
\end{lemma}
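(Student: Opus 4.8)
The plan is to recognise this statement as a direct application of the Perron--Frobenius theorem to the single non-negative matrix $A := \Sel\Mut_{ac}$ acting on $\RR^{N^\ell}$. First I would dispose of the normalisation. Since $\Proje(\bfpsi)=\bfpsi/|\bfpsi|$ is merely a positive rescaling and $A$ is linear, we have $A\,\Proje(\bfpsi)=(A\bfpsi)/|\bfpsi|$, and hence $\Proje(A\,\Proje(\bfpsi))=\Proje(A\bfpsi)$; by induction this gives $(\Proje\Sel\Mut_{ac})^t\cdot\bfphi=\Proje\big((\Sel\Mut_{ac})^t\cdot\bfphi\big)$ for every non-negative non-zero $\bfphi$ (the orbit never hits $0$, since the positive diagonal of $A$ forces $A\bfpsi\neq 0$ whenever $\bfpsi\geq 0$ is non-zero). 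Thus it suffices to study the iterates of $A$ and to show that their normalisations converge to a single direction independent of $\bfphi$.

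The key step is to verify that $A$ is \emph{primitive}, i.e.\ irreducible and aperiodic. The matrix $\Mut_{ac}$ has $(\bfx,\bfy)$-entry $\prod_{j=1}^\ell \mu(x_j-y_j)$ for $\bfx\in D$ and $0$ for $\bfx\notin D$, so its nonzero entries correspond exactly to pairs of points of the box $D=[1,N]^\ell$ differing by at most $1$ in each coordinate; since $a=\mu(-1)>0$ and $c=\mu(1)>0$, every nearest-neighbour step inside $D$ carries positive weight. Selection multiplies the $\bfx$-row by $F(\bfx)=\sum_i x_i$, which is strictly positive on $D$, so it destroys no entries. Because the grid $[1,N]^\ell$ is connected under nearest-neighbour moves, the directed graph of positive entries of $A$ is strongly connected, giving irreducibility; and each diagonal entry of $A$ equals $F(\bfx)\,b^\ell>0$, so $A$ has a strictly positive diagonal and is aperiodic. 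Consequently some power $A^m$ is strictly positive.

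With primitivity in hand, the Perron--Frobenius theorem supplies a simple eigenvalue $\lambda>0$ of strictly largest modulus, with a strictly positive right eigenvector $\bfpsi$ and left eigenvector $w>0$, each unique up to scaling; I would normalise $\bfpsi$ so that $|\bfpsi|=1$ and set $\bfpsi_{ac,\ell,N}=\bfpsi$. The standard limit $A^t/\lambda^t\to \bfpsi\, w^{\top}/(w^{\top}\bfpsi)$ then yields, for any non-negative non-zero $\bfphi$, that $A^t\bfphi/\lambda^t\to c\,\bfpsi$ with $c=(w^{\top}\bfphi)/(w^{\top}\bfpsi)>0$; applying $\Proje$ gives $\Proje(A^t\bfphi)\to\bfpsi$, which combined with the first paragraph is exactly the asserted limit. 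Uniqueness of $\bfpsi_{ac,\ell,N}$ is immediate from the simplicity of $\lambda$: any fixed point of $\Proje\, A$ is a normalised non-negative eigenvector of $A$, hence a positive multiple of $\bfpsi$.

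The only genuine obstacle is the primitivity check, and within it the irreducibility argument on the full $\ell$-dimensional box together with the observation that truncation, although it renders $A$ merely sub-stochastic, preserves both non-negativity and the nearest-neighbour positivity pattern that the whole argument relies on. Everything downstream of primitivity is the textbook Perron--Frobenius conclusion.
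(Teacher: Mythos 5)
Your proof is correct and follows essentially the same route as the paper: both apply the Perron--Frobenius theorem to the non-negative matrix $\Sel\Mut_{ac}$, establish primitivity (the paper simply asserts that all entries of $(\Sel\Mut_{ac})^N$ are positive, which your nearest-neighbour connectivity argument on the grid $[1,N]^\ell$ justifies in detail), and conclude that the normalised iterates converge to the unique positive dominant eigenvector regardless of the non-negative, non-zero starting population. Your explicit handling of the commutation of $\Proje$ with the linear dynamics and of the left eigenvector in the limit formula makes rigorous what the paper leaves implicit, but it is the same argument.
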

\begin{proof}
We apply the Perron-Frobenius theorem, which states that a non-negative, irreducible and primitive matrix has a positive (real) eigenvalue $\lambda$ whose absolute value is larger that that of any other eigenvalue, and that $\lambda$ has a unique (up to scaling) associated eigenvector  all whose coordinates are positive.
The matrix $\Sel\Mut_{ac}$ is {\em non-negative}, in the sense that all its entries are non-negative.
It is also {\em irreducible and primitive} because all the entries of $(\Sel\Mut_{ac})^N$ are positive.
So we can apply the Perron-Frobenius theorem and get a positive eigenvector $\bfpsi=\bfpsi_{ac,\ell,N}\in \RR^{N^\ell}$ which is a probability distribution with a positive eigenvalue $\lambda$ that is the largest in absolute value. As a corollary of the Perron-Frobenius theorem we also get that $\mbox{lim}_{t\rightarrow \infty} \ (\Sel\Mut_{ac})^t/\lambda^t$ is the projection to the eigenspace given by $\bfpsi$, and that this projection is non-zero for any non-zero non-negative initial population. 
This implies that $\bfpsi$ is a universal attractor of the system defined by iterating $\Sel\Mut_{ac}$ and normalisation.(For a similar application of the Perron-Frobenius Theorem in the previous literature, but which does not make use of the techniques established here to provide estimates for the mean of the resulting fixed point, see \cite{PM,PM2}.) 
\end{proof}

Before we move on to consider the asymptotic behaviour for the sex process, we need to form a stronger version of Theorem 
\ref{main thm Asex} for the 1-locus case (where the sex and asex processes are identical). While we shall not establish for general $\ell$ that $\lim_{N\to\infty}\lim_{t\to\infty}  \lambda_\asextt^t/N = \ell \tau^\ell$, we shall now do so for the case $\ell=1$ (since we shall later be able to apply this result in analysing the sex process). 

\begin{lemma} \label{added}
Let $\psi_{ac,N}$ be the probability distribution which is the fixed point of the 1-locus process, and let $\vartheta_{a'a',N} = \Proje(C \cdot \psi_{ac,N})$. Then: 
\begin{enumerate}
\item $\mbox{lim}_{N\rightarrow \infty} \lambda(\vartheta_{a'a',N})/N = 1$. 
\item $\mbox{lim}_{N\rightarrow \infty} \rho(\vartheta_{a'a',N})=1$. 
\item $\mbox{lim}_{N\rightarrow \infty} \MF(\Mut_{a'a'}\cdot \vartheta_{a'a',N})/N=1$. 
\end{enumerate}
\end{lemma}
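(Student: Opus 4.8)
The plan is to collapse all three items into a single statement about the dominant eigenvalue of $\Sel\Mut_{a'a'}$, and then to bound that eigenvalue from above and below. First note that, by the $\ell=1$ case of Lemma~\ref{lemma: C commutes} together with the fact that selection is a diagonal matrix (so that $\tau\cdot\Sel\Mut_{a'a'}\cdot C = C\cdot\Sel\Mut_{ac}$), the vector $\vartheta_{a'a',N}=\Proje(C\cdot\psi_{ac,N})$ is exactly the Perron eigenvector of $\Sel\Mut_{a'a'}$ on $[1,N]$ (its positivity and uniqueness coming from Lemma~\ref{PF}). Write $\lambda$ for its positive dominant eigenvalue, so that $\lambda(\vartheta_{a'a',N})=\lambda$. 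Applying the identity $\lambda(\bfpsi)=\rho(\bfpsi)\,\MF(\Mut_{a'a'}\cdot\bfpsi)$ to $\bfpsi=\vartheta_{a'a',N}$, and using $\rho\le 1$ and $\MF(\cdot)\le N$ (both valid here since $\ell=1$), I obtain $\lambda\le \MF(\Mut_{a'a'}\cdot\vartheta_{a'a',N})\le N$ and also $\rho(\vartheta_{a'a',N})=\lambda/\MF(\Mut_{a'a'}\cdot\vartheta_{a'a',N})\ge \lambda/N$. Hence it suffices to prove $\lambda/N\to 1$: then $\MF(\Mut_{a'a'}\cdot\vartheta_{a'a',N})/N$ is squeezed between $\lambda/N$ and $1$, giving (3); $\rho(\vartheta_{a'a',N})$ is squeezed between $\lambda/N$ and $1$, giving (2); and (1) is the statement $\lambda/N\to1$ itself.

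The upper bound $\lambda\le N$ is already in hand, so the crux is the lower bound $\lambda\ge N-o(N)$. The whole difficulty lies in the truncation at the top fitness $N$: one cannot simply use a profile concentrated at $N$, since mutation at the ceiling leaks mass upward, and a flat profile near $N$ yields only $\lambda\gtrsim(1-a')N$. I would get around this by comparing with the process confined to a thin band just below the ceiling. For a width parameter $w$ set $S=\{N-w,\dots,N\}$ and let $A_{\mathrm{band}}$ be the principal submatrix of $A:=\Sel\Mut_{a'a'}$ on the index set $S$. As $A$ is non-negative, the Perron root of any principal submatrix is at most that of $A$, so $\lambda\ge\rho(A_{\mathrm{band}})$. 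Now $A_{\mathrm{band}}=D_S\,M_{\mathrm{band}}$, where $D_S=\mathrm{diag}(i)_{i\in S}$ and $M_{\mathrm{band}}$ is the matching principal submatrix of $\Mut_{a'a'}$, a symmetric tridiagonal substochastic matrix with $b'$ on the diagonal and $a'$ off it (probability leaking out at both ends of the band). Since every diagonal entry of $D_S$ is at least $N-w$, we have $A_{\mathrm{band}}\ge (N-w)M_{\mathrm{band}}$ entrywise, and monotonicity of the Perron root under entrywise domination gives $\rho(A_{\mathrm{band}})\ge (N-w)\,\rho(M_{\mathrm{band}})$.

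It remains to estimate $\rho(M_{\mathrm{band}})$, and this is exactly where confining to a band pays off: the band matrix is explicit and tridiagonal, with eigenvalues $b'+2a'\cos\!\big(\pi k/(w+2)\big)$, so its Perron root is $b'+2a'\cos\!\big(\pi/(w+2)\big)=1-\Theta(a'/w^2)$ (using $b'+2a'=1$). Combining the chain of inequalities gives $\lambda\ge (N-w)\big(1-\Theta(a'/w^2)\big)=N-\Theta(w)-\Theta(a'N/w^2)$. Choosing $w\asymp N^{1/3}$ balances the two error terms and yields $\lambda\ge N-O(N^{1/3})$, so $\lambda/N\to 1$, and (1)--(3) follow by the squeezing argument above. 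The step that deserves care is checking that the principal submatrix $A_{\mathrm{band}}$ genuinely coincides with $D_S M_{\mathrm{band}}$ at the top index $i=N$, where $A$ already carries the truncation; this is automatic, since deleting the out-of-band column $N+1$ is precisely the truncation already built into $A$. The substantive point, namely decoupling the selection strength $\approx N$ from the $O(1/w^2)$ cost of mutational leakage out of the band, is exactly what the band comparison delivers, and it is this comparison that tames the top boundary.
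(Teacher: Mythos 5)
Your proof is correct, and it takes a genuinely different route from the paper's. Two differences stand out. First, the reduction: you derive items (2) and (3) from item (1) alone by squeezing, using $\lambda(\bfpsi)=\rho(\bfpsi)\,\MF(\Mut_{a'a'}\cdot\bfpsi)$ with $\rho\le 1$ and $\MF\le N$; the paper proves (1) and (2) by separate arguments and only then reads off (3). Second, the core of (1): your argument is variational and linear-algebraic --- identify $\lambda$ with the Perron root of $A=\Sel\Mut_{a'a'}$, lower-bound it by the Perron root of the principal submatrix on the band $\{N-w,\dots,N\}$, factor that as $D_S M_{\mathrm{band}}$ with $D_S\ge(N-w)I$ entrywise, and evaluate $\rho(M_{\mathrm{band}})=b'+2a'\cos\big(\pi/(w+2)\big)$ exactly from the tridiagonal Toeplitz spectrum (your check that the truncation at row/column $N$ makes $M_{\mathrm{band}}$ exactly Toeplitz is the right thing to verify, and it holds because the paper's truncation erases rather than reflects mass). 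Choosing $w\asymp N^{1/3}$ gives the quantitative bound $\lambda\ge N-O(N^{1/3})$, which is more than the paper establishes. The paper's proof is instead dynamical: assuming $\lambda_N/N\to\kappa<1$ along a subsequence, it studies the ratios $\eta_N(k)=\vartheta_N(k+1)/\vartheta_N(k)$ forced by the fixed-point equations, passes to limits $R(x)=\lim\eta_N(N-x)^{-1}$, shows $R$ decreases to some $\alpha$ with $\alpha+\alpha^{-1}=(\kappa-b')/a'$, and concludes $\kappa\ge 1$, a contradiction; item (2) is then proved by showing the boundary masses $\vartheta_N(1)$ and $\vartheta_N(N)$ vanish. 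What your route buys is brevity, an explicit convergence rate, and no need for the ratio machinery; what the paper's route buys is that machinery itself: the $\eta_N$/$R(x)$ recursions and the boundary-mass estimates are re-used almost verbatim in Lemma \ref{secondadded} and in the proof of Lemma \ref{lemma: boundaries}, so the paper's proof does double duty for the later stages of the sex bound (Theorem \ref{main thm Sex}). Note finally that your approach still recovers the boundary decay indirectly, since for the normalised eigenvector $\rho(\vartheta_{a'a',N})=1-a'\big(\vartheta_{a'a',N}(1)+\vartheta_{a'a',N}(N)\big)$, so your item (2) implies both boundary values tend to $0$.
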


\begin{proof} Since we consider $a,b$ and $c$ to be fixed, let $\psi_N=\psi_{ac,N}$ and $\vartheta_N=\vartheta_{a'a',N}$. 
We shall establish (1) and (2), and then (3) follows immediately from the definition of $\lambda(\vartheta_{N})$. 
The key to understanding $\vartheta_N$ is to calculate the following quotients.
For $k<N$, define:  
\[
\eta_{N}(k) =\frac{\vartheta_N(k+1)}{\vartheta_N(k)}.
\]
Let $\lambda_N=\lambda(\vartheta_N)$. Since $\vartheta_N$ is a fixed point we have that $\vartheta_N(1) = (b' \vartheta_N(1)+a' \vartheta_N(2))/\lambda_N$ and $\vartheta_N(N)= (c' \vartheta_N(N-1)+b' \vartheta_N(N))N/\lambda_N$.
It follows that: 
\[
\eta_N(1) = \frac{\lambda_N-b'}{a'}
\quad\and\quad
\eta_N({N-1}) = \frac{c'}{\lambda_N/N-b'}.
\]

\noindent For $x\notin \{ 1,N \}$ we have $\vartheta_N(x)=(c' \vartheta_N(x-1)+b' \vartheta_N(x) + a'\vartheta_N(x+1))x/\lambda_N$.
Using that $\vartheta_N(x+1)=\eta_N(x)\vartheta_N(x)$, we get: 
\[
c' \eta_N({x-1})^{-1}+ a'  \eta_N(x)=\lambda_N/x-b'.
\]

Now suppose that (1) does not hold. In this case there exists an infinite set $\Pi\subseteq \NN$, such that  $\lim_{N\in \Pi}\lambda_N/N = \kappa <1$ (note that $\lambda_N\leq N$). 
For each $x\in\NN$, define: 
\[
R(x) = \lim_{N\in \Pi} \eta_N(N-x)^{-1}.
\]
From the formulas for $\eta(N-x)$ above (and using that $a'=c'$), we deduce that $R$ satisfies the following inductive definition:
\[
R(1) = \frac{\kappa-b'}{a'}
\quad\and\quad
R(k+1) = \frac{\kappa-b'}{a'} - R(k)^{-1}.
\]
All values of $R$ are non-negative, because so are the corresponding values of $\eta_N(k)$.
Notice that $R(2)<R(1)$, and that $R(k)<R(k-1)$ implies $R(k+1)<R(k)$, from which we may conclude  that $R$ is decreasing.
$R$ must then have a limit, $\alpha$ say.
This limit must satisfy $\alpha  + \alpha^{-1}= (\kappa-b')/a'$.
Since for every $\alpha\in \RR^+$, $\alpha  + \alpha^{-1}\geq 2$, $2\leq (\kappa-b')/a'$. From the fact that  $b'=1-2a'$, it follows that $\kappa\geq 1$, which gives the required contradiction.

In order to establish (2), we show first of all that $\mbox{lim}_{N\rightarrow \infty} \vartheta_N(1)=0$. This now follows easily, however, from the fact that $\mbox{lim}_{N\rightarrow \infty} \lambda_N =\infty$ and $\eta_N(1)=(\lambda_N-b')/a'$. 

The final step is to show that $\mbox{lim}_{N\rightarrow \infty} \vartheta_N(N)=0$. Once again, consider the sequence $R(x)$ as defined above. 
We have:
\[
R(1) = \frac{1-b'}{a'}=2
\quad\and\quad
R(x+1) = 2 - R(x)^{-1}.
\]
We conclude that $R(x)> 1$ for all $x$. From this it follows that for each $x$ and all sufficiently large $N$,   $\eta_N(N-x)< 1$. This suffices to ensure that  $\mbox{lim}_{N\rightarrow \infty} \vartheta_N(N)=0$, as required. 
\end{proof}

\begin{lemma} \label{secondadded}
Let $\psi_{ac,N}$ be the probability distribution which is the fixed point of the 1-locus process. Then: 
\begin{enumerate}
\item $\mbox{lim}_{N\rightarrow \infty} \lambda(\psi_{ac,N})/N = \tau$. 
\item $\mbox{lim}_{N\rightarrow \infty} \rho(\psi_{ac,N})=1$. 
\item $\mbox{lim}_{N\rightarrow \infty} \MF(\Mut_{ac} \cdot \psi_{ac,N})/N=\tau$. 
\end{enumerate}
\end{lemma}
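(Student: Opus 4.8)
The plan is to reduce everything to the symmetric-mutation facts already established in Lemma \ref{added}, using the conjugation by $C$ from Lemma \ref{lemma: conjugating mutation}. Throughout I would write $\psi_N = \psi_{ac,N}$ and $\vartheta_N = \vartheta_{a'a',N} = \Proje(C\cdot\psi_N)$, and recall that $C(x,x) = (a/c)^{x/2}$, so that $\psi_N(x)$ is proportional to $(c/a)^{x/2}\vartheta_N(x)$. The three parts are then handled in order, with parts (1) and (3) essentially free and all the real content sitting in part (2).

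First I would dispatch part (1). Since $\psi_N$ is the normalised Perron eigenvector of $\Sel\Mut_{ac}$ with eigenvalue $\lambda(\psi_N)$, the identity $\tau\cdot\Sel\Mut_{a'a'}\cdot C = C\cdot\Sel\Mut_{ac}$ (which follows from Lemma \ref{lemma: conjugating mutation}, as $\Sel$ is diagonal and commutes with $C$) shows that $C\psi_N$, and hence its normalisation $\vartheta_N$, is an eigenvector of $\Sel\Mut_{a'a'}$ with eigenvalue $\tau^{-1}\lambda(\psi_N)$. Thus $\lambda(\psi_N) = \tau\,\lambda(\vartheta_N)$, and part (1) follows immediately from Lemma \ref{added}(1), which gives $\lambda(\vartheta_N)/N \to 1$.

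The substance is part (2): showing $\rho(\psi_N) = 1 - a\,\psi_N(1) - c\,\psi_N(N) \to 1$, i.e.\ that both boundary masses vanish. For the lower boundary this is immediate: the fixed-point equation at $x=1$ gives $\psi_N(2)/\psi_N(1) = (\lambda(\psi_N) - b)/a$, and since $\lambda(\psi_N) = \tau\lambda(\vartheta_N)\to\infty$ (because $\lambda(\vartheta_N)/N\to 1$ and $N\to\infty$), we get $\psi_N(1) \le a/(\lambda(\psi_N) - b) \to 0$. The upper boundary is the main obstacle, and the key device is the ratio identity: writing $\eta_N(x) = \vartheta_N(x+1)/\vartheta_N(x)$, the proportionality $\psi_N(x)\propto(c/a)^{x/2}\vartheta_N(x)$ gives
\[
\frac{\psi_N(x+1)}{\psi_N(x)} = \sqrt{c/a}\;\eta_N(x).
\]
The proof of Lemma \ref{added}(2) already established that for each fixed $k$ one has $\eta_N(N-k) < 1$ for all sufficiently large $N$. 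Hence, for a fixed $K$ and all large $N$, each ratio $\psi_N(N-k+1)/\psi_N(N-k) < \sqrt{c/a}$ for $k = 1,\dots,K$, and multiplying these $K$ telescoping factors gives $\psi_N(N) \le (c/a)^{K/2}\,\psi_N(N-K) \le (c/a)^{K/2}$. Since $c < a$ and $K$ is arbitrary, $\psi_N(N) \to 0$, completing part (2).

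Finally, part (3) drops out by combining the two previous parts with the decomposition $\lambda(\psi_N) = \rho(\psi_N)\,\MF(\Mut_{ac}\cdot\psi_N)$: dividing by $N$ and using $\lambda(\psi_N)/N \to \tau$ together with $\rho(\psi_N) \to 1$ yields $\MF(\Mut_{ac}\cdot\psi_N)/N \to \tau$. I expect the only genuinely delicate point to be the uniform control of the upper-boundary ratios $\eta_N(N-k)$ across $k=1,\dots,K$; here one exploits that there are only finitely many indices to control for each fixed $K$, so the per-index limits from Lemma \ref{added} combine into a single threshold $N_K$. Everything else is bookkeeping around the conjugation identity and the already-proved symmetric case.
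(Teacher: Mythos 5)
Your proof is correct, and its skeleton is the same as the paper's: parts (1) and (3) are obtained, exactly as in the paper, from the conjugation identity $\lambda(\psi_N)=\tau\,\lambda(\vartheta_N)$ (established in Lemma \ref{lemma: asex fixed point bound}) together with Lemma \ref{added}, and your lower-boundary estimate $\psi_N(1)\to 0$ via $\eta_N(1)=(\lambda_N-b)/a\to\infty$ is literally the paper's argument. The only place you diverge is the upper boundary. The paper disposes of it in one line: since $\psi_N\prec\vartheta_N$ (multiplying by $C^{-1}$, i.e.\ by the decreasing weights $(c/a)^{x/2}$, tilts the distribution downward in the $\preceq$ order), the fact $\vartheta_N(N)\to 0$ from Lemma \ref{added} transfers to $\psi_N(N)\to 0$ --- an argument that silently uses the true but unstated fact that $\psi\preceq\vartheta$ for normalised distributions forces $\psi(N)\le\vartheta(N)$. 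You instead unwind the conjugation into the ratio identity $\psi_N(x+1)/\psi_N(x)=\sqrt{c/a}\;\eta_N(x)$ and telescope over the top $K$ sites, using the bound $\eta_N(N-k)<1$ (for each fixed $k$ and all large $N$) extracted from the proof of Lemma \ref{added}. Both routes rest on the same underlying ingredient; yours is more explicit and quantitative, giving the exponential bound $\psi_N(N)\le (c/a)^{K/2}$ rather than bare convergence, at the cost of citing an interior step of Lemma \ref{added}'s proof rather than only its statement, whereas the paper's version is terser and reuses the $\preceq$-machinery it has already built. Your treatment of the uniformity issue (finitely many indices $k\le K$, hence a single threshold $N_K$) is also correct, so there is no gap.
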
 
\begin{proof}
Again, let $\psi_N=\psi_{ac,N}$ and let $\vartheta_N=\vartheta_{a'a',N}$ be as defined in the statement of Lemma \ref{added}. Given Lemma \ref{added}, and the fact that $\tau \lambda(\vartheta_N)=\lambda(\psi_N)$ (as established in the proof of Lemma \ref{lemma: asex fixed point bound}), it suffices to establish (2). That  $\mbox{lim}_{N\rightarrow \infty} \psi_N(N)=0$, follows from the corresponding fact for $\vartheta_N$, however, since $\psi_N \prec \vartheta_N$. It remains then, to show that $\mbox{lim}_{N\rightarrow \infty} \psi_N(1)=0$. We use a similar method to the proof of Lemma \ref{added}. This time for $k<N$, define:  
\[
\eta_{N}(k) =\frac{\psi_N(k+1)}{\psi_N(k)}.
\]
Now let $\lambda_N=\lambda(\psi_N)$. We have that: 
\[ \eta_N(1) = \frac{\lambda_N-b}{a}. \] 
The result then follows from the fact that $\mbox{lim}_{N\rightarrow \infty} \lambda_N =\infty$.  
\end{proof}

%
%
%
%
%
%


\subsubsection{The asymptotic behavior for sex}   \label{ss: limit sex}

The rest of Section \ref{ss: theorem 3} is dedicated to proving Theorem \ref{main thm Sex}, which gives  a lower bound for the geometric mean of $\lambda_\sextt^t$. For all sufficiently large $N$:
\[
 \GMinf_{t\to\infty}(\lambda_\sextt^t) > N \ell\tau^\ell.
\]
The proof requires a sequence of lemmas, some of which we will state now and  prove in later subsections.
Before describing the general architecture of the argument, let us consider how to analyse the $\ell$-locus sex process by looking at the different loci individually.
The reason we can do this is that, since the sex population stays at linkage equilibrium, its  probability distribution is determined by the product of the distributions of the individual loci.

\

\label{gvec}
Let $\bfpsi\in \RR^{N^\ell}$ be a population where all the loci are independent, as for instance after an application of recombination. 
Assume $\bfpsi$ has been normalised.
Let $ \psi_i\in \RR^N$ be the probability distribution for the $i$th locus.
We would like to analyse the $abc$-sex-process on $\bfpsi$ by analysing its process on $ \psi_i$.
It is not hard to see that if a population is at linkage equilibrium then the effect of mutation and truncation on the whole population is equivalent to considering the effect of mutation on each single locus independently as we did in the proof of Lemma \ref{lemma dagger mut}.
Let $ \psi'_i= \Mut_{ac,1}\cdot \psi_i$, where $\Mut_{ac,1}$ is the 1-locus mutation, and $\rho_i=\rho( \psi_i)=| \psi'_i|/| \psi_i|$.
If we let $\bfpsi'=\Mut_{ac}\cdot \bfpsi$, then we have that $\bfpsi'(\bfx)=\prod_{i=1}^\ell \psi'(x_i)$ and $\rho(\bfpsi)=\prod_{i=1}^\ell \rho( \psi_i)$.
Finally, we let $W_i=\MF( \psi_i')$ and $\Wha_i=\sum_{j\neq i} W_j$; let us recall that $\MF(\bfpsi')=\sum_{i=1}^\ell W_i = W_i+\Wha_i$.
From Lemma \ref{lemma: selection 1 locus} we have that the effect of selection on a single locus is given by $ \psi^*_i=\Sel_{\Wha_i}\cdot \psi_i'$, where $\Sel_{K}$ is the diagonal matrix with $\Sel_K(j,j)=(j+K)/M$.
Since this matrix actually depends also on $M$, from now on, to simplify the notation we let \[ \Sel_K(j,j)=j+K\] and leave the normalisation for later if necessary.
The increase in norm produced by $\Sel_K$ applied to  $\psi\in\RR^N$ is then $\MF(\psi)+K$. Hence  for $\psi_i'=\Mut_{ac,1}\cdot \psi_i$ we have: 
\[
\lambda(\psi_i):= \frac{|\Sel_{\Wha_i}\cdot \Mut_{ac,1} \cdot \psi_i |}{|\psi_i |} = \rho(\psi_i)(\MF(\psi'_i)+\Wha_i). 
\]

\

The next step is to describe how we are going to use the translation to the $a'=c'$ case for sex populations. 
Consider the following setting.
Let $\{K^t\}_{t\in\NN}$ be a sequence of real numbers in $[0,(\ell-1)N]$ (which will later represent the sequence $\Wha^t_i$ for some fixed $i$).  
Let $\psi^0\in \RR^N$ (later this will represent the initial sexual population at some locus), and let $\vartheta^0=C\cdot \psi^0$.
Assume that $\psi^0$ is not the zero vector. For every $t\in\NN$, define: 
\[
\psi^{t+1}=\Sel_{K^t}\cdot \Mut_{ac,1}\cdot \psi^t
\quad\mbox{ and }\quad
\vartheta^{t+1}=\Sel_{K^t}\cdot \Mut_{a'a',1}\cdot \vartheta^t.
\]
From Lemma \ref{lemma: conjugating mutation} we have $\tau^{t} \cdot \vartheta^t =  C\cdot \psi^t$ for every $t$.
Define:
\[
\lambda^t_{\psi}= \frac{|\psi^{t+1}|}{|\psi^t|} = \rho(\psi^t)\cdot (\MF((\psi^t)^{\prime})+K^t)
\quad\mbox{ and }\quad
\lambda^t_{\vartheta}=\frac{|\vartheta^{t+1}|}{|\vartheta^t|} = \rho(\vartheta^t)\cdot (\MF((\vartheta^t)^{\prime})+K^t),
\]
where $(\psi^t)^{\prime}=\Mut_{ac,1}\cdot \psi^t$ and $(\vartheta^t)^{\prime}=\Mut_{a'a',1}\cdot \vartheta^t$.

If  $\psi^t$ was a fixed point, then using that $\tau^{t} \cdot \vartheta^t =  C\cdot \psi^t$ we could conclude that $\vartheta^t$ is also a fixed point (all given the appropriate normalisations), and that $\lambda^t_\psi=\tau \lambda^t_{\vartheta}$ as in the proof of Lemma \ref{lemma: asex fixed point bound}.
Even without assuming that the process converges to a limit distribution, we still get that these values have the same geometric means:

\begin{lemma}\label{lemma: GMs} $\GM_{t\to\infty}(\lambda^t_{\psi}/\lambda^t_{\vartheta})=\tau.$
\end{lemma}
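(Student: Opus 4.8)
The plan is to exploit the exact conjugacy $\tau^t\cdot\vartheta^t = C\cdot\psi^t$ recorded in the setup, and thereby reduce the claim to a telescoping product whose only non-formal ingredient is a bounded distortion coming from the diagonal matrix $C$. First I would confirm that conjugacy by induction on $t$: the base case is the definition $\vartheta^0=C\cdot\psi^0$, and for the inductive step one writes $\tau^{t+1}\vartheta^{t+1}=\Sel_{K^t}(\tau\Mut_{a'a',1})(\tau^t\vartheta^t)$, substitutes Lemma \ref{lemma: conjugating mutation} in the form $\tau\Mut_{a'a',1}=C\Mut_{ac,1}C^{-1}$ together with the inductive hypothesis $\tau^t\vartheta^t=C\psi^t$, and then uses that $C$ and $\Sel_{K^t}$ are both diagonal, hence commute, to pull $C$ to the front, obtaining $C\Sel_{K^t}\Mut_{ac,1}\psi^t=C\psi^{t+1}$.

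Next I would set $r_t=|\psi^t|/|\vartheta^t|$, which is well defined because, with $K^t\geq 0$, selection is a strictly positive diagonal matrix while mutation has positive diagonal entry $b$, so none of the norms vanish. A direct computation then gives the telescoping identity
\[
\frac{\lambda^t_{\psi}}{\lambda^t_{\vartheta}} = \frac{|\psi^{t+1}|\,|\vartheta^{t}|}{|\psi^{t}|\,|\vartheta^{t+1}|} = \frac{r_{t+1}}{r_t},
\]
so that $\prod_{t=0}^{n-1}(\lambda^t_{\psi}/\lambda^t_{\vartheta}) = r_n/r_0$ and hence the geometric mean over the first $n$ steps is $(r_n/r_0)^{1/n}$. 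Since $r_0^{1/n}\to 1$, the lemma reduces to showing $r_n^{1/n}\to\tau$.

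Finally, from $\vartheta^n=\tau^{-n}C\psi^n$ one gets $r_n=\tau^n\,|\psi^n|/|C\psi^n|$, so all that remains is to control the distortion factor $|C\psi^n|/|\psi^n|$. Writing $\beta=\sqrt{a/c}>1$ (recall $c<a$), the diagonal entries of $C$ satisfy $\beta\leq C(x,x)=\beta^x\leq\beta^N$ for $x\in\{1,\dots,N\}$, whence $\beta\,|\psi^n|\leq|C\psi^n|\leq\beta^N|\psi^n|$ for every $n$. As these bounds are constants independent of $n$, taking $n$-th roots gives $(|\psi^n|/|C\psi^n|)^{1/n}\to 1$ and therefore $r_n^{1/n}\to\tau$, yielding $\GMsup_t=\GMinf_t=\tau$ and hence $\GM_{t\to\infty}(\lambda^t_{\psi}/\lambda^t_{\vartheta})=\tau$. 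The one point requiring care, and the only place where the argument is not purely formal, is precisely this last step: because the taxicab norm does not commute with $C$, one cannot equate $|\vartheta^n|$ with a fixed multiple of $|\psi^n|$, and the whole proof hinges on the observation that the $C$-distortion, though it may be as large as $\beta^N$, is bounded uniformly in $n$ and so washes out after extracting $n$-th roots.
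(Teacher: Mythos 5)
Your proof is correct and takes essentially the same route as the paper: both exploit the conjugacy $\tau^{t}\vartheta^{t}=C\cdot\psi^{t}$ to telescope the product of the ratios $\lambda^t_{\psi}/\lambda^t_{\vartheta}$ down to a single distortion factor of the form $|C\cdot\psi^{n}|/|\psi^{n}|$, observe that this factor is bounded above and below uniformly in $n$, and let it wash out under $n$-th roots. The only (immaterial) difference is how the uniform bound is justified: you use the explicit diagonal bounds $\beta\leq C(x,x)\leq\beta^{N}$ with $\beta=\sqrt{a/c}$, while the paper gets the same conclusion by compactness of the probability simplex and continuity of $\psi\mapsto|C\cdot \Proje(\psi)|$.
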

\begin{proof}
For every $k$, we have that:
\[
 \frac{\tau^{k} |\vartheta^k|}{|\vartheta^0|} = \frac{|C\cdot \psi^k|}{| C\cdot \psi^0|} = \frac{| \psi^k|}{|  \psi^0|}\frac{|C\cdot \Proje(\psi^k)|}{| C\cdot \Proje(\psi^0) |}.
\]
The set $\{\phi\in ({\RR^{\geq 0}})^N: |\phi|=1\}$ is compact and hence the image of the continuous map $\psi\mapsto |C\cdot \Proje(\psi)|$ is a closed interval of the form $[\alpha, \alpha\beta]$ for $0<\alpha$ and $1\leq\beta$ (we get that $\alpha>0$ because $|C\cdot \Proje(\psi))|$ is always positive).
We then have:
\[
\GM_{t< k}(\lambda^t_{\psi}) = \sqrt[k]{\frac{| \psi^k|}{|  \psi^0|} }\leq   \sqrt[k]{\beta \frac{\tau^{k} | \vartheta^k|}{|  \vartheta^0|}}  =  \tau \sqrt[k]{\beta}\ \GM_{t< k}(\lambda^t_{\vartheta}) . 
\]
Symmetrically $\GM_{t< k}(\lambda^t_{\psi}) \geq \tau\sqrt[k]{\beta^{-1}} \GM_{t< k}(\lambda^t_{\vartheta})$.
The lemma then follows from the fact that both $\sqrt[k]{\beta^{-1}}$ and $\sqrt[k]{\beta}$ converge to $1$ as $k\to\infty$.
\end{proof}

The next step is to give an approximate calculation for $\lambda^t_{\vartheta}$, which holds irrespective of the choice for $\psi^0$.
The next lemma shows that  $\lambda^t_{\vartheta}$ is eventually always close to $N+K^t$.
The reason this holds is that $\rho(\vartheta^t)$ is eventually always close to 1, and  $\MF((\vartheta^t)^{\prime})$ is eventually always close to $N$ because positive mutations are as likely as negative ones.  

\begin{lemma}\label{lemma: boundaries}
For any choice of non-negative $\psi^0\in \RR^N $ and $\{K^t\}_{t\in\NN}$, let $\{\lambda_{\vartheta}^t\}_{t\in \NN}$ be defined as above.  There is a sequence $\{\epsilon_N\}_{N\in\NN}$ converging to 0 such that, for every $N$, every sequence $\{K^t\}_{t\in\NN}$ and every non-negative $\psi^0$, the following holds for all sufficiently large $t$: 
\[
1-\epsilon_N < \frac{\lambda^t_{\vartheta}}{N+K^t}  < 1. 
\]
\end{lemma}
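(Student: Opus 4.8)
The plan is to reduce the estimate to two facts about the one-locus iteration $\vartheta^{t+1}=\Sel_{K^t}\cdot\Mut_{a'a',1}\cdot\vartheta^t$, namely that $\rho(\vartheta^t)\to 1$ and $\MF((\vartheta^t)')/N\to 1$, both \emph{uniformly} in the sequence $\{K^t\}$ and the initial vector $\psi^0$. Granting these, the bound follows from the identity $\lambda^t_\vartheta=\rho(\vartheta^t)(\MF((\vartheta^t)')+K^t)$ recorded just above the lemma: the upper bound is immediate from $\rho(\vartheta^t)\le 1$ and $\MF((\vartheta^t)')\le N$, and it is strict since $\rho(\vartheta^t)=1$ and $\MF((\vartheta^t)')=N$ cannot hold simultaneously (the second forces all mass onto the top point, part of which is then lost to truncation as $a'>0$). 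For the lower bound I would use $\MF((\vartheta^t)')+K^t\ge (\MF((\vartheta^t)')/N)(N+K^t)$, valid because $\MF((\vartheta^t)')\le N$ and $K^t\ge 0$, to obtain $\lambda^t_\vartheta/(N+K^t)\ge \rho(\vartheta^t)\cdot\MF((\vartheta^t)')/N$. This step is what renders the dependence on $\{K^t\}$ harmless, leaving exactly the two uniform estimates above.

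The device for obtaining uniformity is to sandwich $\vartheta^t$, in the ordering $\preceq$ of \S\ref{ss: theorem 2}, between two \emph{fixed} reference processes independent of $\psi^0$ and $\{K^t\}$. Let $\zeta^t$ be the one-locus $a'a'$-process with the constant maximal shift $\bar K=(\ell-1)N$ started from the point mass $\delta_1$, and let $\xi^t$ be the one with shift $0$ started from $\delta_N$. Since $\delta_1\preceq\vartheta^0\preceq\delta_N$ and $0\le K^t\le\bar K$, Lemmas~\ref{mutpres} and~\ref{lemma: Sel preserves prec}---applied at each step with the larger shift on the $\preceq$-smaller population, and using that $\preceq$ is unaffected by normalisation---give $\zeta^t\preceq\vartheta^t\preceq\xi^t$ for all $t$ by induction. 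Lemma~\ref{lemma prec E} then yields $\MF(\vartheta^t)\ge\MF(\zeta^t)$, and summing the cross-product inequalities defining $\preceq$ (e.g. $\Proje(\vartheta^t)(x)/\Proje(\vartheta^t)(N)\ge\Proje(\xi^t)(x)/\Proje(\xi^t)(N)$ over $x$, and symmetrically at the bottom) gives the pointwise boundary bounds $\Proje(\vartheta^t)(N)\le\Proje(\xi^t)(N)$ and $\Proje(\vartheta^t)(1)\le\Proje(\zeta^t)(1)$. Hence $1-\rho(\vartheta^t)=a'(\Proje(\vartheta^t)(1)+\Proje(\vartheta^t)(N))\le a'(\Proje(\zeta^t)(1)+\Proje(\xi^t)(N))$.

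Each reference is an iteration of a single fixed non-negative primitive matrix with normalisation, so by the Perron--Frobenius argument of Lemma~\ref{PF} it converges to its fixed point; because $\zeta$ and $\xi$ are one process each (per $N$), the convergence rate depends on $N$ alone, and this is precisely the source of the uniformity in $\psi^0$ and $\{K^t\}$. For the $0$-shift reference, Lemma~\ref{added} already gives that its fixed point $\vartheta_{a'a',N}$ has vanishing top mass, so $\Proje(\xi^t)(N)\to 0$. If I can show that the fixed point of the $\bar K$-shift reference has mean fitness within $o(N)$ of $N$, then concentration near the top forces $\Proje(\zeta^t)(1)\to 0$ as well, giving $\rho(\vartheta^t)\to 1$; moreover $\MF(\zeta^t)/N\to 1$ yields $\MF(\vartheta^t)/N\to 1$, and then $\MF((\vartheta^t)')/N\to 1$ since symmetric mutation preserves the mean while the truncated mass (being $o(1)$ and within distance one of the domain) shifts it by $o(N)$. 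Feeding these into the first paragraph gives $\lambda^t_\vartheta/(N+K^t)>1-\epsilon_N$ with $\epsilon_N\to 0$.

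The main obstacle is therefore the single remaining estimate: the fixed point of the constant-$\bar K$ process has mean fitness $N(1-o(1))$, where $\bar K=(\ell-1)N$ itself grows with $N$. I would obtain it by transferring Lemma~\ref{added}. Under the index shift $y=x+\bar K$, the constant-shift process with selection weight $(x+\bar K)$ on $[1,N]$ is exactly a $0$-shift $a'a'$-process with selection weight $y$ on the interval $[1+\bar K,\,N+\bar K]$, whose top endpoint is $N+\bar K=\ell N$. The recursion-and-contradiction argument that proves $\lambda/N\to 1$ in Lemma~\ref{added} is governed entirely by the behaviour at the upper endpoint and needs only that this endpoint tend to infinity; since $\ell N\to\infty$ and the interval length $N$ greatly exceeds the width of the top boundary layer, the same argument gives that the Perron eigenvalue satisfies $\lambda\ge(N+\bar K)(1-o(1))$. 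Finally $\lambda=\rho\,(\MF'+\bar K)\le \MF'+\bar K$ forces the post-mutation mean $\MF'\ge N-o(N)$, as needed. Carrying the endpoint analysis of Lemma~\ref{added} through while $\bar K$ scales with $N$ is where the real work lies.
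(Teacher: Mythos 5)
Your proposal is correct and takes essentially the same route as the paper's own proof: the paper likewise reduces the bound to showing $\MF((\vartheta^t)')$ is close to $N$ and the boundary masses are close to $0$, sandwiches $\vartheta^t$ in the $\preceq$-ordering between constant-shift reference processes with $K=(\ell-1)N$ and $K=0$, applies the Perron--Frobenius theorem to obtain their fixed points, and analyses those fixed points through the quotient recursion used in Lemma \ref{added}. The step you flag as ``where the real work lies'' --- carrying the endpoint analysis through when the shift scales like $(\ell-1)N$ --- is exactly what the paper executes, via the contradiction argument with $R(x)=\lim\eta_N(N-x)^{-1}$ and $\kappa=\lim\lambda_N/(\ell N)$, which confirms your observation that the argument is governed solely by the upper endpoint tending to infinity.
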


The proof of this lemma is a little technical, so we delay it until Subsection \ref{sss: boundaries}.
The following is a small lemma concerning geometric means, which will allow us to compare the geometric means of $\lambda^t_{\vartheta}$ and $K^t$.

\begin{lemma}\label{lemma GM of sums}
Let $\{a^t\}_{t<k}$ be a sequence of positive real numbers and let $b$ be a positive number.
Then $b + \GM_{t< k}(a^t)  \leq  \GM_{t<k}(b+a^t)$.
\end{lemma}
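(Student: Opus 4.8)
The plan is to recognise Lemma \ref{lemma GM of sums} as the superadditivity of the geometric mean (Mahler's inequality) in the special case where the second sequence is the constant sequence equal to $b$. Write
\[
G=\GM_{t<k}(b+a^t)=\sqrt[k]{\prod_{t} (b+a^t)},
\]
which is strictly positive since $b>0$ and every $a^t>0$. The strategy is to bound the two ``shares'' $a^t/(b+a^t)$ and $b/(b+a^t)$ from below by their respective geometric means using the AM--GM inequality, and then to add.

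First I would apply AM--GM to the $k$ numbers $a^t/(b+a^t)$:
\[
\frac{1}{k}\sum_{t} \frac{a^t}{b+a^t} \ \geq\ \Big(\prod_{t} \frac{a^t}{b+a^t}\Big)^{1/k} = \frac{\GM_{t<k}(a^t)}{G}.
\]
Applying AM--GM again to the $k$ numbers $b/(b+a^t)$, and using $\GM_{t<k}(b)=b$, gives
\[
\frac{1}{k}\sum_{t} \frac{b}{b+a^t} \ \geq\ \Big(\prod_{t} \frac{b}{b+a^t}\Big)^{1/k} = \frac{b}{G}.
\]
Adding these two inequalities, the left-hand side collapses to $\frac{1}{k}\sum_{t} \frac{a^t+b}{a^t+b}=1$, so that $1 \geq (\GM_{t<k}(a^t)+b)/G$, which rearranges to $G \geq b+\GM_{t<k}(a^t)$, exactly the claim.

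There is essentially no obstacle here: the positivity of $b$ and of every $a^t$ guarantees $G>0$, so dividing by $G$ is legitimate, and the only point requiring a moment's attention is to observe that the two AM--GM bounds sum to exactly $1$ on the left. As a sanity check, in the case $k=2$ the claim reduces, after squaring $\sqrt{(b+a^0)(b+a^1)}\geq b+\sqrt{a^0a^1}$, to the plain two-term inequality $a^0+a^1\geq 2\sqrt{a^0a^1}$, so the general statement is simply the $k$-term strengthening of AM--GM that Mahler's inequality provides.
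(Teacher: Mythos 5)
Your proof is correct, and it takes a genuinely different route from the paper's. The paper disposes of this lemma in one line via Jensen's inequality: it applies $\varphi\bigl(k^{-1}\sum_i\gamma_i\bigr)\leq k^{-1}\sum_i\varphi(\gamma_i)$ to $\gamma_i=\log(a^i)$ and the function $\varphi(x)=\log(b+e^{x})$, whose convexity is verified by computing $\varphi''(x)=b e^x/(b+e^x)^2>0$; exponentiating both sides then yields $b+\GM_{t<k}(a^t)\leq \GM_{t<k}(b+a^t)$ directly. You instead give the classical two-fold AM--GM argument for Mahler's superadditivity inequality, specialised to a constant second sequence: bound $\GM_{t<k}(a^t)/G$ and $b/G$ by the arithmetic means of the shares $a^t/(b+a^t)$ and $b/(b+a^t)$, and observe that these arithmetic means sum to exactly $1$. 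Each step of your argument checks out, including the positivity of $G$ and the telescoping of the two AM--GM bounds. What each approach buys: the paper's proof is shorter and exhibits the lemma as an instance of a single convexity fact (log-sum-exp convexity in disguise), which fits the analytic toolkit already in use in that section; yours is more elementary and self-contained, needing only AM--GM rather than Jensen for general convex functions, and it makes transparent that the lemma is the constant-sequence case of the full superadditivity $\GM_{t<k}(a^t)+\GM_{t<k}(b^t)\leq\GM_{t<k}(a^t+b^t)$, a strengthening the paper's statement does not mention but which your method proves with no extra work.
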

\begin{proof}
This is a corollary of Jensen's inequality which states that $\varphi(k^{-1}\sum_{i=1}^k\gamma_i)  \leq k^{-1}\sum_{i=1}^k\varphi(\gamma_i)$ for $\varphi\colon\RR\to\RR$ convex.
One has to apply it to the values $\gamma_i=\log(a^i)$ and the function $\varphi(x)=\log(b+e^{x})$ which is convex because $\varphi''(x)=b e^x/(b+e^x)^2 > 0$.
\end{proof}

For a given choice of $\psi^0$, we can use what we have so far to get a lower bound for $\GM(\rho(\psi^t))$. For all sufficiently large $k$:
\begin{eqnarray}
\GM_{t<k}\rho(\psi^t) &=& \GM_{t<k}\frac{\lambda^t_{\psi}}{M((\psi^t)^{\prime})+K^t} \\ 
					&\geq& \GM_{t<k}\frac{\tau\lambda^t_{\vartheta}}{M((\psi^t)^{\prime})+K^t} (1-\epsilon_N)\\
					&\geq& \tau\left(\GM_{t<k}\frac{N+K^t}{M((\psi^t)^{\prime})+K^t}\right)(1-\epsilon_N)^2.    \label{eq: GMrho}
\end{eqnarray}
The value inside the large parentheses is greater than 1 for large $N$ and sufficiently large $t$ because there exists a sequence $\{\epsilon_N \}_{N\in \NN}$ with limit 0 such that, for large $t$, $M((\psi^t)^{\prime})<\tau N+\epsilon_N<N$, as proved in the next lemma.
The lemma also shows that, for large $N$ and for sufficiently large $t$, the value of $\psi^t$ at the upper boundary $N$ is very close to $0$.

\begin{lemma}\label{lemma: M below tau N} 
For any choice of $\psi^0\in \RR^N-\{ \boldsymbol{0} \}$ and $\{K^t\}_{t\in\NN}$, let $\{\psi^t\}_{t\in \NN}$ be defined as above, and let $(\psi^t)^{\prime}=\Mut_{ac,1}\cdot \psi^t$. 
There exists a sequence $\{\epsilon_N\}_{N\in\NN}$ converging to 0 such that, for every $N$, every sequence $\{K^t\}_{t\in\NN}$ and every $\psi^0$, the following holds for all sufficiently large $t$: \[  M((\psi^t)^{\prime})<\tau N +\epsilon_N\ \ \mbox{and} \ \ \psi^t(N)/|\psi^t|<\epsilon_N.\]  
\end{lemma}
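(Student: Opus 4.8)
The plan is to prove both assertions by transferring to the symmetric $a'=c'$ process through the conjugation matrix $C$ (recall $\tau^{t}\vartheta^{t}=C\psi^{t}$ and $\tau^{t+1}(\vartheta^{t})'=C(\psi^{t})'$, where $(\psi^{t})'=\Mut_{ac,1}\psi^{t}$), and by exploiting the monotonicity of the one--locus process in the selection shift through the ordering $\preceq$ of Subsection \ref{ss: theorem 2}. As a preliminary I would record that Lemma \ref{lemma: boundaries} already forces $\rho(\vartheta^{t})$ close to $1$: since $\lambda^{t}_{\vartheta}=\rho(\vartheta^{t})\,(M((\vartheta^{t})')+K^{t})\le\rho(\vartheta^{t})\,(N+K^{t})$ while $\lambda^{t}_{\vartheta}>(1-\epsilon_{N})(N+K^{t})$, we obtain $\rho(\vartheta^{t})>1-\epsilon_{N}$ for all large $t$.

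For the boundary estimate I would argue as follows. In the symmetric process the only mass destroyed by truncation is that of the $\pm1$ mutations leaving the two endpoints, so $\rho(\vartheta^{t})=1-a'\,(\vartheta^{t}(1)+\vartheta^{t}(N))/|\vartheta^{t}|$, and the preliminary bound gives $\vartheta^{t}(N)/|\vartheta^{t}|<\epsilon_{N}/a'$. The tilt $C$ can only increase the relative weight of the top coordinate: using $\tau^{t}\vartheta^{t}=C\psi^{t}$ and $(a/c)^{(y-N)/2}\le1$ for $y\le N$,
\[
\frac{\vartheta^{t}(N)/|\vartheta^{t}|}{\psi^{t}(N)/|\psi^{t}|}=\frac{\sum_{y}\psi^{t}(y)}{\sum_{y}(a/c)^{(y-N)/2}\psi^{t}(y)}\ge 1,
\]
so $\psi^{t}(N)/|\psi^{t}|\le\vartheta^{t}(N)/|\vartheta^{t}|<\epsilon_{N}/a'$, which is the boundary claim after rescaling $\epsilon_{N}$.

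For the mean bound I would compare the given process with the isolated one--locus process $\{\psi^{t}_{0}\}$ obtained by setting $K\equiv0$ and starting from the same $\psi^{0}$. Mutation preserves $\preceq$ (Lemma \ref{mutpres}, together with the routine fact that truncation to $[1,N]$ preserves $\preceq$), and by Lemma \ref{lemma: Sel preserves prec} we have $\Sel_{K^{t}}(\chi_{2})\preceq\Sel_{0}(\chi_{1})$ whenever $\chi_{2}\preceq\chi_{1}$ and $0\le K^{t}$ (the ordering $\preceq$ is scale--invariant, so the normalisation in that lemma is immaterial here). Induction on $t$ then yields $\psi^{t}\preceq\psi^{t}_{0}$ for every $t$: from $\psi^{t}\preceq\psi^{t}_{0}$ we get $(\psi^{t})'\preceq(\psi^{t}_{0})'$, and applying $\Sel_{K^{t}}$ on the left and $\Sel_{0}$ on the right gives $\psi^{t+1}\preceq\psi^{t+1}_{0}$. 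Since $\preceq$ is preserved by mutation and orders expectations (Lemma \ref{lemma prec E}), $M((\psi^{t})')\le M((\psi^{t}_{0})')$. The isolated process converges to the universal attractor $\psi_{ac,N}$ (Lemma \ref{PF}), so $M((\psi^{t}_{0})')\to M((\psi_{ac,N})')$, and Lemma \ref{secondadded} pins this limit at $\tau N(1+o(1))$; hence $M((\psi^{t})')<\tau N+\epsilon_{N}$ for all sufficiently large $t$.

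The structural reductions above are clean, and I expect the main obstacle to be the quantitative control of $\epsilon_{N}$ in the fixed--point estimate rather than the comparison itself. Lemma \ref{secondadded} is phrased multiplicatively, as $M((\psi_{ac,N})')/N\to\tau$, whereas the statement requires the additive bound $M((\psi^{t})')\le\tau N+\epsilon_{N}$ with $\epsilon_{N}\to0$, uniformly in the sequence $\{K^{t}\}$ and in $\psi^{0}$. Closing this gap amounts to showing that the fixed--point boundary masses $\psi_{ac,N}(1)$ and $\psi_{ac,N}(N)$, equivalently $1-\rho(\psi_{ac,N})$, decay sufficiently fast in $N$; I would extract this from explicit decay rates for the boundary ratios $\eta_{N}$ in the $R(x)$--recursion already employed in the proofs of Lemmas \ref{added} and \ref{secondadded}.
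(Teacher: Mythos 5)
Your proof is correct, and it splits into a half that mirrors the paper and a half that does not. For the mean bound you argue essentially as the paper does: compare against a $K\equiv 0$ process through $\preceq$ (Lemmas \ref{mutpres} and \ref{lemma: Sel preserves prec}, truncation being harmless for $\preceq$), order the means by Lemma \ref{lemma prec E}, and let Lemma \ref{PF} and Lemma \ref{secondadded} identify the limit; the only difference is that you start the comparison process at $\psi^0$ and use the universal-attractor property of $\psi_{ac,N}$, whereas the paper starts it at the point mass at $N$, which is $\preceq$-maximal, making the base case of the induction trivial and the threshold in $t$ independent of $\psi^0$ and $\{K^t\}$. For the boundary bound you take a genuinely different route: the paper gets $\psi^t(N)/|\psi^t|$ small from the same comparison (the relation $\preceq$ orders the mass at $N$ --- sum the defining inequalities over $b_1<b_2=N$ --- and $\psi_{ac,N}(N)\to 0$ by the proof of Lemma \ref{secondadded}), whereas you invoke Lemma \ref{lemma: boundaries} for the conjugated symmetric process to force $\rho(\vartheta^t)>1-\epsilon_N$, unpack $\rho(\vartheta^t)=1-a'\bigl(\vartheta^t(1)+\vartheta^t(N)\bigr)/|\vartheta^t|$, and transfer back through the tilt $C$. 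All three steps check out (in particular the tilt inequality), and there is no circularity, since the proof of Lemma \ref{lemma: boundaries} in \S\ref{sss: boundaries} nowhere uses Lemma \ref{lemma: M below tau N}; the trade-off is that your proof of this half rests on the most technical deferred lemma of the section, while the paper's rests only on the 1-locus fixed-point analysis of \S\ref{ss: fixedpoint asex}. Finally, the discrepancy you flag at the end is real but is not yours alone: Lemma \ref{secondadded} yields $M(\Mut_{ac}\cdot\psi_{ac,N})=\tau N+o(N)$, not $\tau N+o(1)$, so the paper's own proof has exactly the same gap when it asserts the additive bound $\tau N+\epsilon_N$. This is harmless in context, because every later use of the lemma (the bracket after equation (\ref{eq: GMrho}) exceeding $1$, and the lower bound on $\xi_k$) needs only the multiplicative form $M((\psi^t)^{\prime})<(\tau+\epsilon_N)N$, which both your argument and the paper's do establish.
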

\begin{proof}
Let $\phi^0\in \RR^N$ be the probability distribution with  $\phi^0(N)=1$. For each $t$ define  $\phi^{t+1}=\Sel_0\Mut_{ac,1}\cdot \phi^t$.
By induction on $t$ and using Lemmas \ref{mutpres} and \ref{lemma: Sel preserves prec} we conclude that $\psi^t\preceq \phi^t$ and $(\psi^t)^{\prime} \preceq (\phi^t)^{\prime}$ for all $t$. Using Lemma \ref{lemma prec E} we then get that $M((\psi^t)^{\prime})\leq M((\phi^t)^{\prime})$.  Applying Lemma \ref{secondadded}, we conclude that there exists a sequence $\{\epsilon_N\}_{N\in \NN}$ with limit 0 such that  $M((\phi^t)^{\prime})$ (and so also $M((\psi^t)^{\prime})$) remains below $\tau N+\epsilon_N$ for all sufficiently large $t$. The second claim of the lemma also follows from Lemma \ref{secondadded} and the fact that $\psi^t \preceq \phi^t$. 
\end{proof}

For the last stretch of the proof we need to be more concrete about the sexual population we are analysing.
Let $\bfpsi^0\in \RR^{N^\ell}$ be the initial sexual population, and $\{\bfpsi^t\}_{t\in\NN}$ be the sequence obtained  by iterating $ac$-mutation, selection and recombination.
For each locus $i$ and generation $t$, let $\psi^t_i$ be the distribution at locus $i$ at stage $t$, but ignoring normalisation.
We use the same notation we have been using so far:
\begin{itemize}
\item $(\psi^t_i)^{\prime}=\Mut_{ac,1}\cdot \psi^t_i$;
\item $(\bfpsi^t)^{\prime}=\Mut_{ac}\cdot \bfpsi^t$;
\item $W^t_i=\MF((\psi_i^t)^{\prime})$;
\item $M^t=\MF((\bfpsi^t)^{\prime})=\sum_{i=1}^\ell W^t_i$;
\item $\Wha^t_i=M^t_i-W^t_i$; 
\item $\psi^{t+1}_i = \Sel_{\Wha^t_i}\cdot \Mut_{ac,1} \cdot \psi^t_i$;
\item $\rho(\psi^t_i) = |(\psi_i^t)^{\prime}|/|\psi^{t}_i|$, 
\item $\rho(\bfpsi^t) = |(\bfpsi^t)^{\prime}|/|\bfpsi^{t}| = \prod_{i=1}^\ell \rho(\psi^t_i)$.
\end{itemize}

The objective now is to show that the geometric mean of $\lambda(\bfpsi^t) = \rho(\bfpsi^t) M^t = (\prod_{i=1}^\ell \rho(\psi^t_i)) M^t$ is above $N\ell\tau^\ell$.
We will apply the results we have obtained thus far for $K^t=\Wha^t_i$. In order to be able to do this, however, we need to be able to compare $\Wha^t_i$ and $M^t$.
If all loci were identical, we would have $\Wha^t_i=M^t (\ell-1)/\ell$.
When the loci are not identical, the following lemma gives us an approximation to $N+\Wha^t_i$, and tells us that it is close -- at least in geometric mean -- to $N+M^t(\ell-1)/\ell$, just as it would be if the loci were identical.

\begin{lemma}\label{lemma: Whas} 
There is a sequence $\{\epsilon_N:N\in\NN\}$ converging to 0 such that for every $N$ and every initial population $\bfpsi^0\in \RR^{N^\ell}$ as above, the following holds for all sufficiently large $k$: 
\[
1-\epsilon_N < \GM_{t<k}\left( \frac{N+ \Wha_i^t}{N+ M^t ((\ell-1)/\ell)}\right) <  1+\epsilon_N. 
\]
\end{lemma}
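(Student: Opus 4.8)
The plan is to compare the geometric time-averages of the numerator $N+\Wha_i^t$ and the denominator $N+M^t(\ell-1)/\ell$ in two stages: first show that the numerator has (essentially) the \emph{same} time-average for every locus $i$, and then control the gap between that common value and the denominator. Write $y_i^t=N+\Wha_i^t$, and observe the elementary identity $N+M^t(\ell-1)/\ell=\tfrac1\ell\sum_{i} y_i^t$, so that the denominator is exactly the arithmetic mean over loci of the numerators. By Lemma~\ref{lemma: M below tau N} we have $W_i^t<\tau N+\epsilon_N$ for every $i$ and all large $t$, so every $y_i^t$ and the denominator lie in $[N,\,N\ell]$; thus all quantities are $\Theta(N)$ and their ratios stay bounded. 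Since we work with $\GMinf$ and $\GMsup$ as $k\to\infty$, the finitely many early generations where these bounds fail wash out, and it suffices to estimate the $\GM$ over large $t$.

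For the first stage, recall $\lambda(\psi_i^t)=M^t\rho(\psi_i^t)$; applying Lemma~\ref{lemma: GMs} with field $K^t=\Wha_i^t$ gives $\GM_{t<k}(\lambda(\psi_i^t))=\tau\,\GM_{t<k}(\lambda^t_{\vartheta_i})$ up to a factor tending to $1$, while Lemma~\ref{lemma: boundaries} gives $\lambda^t_{\vartheta_i}\in\big((1-\epsilon_N)y_i^t,\,y_i^t\big)$. Combining these with $\GM_{t<k}(\lambda(\psi_i^t))=\GM_{t<k}(M^t)\,\GM_{t<k}(\rho(\psi_i^t))$ yields
\[
\GM_{t<k}(y_i^t)=\frac{\GM_{t<k}(M^t)}{\tau\,\GM_{t<k}(\rho(\psi_i^t))}\,\big(1\pm\epsilon_N\big).
\]
For large $N$ the locus distributions concentrate at distance $\Theta(N)$ from both boundaries (a $\preceq$-comparison exactly as in Lemmas~\ref{lemma: M below tau N} and~\ref{secondadded} bounds the mass near $1$ and near $N$), so $\GM_{t<k}(\rho(\psi_i^t))>1-\epsilon_N$. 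Hence $\GM_{t<k}(y_i^t)$ equals, up to a factor $1\pm O(\epsilon_N)$, the single quantity $\hat G:=\GM_{t<k}(M^t)/\tau$, \emph{independently of $i$}.

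The upper bound is then clean: the arithmetic--geometric mean inequality applied to $y_1^t,\dots,y_\ell^t$ at each fixed $t$ gives $\prod_i y_i^t\le\big(\tfrac1\ell\sum_j y_j^t\big)^\ell$, hence $\prod_i\GM_{t<k}(y_i^t)\le\GM_{t<k}\big(N+M^t(\ell-1)/\ell\big)^\ell$, which with the common value $\hat G$ forces $\GM_{t<k}(y_i^t)\le(1+\epsilon_N)\GM_{t<k}\big(N+M^t(\ell-1)/\ell\big)$. For the lower bound, the common-value property makes every per-locus deficit equal, so that
\[
\log\GM_{t<k}\big(N+\tfrac{\ell-1}{\ell}M^t\big)-\log\GM_{t<k}(y_i^t)=\tfrac1\ell\,\overline D_k+O(\epsilon_N),\qquad \overline D_k=\tfrac1k\sum_{t<k}\Big(\ell\log\tfrac1\ell\textstyle\sum_j y_j^t-\sum_i\log y_i^t\Big),
\]
where $\overline D_k\ge0$ is the time-averaged Jensen (AM--GM) deficit of the family $\{y_i^t\}_i$. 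Thus the whole lemma reduces to showing $\limsup_k\overline D_k=O(\epsilon_N)$, i.e.\ that \emph{the per-locus mean fitnesses $W_i^t$ do not sustain a persistent across-loci spread in time-average} (a second-order expansion identifies $\overline D_k$, up to constants, with the time-average of the across-loci variance of the $W_i^t$ divided by $\Theta(N^2)$).

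Bounding $\overline D_k$ is the main obstacle: nothing so far rules out the loci settling into an asymmetric pattern, or into an anti-phase two-cycle of the map ``field~$\mapsto$~steady fitness'', which would keep $\overline D_k$ away from $0$; this is genuinely where a synchronisation statement is needed, and it cannot come for free because the fields couple the loci through the feedback $\Wha_i^t=\sum_{j\ne i}W_j^t$. I would establish it by a contraction argument for the normalised single-locus distributions in the Hilbert projective metric $d$. Each generation acts on locus $i$ by $\Sel_{\Wha_i^t}\Mut_{ac,1}$; since a positive diagonal matrix is a $d$-isometry while $\Mut_{ac,1}$ is a strict Birkhoff contraction, a triangle-inequality split gives $d(\psi_i^{t+1},\psi_j^{t+1})\le\kappa\,d(\psi_i^t,\psi_j^t)+e^t$ with $\kappa<1$, where $e^t=d\big(\Sel_{\Wha_i^t}\chi,\Sel_{\Wha_j^t}\chi\big)$ (for $\chi=\Mut_{ac,1}\psi_j^t$) measures how the two fields distort a common distribution. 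The key estimate is that, because the fields are of order $N$, $e^t$ is controlled by $|\Wha_i^t-\Wha_j^t|/N=|W_j^t-W_i^t|/N$, which is itself controlled by $d(\psi_i^t,\psi_j^t)$; taking $N$ large (and, if needed, iterating mutation over a fixed block of generations to push the net contraction coefficient strictly below $1$) makes the recursion contracting, so $d(\psi_i^t,\psi_j^t)\to0$, $\sum_t D^t<\infty$, and $\overline D_k\to0$. The monotonicity Lemmas~\ref{mutpres} and~\ref{lemma: Sel preserves prec} are what legitimise comparing one locus with another. Assembling the two stages with $\overline D_k\to0$ and letting the various $\epsilon_N\to0$ then yields the stated two-sided bound.
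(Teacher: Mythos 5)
Your reduction is set up correctly: the identity $N+M^t(\ell-1)/\ell=\tfrac1\ell\sum_i(N+\Wha_i^t)$, the per-locus use of Lemmas \ref{lemma: GMs} and \ref{lemma: boundaries}, and the observation that the upper bound follows from AM--GM once all loci share a common geometric time-average. But both claims that carry your proof have genuine gaps. First, the claim $\GM_{t<k}(\rho(\psi_i^t))>1-\epsilon_N$ is not delivered by the comparisons you cite. Lemma \ref{lemma: M below tau N} controls only the mass at the \emph{top} boundary and the mean, and Lemma \ref{secondadded} concerns the zero-field one-locus fixed point. To control the bottom boundary of the $ac$-process by a $\preceq$-comparison you must compare against a process with a \emph{larger} selection field (that is the direction in which Lemma \ref{lemma: Sel preserves prec} preserves $\preceq$), say constant field $K=(\ell-1)\tau N$; but the equilibrium bulk of the $ac$-process under constant field $K$ sits near $x^\ast=\tau N-(1-\tau)K$, which is negative as soon as $(1-\tau)(\ell-1)>1$ (admissible parameters, e.g.\ $\tau\approx 0.5$, $\ell\geq 4$), so the comparison population piles up at the bottom boundary, its $\rho$ is bounded away from $1$, and the comparison yields nothing. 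In truth your Stage-1 claim amounts to knowing that the locus means settle near their symmetric value -- exactly the statement the paper says it observed in simulations but could not prove, and deliberately avoids. (There is also an algebra slip: from $\lambda(\psi_i^t)=\rho(\psi_i^t)M^t$ and $\tau\GM(y_i^t)\approx\GM(\lambda(\psi_i^t))$ one gets $\GM(y_i^t)\approx\GM(M^t)\,\GM(\rho(\psi_i^t))/\tau$, with $\GM(\rho)$ in the numerator, not the denominator.)

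Second, the synchronisation step cannot be closed by a Birkhoff contraction. $\Mut_{ac,1}$ is tridiagonal, so its Birkhoff coefficient is $1$ (images of disjointly supported vectors remain at infinite Hilbert distance): a single generation contracts nothing, and since $\Sel_K$ is a Hilbert isometry, neither does any \emph{fixed} block of generations -- the composed matrix only becomes positive after $\Theta(N)$ steps, and then its contraction coefficient is of the form $1-e^{-\Theta(N)}$. Meanwhile your perturbation term is not small: $|W_i^t-W_j^t|$ can be of order $N\,d(\psi_i^t,\psi_j^t)$, while $d\bigl(\Sel_{\Wha_i^t}\chi,\Sel_{\Wha_j^t}\chi\bigr)\approx|W_i^t-W_j^t|\cdot\Theta(1/N)$, so $e^t=\Theta\bigl(d(\psi_i^t,\psi_j^t)\bigr)$ with a constant that does not decay in $N$; the perturbation accumulated over a block swamps the exponentially weak contraction gain and the recursion never closes. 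For contrast, the paper proves something strictly weaker than synchronisation and needs neither of your claims: for each pair $i,j$ it brackets $\psi_i^t,\psi_j^t$ in the $\preceq$-order between two \emph{cross-coupled} extreme processes $\phi_0^t,\phi_1^t$ (started with all mass at $1$ and at $N$, each selected with the field computed from the \emph{other} one plus $K^t$). The cross-coupling makes the total fitness factor in $\lambda_{\phi_0}^t$ and $\lambda_{\phi_1}^t$ identical, so the quotient of the two $\GM$-estimates from Lemmas \ref{lemma: GMs} and \ref{lemma: boundaries} cancels it, leaving $\GM\bigl(\rho(\phi_1^t)/\rho(\phi_0^t)\bigr)\cdot\GM\bigl((N+\MF((\phi_1^t)^{\prime})+K^t)/(N+\MF((\phi_0^t)^{\prime})+K^t)\bigr)\approx1$; the $\rho$-ratio is then handled \emph{relatively} ($\phi_0^t\preceq\phi_1^t$ forces $\phi_1^t$ to have less bottom mass than $\phi_0^t$, and its top mass is small by Lemma \ref{lemma: M below tau N}), so no absolute bottom-boundary bound -- the thing that sinks your Stage 1 -- is ever required.
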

\noindent We will prove this lemma in Subsection \ref{sss: Whas}. Lemmas \ref{lemma: boundaries},   \ref{lemma: M below tau N} and \ref{lemma: Whas}   all assert the existence of certain sequences $\{\epsilon_N\}_{N\in \NN}$ with limit 0. We now let $\{ \epsilon_N \}_{N\in \NN}$ be a sequence with limit 0, which majorises each of the sequences provided by these lemmas. 

We are now ready to finish the  proof of Theorem \ref{main thm Sex}.
Let $k$ be large.
We start by cleaning up equation (\ref{eq: GMrho}) using what we now know from Lemmas \ref{lemma GM of sums} and  \ref{lemma: Whas}.
Fix $i\leq \ell$.
\begin{eqnarray*}
\GM_{t<k} \rho^t_i    &\geq&    \tau\left(\GM_{t\leq k}\frac{N+\Wha_i^t}{W_i^t+\Wha_i^t}\right)(1-\epsilon_N)^2 \\
				& \geq  &   \tau\left(\GM_{t\leq k}\frac{N+M^t(\ell-1)/\ell}{M^t}\right)(1-\epsilon_N)^3 \\
				& = & \tau\left(\GM_{t\leq k}\left(\frac{\ell-1}{\ell}+ \frac{N}{M^t}   \right) \right)(1-\epsilon_N)^3\\
				& \geq & \tau\left(\frac{\ell-1}{\ell}+ \GM_{t\leq k}\left(\frac{N}{M^t}   \right) \right)(1-\epsilon_N)^3\\
				&=& \tau\left(1+\xi_{k}/\ell \right)(1-\epsilon_N)^3,
\end{eqnarray*}
where $\xi_{k}= \GM_{t\leq k}(\ell N/M^t) -1$.
Notice that $\GM_{t\leq k}(\ell N/M^t) > 1$. Furthermore, by Lemma \ref{lemma: M below tau N}, $M^t< \ell(\tau N+\epsilon_N)$ for sufficiently large $t$. Adjusting the sequence $\epsilon_N$ as necessary (but maintaining the fact that it has limit 0), we then have that for  sufficiently large  $k$,  $\xi_{k}\geq ((1-\tau)/\tau) -\epsilon_N$.

Finally,
\begin{eqnarray}
\GM_{t < k}(\lambda_\sextt^t)     &=&    \GM_{t < k}( M^t)  \prod_{i=1}^\ell \GM_{t < k}(\rho^t_i)  \\
							&\geq &   \frac{N\ell}{1+\xi_{k}} \left(\tau\left(1+\xi_{k}/\ell \right)(1-\epsilon_N)^3\right)^\ell      \\
							&=& \left(N\ell\tau^\ell\right) \left(\frac{(1+\xi_{k}/\ell)^\ell}{1+\xi_{k}}\right)  (1-\epsilon_N)^{3\ell},
\end{eqnarray}
for all large enough $k$. The last observation to make is that there exists $\epsilon>0$, independent of $N$ and $k$, such that the factor 
\[
\left(\frac{(1+\xi_{k}/\ell)^\ell}{1+\xi_{k}}\right)  (1-\epsilon_N)^{3\ell}
\]
is greater than $1+\epsilon$ for large  $N$ and sufficiently large $k$. To see this, note that
the function $x\mapsto (1+x/\ell)^\ell/(1+x)$ is always greater than $1$ for $x>0$ and tends to $+\infty$ as $x\to+\infty$.
(It is actually increasing for $x>0$.) Let $N^{\ast}$ be large enough that $\epsilon_{N} <(1-\tau)/\tau$ for all $N>N^{\ast}$. 
Among all the $x$'s with $x\geq ((1-\tau)/\tau)-\epsilon_{N^{\ast}}$, there is a minimum possible value for $(1+x/\ell)^\ell/(1+x)$, call it $\zeta$, which is greater than 1. Let $\epsilon$ be such that $\zeta =1+2\epsilon$. 
Then for $N\geq N^{\ast}$  for which $\epsilon_N$ is sufficiently small, we have $\left(\frac{(1+\xi_{k}/\ell)^\ell}{1+\xi_{k}}\right)  (1-\epsilon_N)^{3\ell}>1+\epsilon$ for all large enough $k$.

\

It remains to prove Lemmas \ref{lemma: boundaries} and \ref{lemma: Whas}.

\subsubsection{The proof of Lemma \ref{lemma: boundaries}}  \label{sss: boundaries}
Roughly speaking, we need to show that $\lambda^t_{\vartheta}$ gets close to $N+K^t$ as $t$ becomes large.
Recall that $\lambda^t_{\vartheta} = \rho(\vartheta^t)(\MF((\vartheta^t)^{\prime})+K^t)$, and that $\rho(\vartheta^t)=1-a 
\vartheta^t(1)/|\vartheta^t| - c \vartheta^t(N)/|\vartheta^t|$.
The proof will have three parts which are: showing that $\MF((\vartheta^t)^{\prime})$ gets close to $N$, showing that $\vartheta^t(1)/|\vartheta^t|$ gets close to 0, and showing that $\vartheta^t(N)/|\vartheta^t|$ gets close to 0.
We remark that it is not surprising that $\MF(\vartheta^t)$ gets close to $N$: since positive and negative mutations are equiprobable, mutation without truncation does not bring the mean fitness down, while selection only ever increases mean fitness. Therefore the mean fitness can be expected to rise, this rise being halted only by effect of truncation at the upper boundary.

The first idea for the proof is to consider an alternative population which  evolves according to a different sequence $\{K^t\}_{t\in\NN}$; one that is constant, and  that is either always larger or else always smaller than the original one.
The fact that the sequence is constant allows us to apply the Perron-Frobenius theorem and establish a limit population, which we can later analyse.
Choosing a sequence $K^t$ with larger (smaller) values will guarantee that the new sequence is $\prec$-below (-above) the original. This  allows us to compare the mean fitnesses of the two populations, as well as their values at the boundaries 1 and $N$.

Let us begin with the analysis of the limit populations.
Fix a value of $K$, for which we will later substitute either 0 or $(\ell-1)N$. Define
\[
\phi^0= \vartheta^0
\quad\and\quad
\phi^{t+1}=\Sel_{K}\Mut_{a'a', 1} \cdot \phi^t.
\]
Since $\phi^t$ is defined by iterating a linear system which is non-negative, primitive and irreducible, we can apply the Perron-Frobenius theorem, exactly as we did in Lemma \ref{PF}, to deduce that the populations $\phi^t$ must converge to a limit population $\phi_N$ which is independent of the starting population (and depends only on $N$, $K$ and $a'$).
In order to analyse $\phi_N$, we proceed much as in the proof of Lemma \ref{added}. Once again, the key idea is to consider quotients between consecutive values in the distribution. For $k<N$, define: 
\[
\eta_{N}(k) =\frac{\phi_N(k+1)}{\phi_N(k)}.
\]
Let $\lambda_N=\lambda(\phi_N)=\rho(\phi_N)(\MF((\phi_N)^{\prime})+K)$.
Since  $\phi_N(1) = (b' \phi_N(1)+a' \phi_N(2))(1+K)/\lambda_N$ and  $\phi_N(N)= (c' \phi_N(N-1)+b' \phi_N(N))(N+K)/\lambda_N$ we have:
\[
\eta_N(1) = \frac{\lambda_N/(1+K)-b'}{a'}
\quad\and\quad
\eta_N({N-1}) = \frac{c'}{\lambda_N/(N+K)-b'}.
\]
For $x\notin \{1,N \}$  we have  $\phi_N(x)=(c' \phi_N(x-1)+b' \phi_N(x) + a'\phi_N(x+1))(x+K)/\lambda_N$.
Since  $\phi_N(x+1)=\eta_N(x)\phi_N(x)$ this gives:  
\[
c' \eta_N({x-1})^{-1}+ a'  \eta_N(x)=\lambda_N/(x+K)-b'.
\]

Let us now move into the proof that the mean fitness grows close to $N$.
Consider $K=(\ell-1)N$, and define $\phi_N$ as above for that $K$.
Since $K^t\leq (\ell-1)N$ (where $K^t$ is the sequence given in the statement of the lemma), it follows by induction using Lemmas \ref{mutpres} and \ref{lemma: Sel preserves prec} that for every $t$, $\phi^t \prec \vartheta^t$. This means that  $\phi^t(1)/|\phi^t|\geq \vartheta^t(1)/|\vartheta^t|$, and (by Lemma \ref{lemma prec E}) that $\MF(\phi^t)\leq \MF(\vartheta^t)$.
In order to establish that $\lim_N\lambda_N/(\ell N) = 1$, suppose otherwise. Then there must exist an infinite set $\Pi$, such that $\mbox{lim}_{N\in \Pi} \lambda_N/(\ell N) = \kappa <1$ (note that $\lambda_N\leq \ell N$). 
For each $x\in\NN$, define: 
\[
R(x) = \lim_{N\in \Pi} \eta_N(N-x)^{-1}.
\]
From the formulas for $\eta(N-x)$ above (and using that $a'=c'$), it follows that each $R(x)$ is defined and satisfies the following inductive definition:
\[
R(1) = \frac{\kappa-b'}{a'}
\quad\and\quad
R(k+1) = \frac{\kappa-b'}{a'} - R(k)^{-1}.
\]
All the values of $R$ are non-negative, because so are the corresponding values of $\eta_N(k)$.
Note that $R(2)<R(1)$, and that $R(k)<R(k-1)$ implies $R(k+1)<R(k)$, from which we conclude that $R$ is decreasing.
$R$ must then have a limit, $\alpha$ say.
This limit must satisfy $\alpha  + \alpha^{-1}= (\kappa-b')/a'$.
Since for every $\alpha\in \RR^+$, $\alpha  + \alpha^{-1}\geq 2$, we have that $2\leq (\kappa-b')/a'$.  Since $b'=1-2a'$, it follows that $\kappa\geq 1$, which gives the required contradiction.

So far we have concluded that  $\mbox{lim}_{N\rightarrow \infty} \lambda_N/N = \ell$. 
Since $\lambda_N=\rho(\phi_N)(\MF(\phi_N^{\prime})+(\ell-1)N)$ and $\rho(\phi_N)\leq 1$, it follows that $\mbox{lim}_{N\rightarrow \infty} \MF(\phi_N')/N = 1$.
For now, let $\epsilon_N=1 - \MF(\phi_N')/N$.
Since $(\phi^t)^{\prime} \preceq (\vartheta^t)^{\prime}$, we also know that   $\liminf_t\MF((\vartheta^t)^{\prime})/N\geq 1-\epsilon_N$.

The second step is to show that $\vartheta^t(1)/|\vartheta^t|$ is small for large $t$.
Since $\phi^t\preceq \vartheta^t$, we know that $\phi^t(1)/|\phi^t| \geq \vartheta^t(1)/|\vartheta^t|$, so it is enough to show that once normalised $\phi_N(1)$ is small for large $N$. 
This time we define: 
\[
R(k) = \lim_{N\to\infty} \eta_N(1).
\]
We have that: 
\[
R(1) = \frac{\ell/(\ell-1)-b'}{a'}
\quad\and\quad
R(k+1) = \frac{\ell/(\ell-1)-b'}{a'} - R(k)^{-1}.
\]
Since $(\ell/(\ell-1)-b')/a'>(1-b')/a'=2$ it follows inductively that $R(k)> 1$ for all $k$.
This means that for every $k$, there exists $N$ large enough such  that $\eta_{N'}(x)> 1$ for all $N'\geq N$ and all  $x\leq k$.
Redefine $\epsilon_N$ to be the maximum between the value $\epsilon_N$ specified in the above and $1/k$ for the largest $k$ such that $\eta_N(x)> 1$ for all $x\leq k$.
It follows that for that for all $N$, $\phi_N(1)\leq \epsilon_N$ and that the sequence $\epsilon_N$ converges to 0.

The third step is to consider $\vartheta^t(N)$.
This time we set $K=0$ and consider the new corresponding sequence $\phi^t$, with the new limit $\phi_N$.
We now have that $\vartheta^t\preceq \phi^t$, and hence that $\vartheta^t(N)/|\vartheta^t| \leq \phi^t(N)/|\phi^t|$.
This time, for each $x$ we define:
\[
R(x) = \lim_{N\in \Pi} \eta_N(N-x)^{-1}.
\]
By the same argument as above we get that $\mbox{lim}_{N\rightarrow \infty} \lambda_N/(N+K) = 1$, and in this case this means that $\mbox{lim}_{N\rightarrow \infty} \lambda_N/N = 1$.
$R$ now satisfies:
\[
R(1) = \frac{1-b'}{a'}=2
\quad\and\quad
R(x+1) = 2 - R(x)^{-1}.
\]
Again we have that $R(k)> 1$ for all $k$, which means that for sufficiently large  $N$, $\eta_N(N-x)< 1$ for all $x\leq k$.
We can therefore redefine $\epsilon_N$ so that this sequence still converges to 0 and:  
\[
\liminf_t \frac{\rho(\vartheta^t)(\MF((\vartheta^t)^{\prime})+K^t)} {N+K^t}\geq 1-\epsilon_N,
\]
as needed for Lemma \ref{lemma: boundaries}.

\subsubsection{The proof of Lemma \ref{lemma: Whas}}  \label{sss: Whas}
In this section we prove the last lemma required to complete the proof of Theorem \ref{main thm Sex}.  
Roughly speaking,  Lemma \ref{lemma: Whas} asserts that the various $W_i$'s (for varying $i$) eventually stay relatively close to each other, even if they are initially quite different.
In simulations we have observed that in fact all of the  $W_i$'s converge to the same value $M/\ell$ (see for instance Figure 5), but this seems to be hard to prove.
Instead, we prove that $N+\Wha_i^t$ becomes close to $N+ M^t(\ell-1)/\ell$ in geometric mean, which is enough for our purposes.

Let us begin by looking at a 2-locus $ac$-sex process where selection acts with an additive value $K^t$ at stage $t$.
More formally, let $\{K^t:t\in\om\}$ be a sequence of numbers in $[0, (\ell-2)N]$, let $\upsilon^0_0,\upsilon^0_1\in \RR^N$ be the initial distributions corresponding to each of those two loci, and define: 
\[
\upsilon^{t+1}_0= \Sel_{M((\upsilon^t_1)^{\prime})+K^t}\Mut_{ac,1}\cdot \upsilon^t_0
\quad\text{and}\quad
\upsilon^{t+1}_1= \Sel_{M((\upsilon^t_0)^{\prime})+K^t}\Mut_{ac,1}\cdot \upsilon^t_1.
\]
Notice how the the $M$-value used in defining selection at a given locus is the mean corresponding to the other locus (as it should be for the 2-locus sex process).

\begin{lemma}\label{lemma: delta t}
There is a sequence $\{\delta^t: t\in\om\}$ such that each $\delta^t>1$,  with $\GMsup_{t\to\infty}\delta^t<1+\epsilon_N$ and such that for all $t$: 
\[
\frac{1}{\delta^t} \leq \frac{N+M((\upsilon^t_1)^{\prime})+K^t}{N+M((\upsilon^t_0)^{\prime})+K^t}\leq \delta^t. 
\]
\end{lemma}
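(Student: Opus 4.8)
The plan is to reduce the two-sided estimate to a comparison of geometric means, which the machinery of the preceding subsections is built to control. Write $m_i^t = M((\upsilon_i^t)^{\prime})$, put $A^t = N + m_0^t + K^t$ and $B^t = N + m_1^t + K^t$, and take $\delta^t = \max\!\left(B^t/A^t,\ A^t/B^t\right)$ (inflated by a negligible factor to make it strictly exceed $1$); then the sandwiching holds by definition and the whole content is the bound $\GMsup_{t\to\infty}\delta^t < 1+\epsilon_N$. The first thing I would record is a norm-bookkeeping identity: locus $0$ is selected with additive constant $m_1^t + K^t$ and locus $1$ with $m_0^t + K^t$, so in both cases the mean governing the selection step is the \emph{common} value $S^t := m_0^t + m_1^t + K^t$. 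Hence $\lambda^t_{\upsilon_i} := |\upsilon_i^{t+1}|/|\upsilon_i^t| = \rho(\upsilon_i^t)\,S^t$ for $i=0,1$, the two growth factors differing only through the survival proportions $\rho(\upsilon_i^t)$.

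The workhorse is the translation to the equiprobable case. Conjugating each locus by the matrix $C$ of Lemma \ref{lemma: conjugating mutation} and applying Lemma \ref{lemma: GMs} gives $\GM_{t<k}(\lambda^t_{\upsilon_i}) \approx \tau\,\GM_{t<k}(\lambda^t_{\vartheta_i})$ up to factors tending to $1$, while Lemma \ref{lemma: boundaries}, applied to locus $i$ with its selection sequence $m_{1-i}^t + K^t \in [0,(\ell-1)N]$, yields $\lambda^t_{\vartheta_i}/(N + m_{1-i}^t + K^t) \in (1-\epsilon_N,1)$ for large $t$. Writing these for $i=0$ and $i=1$ and dividing, the common factor $\GM_{t<k}(S^t)$ cancels and one is left with
\[
\frac{\GM_{t<k}(B^t)}{\GM_{t<k}(A^t)} \ \longrightarrow\ \frac{\GM_t\,\rho(\upsilon_0^t)}{\GM_t\,\rho(\upsilon_1^t)}.
\]
By Lemma \ref{lemma: M below tau N}, together with the analogous estimate at the lower boundary $1$ proved by the same domination argument, the boundary masses $\upsilon_i^t(1)/|\upsilon_i^t|$ and $\upsilon_i^t(N)/|\upsilon_i^t|$ are eventually below $\epsilon_N$, so each $\GM_t\,\rho(\upsilon_i^t) \in (1-\epsilon_N,1]$ and the right-hand ratio lies in $(1-\epsilon_N,\,1+\epsilon_N)$. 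Thus the \emph{ratio of the geometric means} of $A^t$ and $B^t$ is already pinned within $1\pm\epsilon_N$.

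What remains is to pass from this to a bound on the geometric mean of the \emph{per-generation} ratios $\delta^t$; the two coincide exactly when $\log(B^t/A^t)$ has a fixed sign, and this is where a monotone coupling enters. The point is that the locus with the smaller post-mutation mean is always selected with the larger additive constant: if $\upsilon_0^t \preceq \upsilon_1^t$ then $m_0^t \leq m_1^t$ by Lemma \ref{lemma prec E}, so locus $0$ receives the larger constant $m_1^t+K^t$, and Lemmas \ref{mutpres} and \ref{lemma: Sel preserves prec} then give $\upsilon_0^{t+1} \preceq \upsilon_1^{t+1}$. Hence the relation $\upsilon_0^t \preceq \upsilon_1^t$, once it holds, is preserved for all later $t$, so $\mathrm{sign}(m_1^t - m_0^t)$ is eventually constant; a finite initial segment does not affect the $\limsup$ defining $\GMsup$, and on the tail $\GM_{t<k}(\delta^t)$ equals $\GM_{t<k}(B^t)/\GM_{t<k}(A^t)$ or its reciprocal, which the previous step has bounded by $1+\epsilon_N$.

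The main obstacle is precisely this reduction to the comparable case: the two initial marginals $\upsilon_0^0,\upsilon_1^0$ need not be $\preceq$-comparable, and I would not expect them to become comparable under the dynamics (note that the scalar mean-gap $m_1^t-m_0^t$ evolves by adding $(V((\upsilon_1^t)^{\prime})-V((\upsilon_0^t)^{\prime}))/S^t$, whose sign I cannot control directly, so mean-order alone need not be preserved — only the stronger $\preceq$-order is). I would handle this by sandwiching: for the realised selection-constant sequences $\hat K_i^t = m_{1-i}^t + K^t$, trap each locus between the two extremal processes started from the point masses at $1$ and at $N$ and driven by $\hat K_i^t$, which are $\preceq$-comparable by construction and stay ordered by Lemmas \ref{mutpres} and \ref{lemma: Sel preserves prec}. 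Combining the resulting upper bound $m_i^t < \tau N + \epsilon_N$ from Lemma \ref{lemma: M below tau N} with a matching lower bound from the bottom extremal process, one would confine both means to a common band of width $O(\epsilon_N N)$ and thereby force $|\log(B^t/A^t)| \leq |m_1^t-m_0^t|/N = O(\epsilon_N)$ uniformly. Securing that matching lower bound — delicate precisely because a large $\hat K_i^t$ weakens selection and depresses the mean — is the technically hardest point, and is what makes this the most involved of the auxiliary lemmas.
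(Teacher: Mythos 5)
Your proposal correctly isolates the crux --- controlling the geometric mean of the per-generation ratio when $m_1^t - m_0^t$ may change sign --- but neither of your two routes closes it, and the second asks for more than the available tools deliver. The monotone-coupling route fails for the reason you yourself give: $\upsilon_0^0$ and $\upsilon_1^0$ need not be $\preceq$-comparable, and nothing in the dynamics forces comparability later. Your fallback then aims for a \emph{uniform} bound $|\log(B^t/A^t)| = O(\epsilon_N)$ via extremal processes driven by the realised constants $\hat K_i^t$; but the machinery of the paper (Lemmas \ref{lemma: GMs} and \ref{lemma: boundaries}) only controls such quantities in geometric mean, not pointwise in $t$ --- Lemma \ref{lemma: GMs} is intrinsically a statement about geometric means, because the conjugation by $C$ distorts norms by a factor that is only tamed after taking $k$-th roots. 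The ``matching lower bound'' you flag as the hardest point is indeed not obtainable uniformly from these lemmas, so the proposal stalls exactly where you predicted.

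The paper's proof removes the difficulty at the level of the \emph{definition} of $\delta^t$, rather than fighting the sign oscillation. It introduces a crossed pair of auxiliary processes: $\phi_0^0$ is the point mass at $1$, $\phi_1^0$ is the point mass at $N$, and they evolve by $\phi_0^{t+1} = \Sel_{\MF((\phi_1^t)^{\prime})+K^t}\Mut_{ac,1}\cdot\phi_0^t$ and $\phi_1^{t+1} = \Sel_{\MF((\phi_0^t)^{\prime})+K^t}\Mut_{ac,1}\cdot\phi_1^t$ --- each selected with the \emph{other's} mean. This crossing is what makes the $\preceq$-induction close (by Lemmas \ref{mutpres} and \ref{lemma: Sel preserves prec}, the $\preceq$-smaller distribution always receives the larger additive constant), yielding simultaneously $\phi_0^t \preceq \phi_1^t$ and $\phi_0^t \preceq \upsilon_i^t \preceq \phi_1^t$ for both $i$ and all $t$. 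One then sets $\delta^t = \bigl(N+\MF((\phi_1^t)^{\prime})+K^t\bigr)/\bigl(N+\MF((\phi_0^t)^{\prime})+K^t\bigr)$, which is $\geq 1$ for every $t$ by construction (Lemma \ref{lemma prec E}), so the GM-of-ratios versus ratio-of-GMs issue never arises; its geometric mean is bounded by $1+\epsilon_N$ by taking the quotient of the two Lemma \ref{lemma: boundaries} estimates (transported via Lemma \ref{lemma: GMs}) and comparing $\rho(\phi_0^t)$ with $\rho(\phi_1^t)$ using the $\preceq$-ordering together with Lemma \ref{lemma: M below tau N}; and the sandwich of the actual loci between the extremals gives the two-sided inequality of the lemma without ever comparing $\upsilon_0^t$ and $\upsilon_1^t$ to each other. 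In short, the missing idea is to define $\delta^t$ from a comparable extremal pair rather than from the actual, possibly incomparable, loci.
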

\begin{proof}
Let $\phi^0_0, \phi^0_1\in \RR^N$ be new initial populations, such that  $\phi^0_0$ is the probability distribution with $\phi^0_0(1)=1$ and $\phi^0_1$ is the probability distribution with $\phi^0_1(N)=1$. 
Let the $\phi^t_0$ and $\phi^t_1$ processes evolve as follows: 
\[
\phi^{t+1}_0 = \Sel_{M((\phi^t_1)^{\prime})+K^t}\Mut_{ac,1} \cdot \phi^t_0
\quad\and\quad
\phi^{t+1}_1 = \Sel_{M((\phi^t_0)^{\prime})+K^t}\Mut_{ac,1} \cdot \phi^t_1.
\]
Consider the translations of $\phi_0$ and $\phi_1$ to the $a'a'$-process: i.e., let $\vartheta^t_0=\tau^{-t} C\cdot \phi^t_0$ and $\vartheta^t_1=\tau^{-t} C\cdot \phi^t_1$.
From  Lemmas \ref{lemma: GMs} and \ref{lemma: boundaries} we get that:
\[
1-\epsilon_N < \frac{\GM(\rho(\phi^t_1)(M((\phi^t_1)^{\prime}) + \MF((\phi^t_0)^{\prime})+K^t)}{\tau \GM(N+ \MF((\phi^t_0)^{\prime})+K^t)} < 1+\epsilon_N,
\]
and 
\[
1-\epsilon_N < \frac{\GM(\rho(\phi^t_0)(\MF((\phi^t_0)^{\prime}) + \MF((\phi^t_1)^{\prime})+K^t)}{\tau \GM(N+ \MF((\phi^t_1)^{\prime})+K^t)} < 1+\epsilon_N.
\]
Taking the quotient of these equations we conclude that we can redefine the sequence $\epsilon_N$ so that it still converges to 0, and so that: 
\[
(1-\epsilon_N) < \GM\left(\frac{\rho(\phi^t_1)}{\rho(\phi^t_0)}\right) \GM\left(\frac{N+ \MF((\phi^t_1)^{\prime})+K^t}{N+ \MF((\phi^t_0)^{\prime})+K^t}\right) < (1+\epsilon_N).
\]
From  Lemmas \ref{mutpres} and \ref{lemma: Sel preserves prec}, it follows inductively that $\phi^t_0\preceq \upsilon^t_0\preceq \phi^t_1$ and $\phi^t_0\preceq \upsilon^t_1\preceq \phi^t_1$ for every $t$.
We will now use the fact that $\phi^t_0\preceq \phi_1^t$ to establish that the numerators above are essentially greater than the denominators. 
We know from Lemma \ref{lemma prec E} that $(\phi^t_0)^{\prime}\preceq (\phi_1^t)^{\prime}$ implies $\MF((\phi^t_1)^{\prime})\geq \MF((\phi^t_0)^{\prime})$.
Also,   $\phi^t_0\preceq \phi_1^t$ implies that $\phi^t_1(1)/|\phi^t_1| \leq  \phi^t_0(1)/|\phi^t_0| $. We know that $\phi^t_1(N)/|\phi^t_1|<\epsilon_N$ from Lemma  \ref{lemma: M below tau N} (with $\epsilon_N$ as specified there).
We therefore have that:   
\[
\rho(\phi^t_1)= 1 - a \phi^t_1(1)/|\phi^t_1| - c \phi^t_1(N)/|\phi^t_N| \geq 1 - a \phi^t_0(1)/|\phi^t_0| - c \phi^t_0(N)/|\phi^t_N| -c\epsilon_N = \rho(\phi^t_0)-c\epsilon_N.
\] 
Since $\rho(\phi^t_0)>1-a-c=b>c$, it follows that $\rho(\phi^t_0)-c\epsilon_N\geq \rho(\phi^t_0)(1-\epsilon_N)$.
We can therefore redefine the $\epsilon_N$ so that the sequence still converges to 0 and: 
\[
1\leq \GM\left(\frac{N+ \MF((\phi^t_1)^{\prime})+K^t}{N+ \MF((\phi^t_0)^{\prime})+K^t}\right) < (1+\epsilon_N).
\]
Let $\delta^t$ be the term inside the large parentheses, i.e., $\delta^t=\frac{N+ \MF((\phi^t_1)^{\prime})+K^t}{N+ \MF((\phi^t_0)^{\prime})+K^t}\geq 1$ and $\GM(\delta_t)< 1+\epsilon_N$.
Since $\MF((\phi^t_0)^{\prime})\leq \MF((\upsilon^t_0)^{\prime}) \leq \MF((\phi^t_1)^{\prime})$ and $\MF((\phi^t_0)^{\prime})\leq \MF((\upsilon^t_1)^{\prime}) \leq \MF((\phi^t_1)^{\prime})$, we have: 
\[
\frac{1}{\delta^t} \leq \frac{N+M((\upsilon^t_1)^{\prime})+K^t}{N+M((\upsilon^t_0)^{\prime})+K^t}\leq \delta^t.    
\qedhere\]
\end{proof}

Now let us return to the proof of Lemma \ref{lemma: Whas}.
For each $i<j$, let $\delta_{i,j}^t$ be the $\delta^t$ whose existence is ensured by Lemma \ref{lemma: delta t} for the case  $K^t=\Wha^t_{i,j} = M^t-W_i^t-W_j^t$.
Let $\delta^t=\prod_{i<j\leq\ell}\delta^t_{i,j}$.
We have that $\GM(\delta^t)<(1+\epsilon_N)^{\ell(\ell-1)}$ and that for every $t$ and $i\neq j$: 
\[
\frac{N+\Wha^t_i}{N+\Wha^t_j}<\delta^t.
\] 
Since $\sum_{i=1}^\ell(N+\Wha^t_i)= \ell N + (\ell-1) M^t$, it follows that for some $j$, $N+\Wha^t_j>N + M^t(\ell-1)/\ell$.  Therefore,  for every $i$, $N+\Wha^t_i>(N + M^t(\ell-1)/\ell)/\delta^t$.
A similar argument shows that $N+\Wha^t_i<(N + M^t(\ell-1)/\ell)\delta^t$, which completes the proof of Lemma \ref{lemma: Whas}.



\subsection{Variants of the model} \label{variants}

In this subsection we briefly consider variants of the model for which populations may be finite or infinite and fitnesses may be additive or multiplicative.  For the most part, our analysis here will rely on the results of simulations, although we shall also be able to draw some concrete conclusions concerning fundamental similarities and differences between the models. First, let us describe these variants.

\subsubsection{The finite model}
For the finite model we consider an extra parameter $P\in \NN$, which determines the size of the population. This size is then fixed through the generations, so that a population always consists of $P$ vectors, $\bfx_1,....,\bfx_P$, in $\ZZ^\ell$.  Let us consider the sex process first. 
In order to apply selection one chooses $2P$ individuals, by sampling independently from the population with replacement: if $M$ is the mean fitness of the population and $F(\bfx)$ is the fitness of individual $\bfx$, then the probability that individual $\bfx$ is chosen for the $n$th sample ($1\leq n \leq 2P$) is $F(\bfx)/M$.  One may consider the parent generation as forming a pool of gametes.  The probability that a gamete chosen uniformly at random from this pool comes from a given individual $\bfx$, is proportional to the fitness of $\bfx$. During the selection phase we are choosing $P$ many pairs of individuals from which gametes are taken (recombination later being applied to each of these pairs).  To apply the mutation operation, we take in turn each individual from the $P$-many pairs chosen during selection, and for each locus we change its fitness value by $-1$, $0$ or $1$ with probabilities $\mu(-1)$, $\mu(0)$ and $\mu(1)$ respectively.
To apply recombination, we take each of the $P$ pairs resulting from mutation in turn. Suppose that the $n$th pair is $\bfx_{n}^1=(x_{n,1}^1,...,x_{n,\ell}^1)$ and  $\bfx_{n}^2=(x_{n,1}^2,...,x_{n,\ell}^2)$. Then we form $\bfx_n^*$ which is the $n$th member of the next generation  by taking each locus $i$ in turn and defining either $x_{n,i}^*=x_{n,i}^1$ or $x_{n,i}^*=x_{n,i}^2$, each with equal probability.  The assumption of maximum recombination rates might be justified by considering that one is choosing a representative gene from each chromosome, meaning that the monitored genes lie on distinct chromosomes.   For the asex process, one proceeds similarly, except that $P$ many individuals rather than pairs are chosen during the selection phase, and the recombination phase is omitted. 

The finite model is clearly the most important to understand, and the analysis we have provided for the infinite model provides a good approximation for large populations and over a number of generations which is not too large. As mentioned previously, 
the equations governing the change in mean fitness and variance due to selection and mutation for the infinite population model would now perfectly describe the \emph{expected} effect of mutation and selection for finite populations, and the finite populations model could be seen simply as a stochastic approximation to the infinite case, were it not for the loss in variance and higher cumulants due to sampling. While the effect of sampling may not be too significant for large populations on a stage by stage basis, long term it will have the effect that the mean fitness no longer tends to infinity over stages. Without providing a rigorous proof, one may reason that this is perhaps unsurprising as follows. Mutation will still have a fixed expected effect on the mean and variance at each stage. For a population $\bfphi$ with $M=M(\bfphi)$, $V=V(\bfphi)$ and $\kappa_3=\kappa_3(\bfphi)$, however, while the expected effect of selection on the mean is just as for the infinite population model, the expected effect of selection on variance is now: 
\[ \left( \frac{\kappa_3}{M}- \left( \frac{V}{M} \right)^2\right)\frac{P-1}{P} - \frac{V}{P}. \] 

\noindent Now the ratios between cumulants will not tend to increase without limit (in the infinite populations model simulations show these ratios converging to fixed values over time, and such behaviour is also approximated for the finite model). Thus, if variance was to increase without limit, selection would soon produce decreases in variance outweighing any increases given by mutation. A similar analysis can be made including the effect of recombination, establishing that for sufficiently large variances, the effect of sampling will outweigh any other increases in variance.

\subsubsection{The multiplicative model}
This model is defined exactly like the additive one with the sole difference that the fitness of an individual is calculated multiplicatively, i.e., $F(X)=\prod_{i=1}^\ell X_i$. For the infinite case, the sex and asex processes now behave identically, given populations initially at linkage equilibrium. This was initially observed by Maynard-Smith \cite{MS2}. 

\begin{lemma}
Multiplicative selection preserves linkage equilibrium.  
\end{lemma}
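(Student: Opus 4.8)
The plan is to unwind the definitions and observe that, in the multiplicative case, every ingredient of the selection operator factors across loci, so that independence is automatically carried through. First I would recall that $\bfphi$ being at linkage equilibrium means precisely that its marginals are independent, i.e. $\bfphi(\bfx)=\prod_{i=1}^\ell \phi_i(x_i)$, and that in the multiplicative model the fitness itself factors, $F(\bfx)=\prod_{i=1}^\ell x_i$. Writing out the definition of selection then gives
\[
\Sel(\bfphi)(\bfx)=\frac{F(\bfx)}{M(\bfphi)}\bfphi(\bfx)=\frac{1}{M(\bfphi)}\prod_{i=1}^\ell x_i\,\phi_i(x_i).
\]

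The crucial step, and the only place where the hypothesis of linkage equilibrium is genuinely used, is that the normalising mean fitness also factors: since the $X_i$ are independent, $M(\bfphi)=E\big(\prod_i X_i\big)=\prod_i E(X_i)=\prod_i M_i$, where $M_i=E(\phi_i)=\sum_x x\,\phi_i(x)$. (The factorisation of the expectation of the product is exactly the statement of independence, and fails in general for the additive model.) Substituting this into the display above yields
\[
\Sel(\bfphi)(\bfx)=\prod_{i=1}^\ell \frac{x_i\,\phi_i(x_i)}{M_i}.
\]
Thus, defining $\phi_i^*(x)=x\,\phi_i(x)/M_i$ for each locus, we have $\Sel(\bfphi)(\bfx)=\prod_{i=1}^\ell \phi_i^*(x_i)$, a product of per-locus functions.

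To finish, I would check that each $\phi_i^*$ is genuinely a probability distribution on $\ZZ$: it is non-negative because fitnesses are strictly positive on the domain, and it sums to $1$ since $\sum_x x\,\phi_i(x)=M_i$. Because $\Sel(\bfphi)$ factors as a product of such distributions, its marginal at locus $i$ is exactly $\phi_i^*$ (summing out the other coordinates, each of which contributes a factor $1$), and the factorisation then says precisely that the marginals of $\Sel(\bfphi)$ are independent, i.e. $\Sel(\bfphi)$ is at linkage equilibrium. As a corollary one sees that $\Rec$ acts trivially on $\Sel(\bfphi)$, which is the underlying reason the sex and asex processes coincide in the infinite multiplicative case.

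I do not expect a real obstacle here; the proof is essentially a one-line factorisation. The only points requiring a word of care are ensuring $M_i\neq 0$ so the per-locus normalisation is well defined — which holds in the $\NN$-model, and more generally on any domain of strictly positive gene fitnesses — and confirming that the $\phi_i^*$ are the actual marginals of $\Sel(\bfphi)$ rather than merely an abstract factorisation.
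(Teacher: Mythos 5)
Your proof is correct, but it takes a genuinely different route from the paper's. The paper argues pairwise via conditional probabilities: writing $X_i^{\ast}$ for the post-selection variables, it shows that $\mathbf{P}(X_2^{\ast}=n \mid X_1^{\ast}=m)$ does not depend on $m$, by comparing the ratios $\mathbf{P}(X_1^{\ast}=m_1)/\mathbf{P}(X_1^{\ast}=m_2)$ and $\mathbf{P}(X_1^{\ast}=m_1 \wedge X_2^{\ast}=n)/\mathbf{P}(X_1^{\ast}=m_2 \wedge X_2^{\ast}=n)$; because fitness is multiplicative, selection contributes the same factor $m_1/m_2$ to both ratios (the factor $n$ cancelling in the second), so equality reduces to linkage equilibrium of $\bfphi$, and the case of general $\ell$ is asserted to be ``very similar.'' You instead exhibit the factorisation of the entire selected distribution in one step, $\Sel(\bfphi)(\bfx)=\prod_{i=1}^\ell x_i\,\phi_i(x_i)/M_i$, using independence twice --- once to factor $\bfphi$ and once to factor the normaliser $M(\bfphi)=\prod_i M_i$ --- and then check that the factors $\phi_i^{\ast}$ are probability distributions and are precisely the marginals. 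Your version buys several things: full joint independence of all $\ell$ loci in a single computation rather than a pairwise argument plus a ``similar'' extension; the explicit per-locus selected distribution $\phi_i^{\ast}(x)=x\,\phi_i(x)/M_i$, which is the multiplicative analogue of the per-locus operator $\Sel_W$ the paper uses in the additive case; and a clear identification of where positivity of gene fitnesses matters ($M_i\neq 0$ and non-negativity of $\phi_i^{\ast}$), consistent with the paper restricting the conclusion to the $\NN$- and bounded models. What the paper's ratio argument buys is that the normalising constant never needs to be computed, since it cancels in the ratios --- though linkage equilibrium is still used there implicitly, to make the conditional expectation of $\prod_{j>2}X_j$ given the conditioning values constant. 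One phrasing quibble: you say the factorisation of $M(\bfphi)$ is ``the only place where linkage equilibrium is genuinely used,'' but the factorisation $\bfphi(\bfx)=\prod_i\phi_i(x_i)$ in your first display is itself exactly the hypothesis; both uses are essential, as your own write-up in fact shows.
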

\begin{proof}
Suppose that $\bfphi$ is a population at linkage equilibrium. Let $X_1,...,X_\ell$ be random variables distributed according to $\bfphi$, and let $X_1^{\ast},...,X_{\ell}^{\ast}$ be distributed according to $\bfphi^{\ast}=\Sel(\bfphi)$. We show that selection maintains independence between the first two loci, as the general result is  very similar. We must show that whenever $\mbox{\textbf{P}}(X_1^{\ast}=m_1)\neq 0$ and $\mbox{\textbf{P}}(X_1^{\ast}=m_2)\neq 0$: 

\[ \mbox{\textbf{P}}(X_2^{\ast}=n|\ X_1^{\ast}=m_1)=  \mbox{\textbf{P}}(X_2^{\ast}=n|\ X_1^{\ast}=m_2). \] 

\noindent This is equivalent to: 

\[ \frac{\mbox{\textbf{P}}(X_1^{\ast}=m_1\ \wedge X_2^{\ast}=n)}{\mbox{\textbf{P}}(X_1^{\ast}=m_1)} =  \frac{\mbox{\textbf{P}}(X_1^{\ast}=m_2\ \wedge X_2^{\ast}=n)}{\mbox{\textbf{P}}(X_1^{\ast}=m_2)}. \] 

\noindent Now, since selection acts according to multiplicative fitnesses: 
\[  \frac{\mbox{\textbf{P}}(X_1^{\ast}=m_1)}{\mbox{\textbf{P}}(X_1^{\ast}=m_2)}= \frac{m_1}{m_2}  \frac{\mbox{\textbf{P}}(X_1=m_1)}{\mbox{\textbf{P}}(X_1=m_2)}. \] 
Also: 

\[  \frac{\mbox{\textbf{P}}(X_1^{\ast}=m_1\ \wedge X_2^{\ast}=n)}{\mbox{\textbf{P}}(X_1^{\ast}=m_2\ \wedge X_2^{\ast}=n)}= \frac{nm_1}{nm_2}  \frac{\mbox{\textbf{P}}(X_1=m_1\ \wedge X_2=n)}{\mbox{\textbf{P}}(X_1=m_2\ \wedge X_2=n)}, \] 
\noindent so the result follows from linkage equilibrium for $\bfphi$. 
\end{proof}

Thus, if a population begins at linkage equilibrium, this linkage equilibrium will be preserved throughout all the stages (for the $\NN$- and bounded-models). Each application of recombination now has no effect on the population. 

For the finite multiplicative model, however, sampling will produce linkage disequilibrium and sex now robustly outperforms asex (as seen in the outcomes of simulations presented in \S\ref{EXTDAT}). For an insightful analysis of mechanisms which may allow negative $LD_2$ to build up in this context see \cite{BO}.





\subsection{The simulations} \label{ss: simulations}
 For a small number of loci $\ell$ one can implement the algorithms described directly. If one wishes to deal with a larger number of loci for the infinite population sex process then one can achieve  more efficient simulations (for the $\NN$ and bounded models, and which will give only tiny margins of error due to truncation issues for the $\ZZ$ model), by making use of Lemma \ref{lemma: selection 1 locus}, which allows one to track the entire population by monitoring each locus separately. 
Similarly, one can achieve more efficient simulations for the asex infinite population process (for the $\ZZ$-model, and with only tiny margins of error for the other domains) by monitoring only the distribution on the total fitness of individuals. For finite populations, such mechanisms for improving efficiency are not generally necessary (or indeed possible). 
In considering the unbounded infinite populations models, of course one can only deal with a bounded domain in practice. One is therefore limited in the number of generations which can be simulated.  
To make the computations more precise for the infinite bounded model, we represented real numbers by their logarithms, as the values of the probability distribution at the upper and lower bounds are extremely small.\\

\noindent \textbf{Author contributions}. Both authors contributed equally to the construction of proofs and simulations. \\

\noindent \textbf{Author Information}. Correspondence and request for materials should be addressed either to andy@aemlewis.co.uk or antonio@math.berkeley.edu.


\begin{thebibliography}{1}


\bibitem{Kond} Kondrashov, A.S.\ Classification of hypotheses on the advantage of amphimixis. \emph{J.\ Hered.} 84: 372-387 (1993).  

\bibitem{May1} Maynard-Smith, J. \emph{The evolution of sex}. Cambridge University Press, Cambridge (1978). 

\bibitem{Fel} Felsenstein, J. The evolutionary advantage of recombination. \emph{Genetics} 78: 737-756 (1974). 

\bibitem{Bern} Bernstein, C.\ \& Bernstein, H. \emph{Aging, sex and DNA repair}. San Diego Academic Press (1991). 

\bibitem{Mich} Michod, R.E. Genetic error, sex and diploidy. \emph{J.\ Hered.} 84: 360-371.  

\bibitem{May2} Maynard-Smith, J. The evolution of recombination. In \emph{The evolution of sex: an examination of current ideas}, Michod, R.E.\ \& Levin, B.R.\ eds. Sunderland, Massachusetts, Sinauer: 106-125 (1988). 

\bibitem{BC} Barton, N.H.\ \& Charlesworth, B. Why sex and recombination? \emph{Science} 281: 1986--1990 (1998). 













\bibitem{OB} Otto, S. P., \& Barton, N.H.  The evolution of recombination: removing the limits to natural selection. \emph{Genetics} 147:879-906 (1997).



\bibitem{BO} Barton, N.H., \& Otto, S.P.  Evolution of recombination due to random drift. \emph{Genetics} 169:2353-2370 (2005).

\bibitem{HR} Hill, W.G., \& Robertson, A.\  The effect of linkage on the
limits to artificial selection. \emph{Genetical Research} 8:269-294 (1966).

\bibitem{Morg} Morgan, T. H.  \emph{Heredity and sex}. Columbia University Press, New York (1913).

\bibitem{Fish} Fisher, R. A. \emph{The genetical theory of natural selection}. Oxford
University Press, Oxford (1930).


\bibitem{Mull} Muller, H. J.\  Some genetic aspects of sex. \emph{American Naturalist} 66:118-138 (1932).

\bibitem{Mull2} Muller, H.J.\ The relation of recombination to mutational advance. \emph{Mutation Research} 1:2-9 (1964).

\bibitem{SKB} Salath\'{e}, M., Kouyos, R.D.\ \& Bonhoeffer, S. On the causes of selection for recombination underlying the Red Queen hypothesis. \emph{American Naturalist} 174(suppl.):S31-S42 (2009).

\bibitem{PL} Peters, A. D., \& Lively, C.M. The Red Queen and fluctuating epistasis: a population genetic analysis of antagonistic coevolution. \emph{American Naturalist} 154:393-405 (1999).

\bibitem{GO} Gandon, S., \& Otto, S.P. The evolution of sex and recombination in response to abiotic or coevolutionary fluctuations in epistasis. \emph{Genetics} 175:1835-1853 (2007).

\bibitem{PZF} Pylkov, K. V., Zhivotovsky, L.A.\  \& Feldman, M.W.  Migration
versus mutation in the evolution of recombination under multi-locus selection. \emph{Genetical Research} 71:247-256 (1998).

\bibitem{LO} Lenormand, T., \& Otto, S.P.  The evolution of recombination in a heterogeneous environment. \emph{Genetics} 156:423-438 (2000).

\bibitem{Otten} Otto, S.P.\  The evolutionary enigma of sex. \emph{American Naturalist}, 174: S1-S14 (2009).

\bibitem{OK} Ohta,T., \& Kimura,M. A model of mutation appropriate to estimate the number of electrophoretically detectable alleles in a finite population. \emph{Genet.\ Res.}\ 22, 201-204, (1973). 

\bibitem{PM} Moran, P.A.P. Global stability of genetic systems governed by mutation and selection, \emph{Math. Proc. Camb. Phil. Soc.}, 80, 331-336 (1976). 

\bibitem{PM2} Moran, P.A.P. Global stability of genetic systems governed by mutation and selection II, \emph{Math. Proc. Camb. Phil. Soc.}, 81, 435-441 (1977). 

\bibitem{GY} Geodakyan, V.A., The role of sexes in the transmission and transformation of genetic information, \emph{Probl.\ Peredachi Inf.}, 1, 105-112, (1965) (In Russian). 

\bibitem{DL} Lloyd, D.L., Benefits and handicaps of sexual reproduction, \emph{Evol Biol}, 13, 69-111, (1980). 

\bibitem{LP} Livnat, A., Papadimitriou, C., Dushoff, J., \& Feldman, M.W., A mixability theory for the role of sex in evolution, \emph{PNAS}, vol.\  105, no.\ 50, 19803-19808, (2008). 

\bibitem{LP2} Livnat, A., Papadimitriou, C., Pippenger, N., \& Feldman, M.W., Sex, mixability and modularity, \emph{PNAS}, vol.\ 107, no.\ 4, 1452-1457, (2009). 

\bibitem{PB} Pr\"{u}gel-Bennett, A., When a genetic algorithm outperforms hill-climbing,  \emph{Theoretical Computer Science}, 320, (1), 135-153, (2004).

\bibitem{MS2} Maynard-Smith, Evolution in sexual and asexual populations, \emph{American Naturalist}, 102, 469-678, (1968). 

\bibitem{CK} Crow, J.F., \& Kimura, M. The theory of genetic loads, \emph{Proc XI Int.\  Congr.\ Genetics.}, vol 2, 495-505, (1964).   

\end{thebibliography}
\end{document}